\newcommand{\paperOption}{onecolumn,11pt}
\newcommand{\hyperrefOption}{%
 pagebackref,pdfstartview=FitH,hidelinks%
}
\newcommand{\omitstyle}{\color[rgb]{0.25,0,0.75}}
\newcommand{\omitted}[1]{{\omitstyle#1}}
\newcommand{\newChange}[1]{{\color{blue}#1}}
\newtheorem{theorem}{Theorem}[section]
\newtheorem{lemma}[theorem]{Lemma}
\newtheorem{proposition}[theorem]{Proposition}
\newtheorem{corollary}[theorem]{Corollary}
\newtheorem{definition}[theorem]{Definition}
\newtheorem{example}[theorem]{Example}
\newtheorem{remark}[theorem]{Remark}
\newenvironment{proofof}[2][Proof of]{%
 \begin{IEEEproof}[#1 {#2}]%
}{\end{IEEEproof}}
\newcommand{\relvar}[2]{\buildrel {#2} \over {#1}}
\newcommand{\eqvar}[1]{\relvar{=}{\mathrm{#1}}}
\newcommand{\levar}[1]{\relvar{\le}{\mathrm{#1}}}
\newcommand{\eqdef}{:=}
\newlength{\nullrellength}
\newcommand{\breakop}[1]{\hspace{-0.2222em}{}#1}
\DeclareMathOperator*{\plim}{p-lim}
\newcommand{\integers}{\mathbb{Z}}
\newcommand{\pintegers}{\mathbb{N}}
\newcommand{\nnintegers}{\mathbb{N}_0}
\newcommand{\realnumbers}{\mathbb{R}}
\newcommand{\complexnumbers}{\mathbb{C}}
\newcommand{\unitsof}[1]{{#1^{\times}}}
\newcommand{\field}[1]{\mathbb{F}_{#1}}
\newcommand{\fieldstar}[1]{\unitsof{\field{#1}}}
\newcommand{\symmetricgroup}[1]{\mathrm{S}_{#1}}
\newcommand{\seq}[1]{\mathbf{#1}}
\newcommand{\seqx}[1]{\bm{#1}}
\newcommand{\mat}[1]{\mathbf{#1}}
\newcommand{\matentry}[1]{\mathrm{#1}}
\newcommand{\transpose}[1]{#1^{\mathsf{T}}}
\newcommand{\rank}[1]{\mathrm{rank}{\left({#1}\right)}}
\newcommand{\rdist}{\mathrm{d}_{\mathrm{R}}}
\DeclareMathOperator{\order}{\Theta}
\newcommand{\ceil}[1]{\left\lceil {#1} \right\rceil}
\newcommand{\pprod}{\odot}
\newcommand{\bigpprod}{\bigodot}
\newcommand{\rtilde}[1]{#1^{\sim}}
\newcommand{\pr}{\mathsf{P}}
\newcommand{\av}{\mathsf{E}}
\newcommand{\spec}{\mathsf{S}}
\newcommand{\avS}{\overline{\spec}}
\newcommand{\gf}{\mathcal{G}}
\newcommand{\avG}{\overline{\gf}}
\newcommand{\codevar}[2]{f^{\mathsf{#1}}_{#2}}
\newcommand{\rcodevar}[2]{F^{\mathsf{#1}}_{#2}}
\newcommand{\rlccode}[1]{\rcodevar{RLC}{#1}}
\newcommand{\rmcode}[1]{\rcodevar{RM}{#1}}
\newcommand{\ldcode}[1]{\rcodevar{LD}{#1}}
\newcommand{\repcode}[1]{\codevar{REP}{#1}}
\newcommand{\rrepcode}[1]{\rcodevar{REP}{#1}}
\newcommand{\chkcode}[1]{\codevar{CHK}{#1}}
\newcommand{\rchkcode}[1]{\rcodevar{CHK}{#1}}
\newcommand{\citeSpectrumProperty}{%
 \cite[Proposition~2.4]{JSCC:Yang200904}%
}
\newcommand{\citeGoodLinearCode}{%
 \cite[Proposition~2.5]{JSCC:Yang200904}%
}
\begin{document}

\title{Constructing Linear Encoders with Good Spectra}
\author{%
 Shengtian~Yang,~\IEEEmembership{Member,~IEEE,}
 Thomas~Honold,~\IEEEmembership{Member,~IEEE,}
 Yan~Chen,~\IEEEmembership{Member,~IEEE,}
 Zhaoyang~Zhang,~\IEEEmembership{Member,~IEEE,}
 Peiliang~Qiu,~\IEEEmembership{Member,~IEEE}
 \thanks{
  Version: \exactProjectVersion\ (no.~\releaseNumber).
  This file was generated on \today\ by \texCompilerName{} with format
  \fmtname\ \fmtversion.\par
  \textsf{IEEEtran} paper options: $\mathrm{\paperOption}$.
  Options for \textsf{hyperref}: $\mathrm{\hyperrefOption}$.\par
  \textsf{Color convention:} 1) stuff published;
   2) \omitted{stuff not published};
   3) and \newChange{new changes relative to the published version}.
 }
 \thanks{
  This paper was presented in part at the 2008
  International Conference on Communications and Networking in China
  (CHINACOM'08), Hangzhou, China, August 2008.
 }
 \thanks{
  This work was supported in part by the National Natural Science
  Foundation of China under Grants 60772093, 60802014, and 60872063;
  in part by the Chinese Specialized Research Fund for the Doctoral
  Program of Higher Education under Grants 200803351023 and
  200803351027;
  in part by the National Key Basic Research Program of China under
  Grant 2012CB316104;
  and in part by the Zhejiang Provincial Natural Science Foundation of
  China under Grants Y106068 and LR12F01002.}
 \thanks{
  S.~Yang was with the Department of Information Science and
  Electronic Engineering, Zhejiang University, Hangzhou 310027, China
  and also with Zhejiang Provincial Key Laboratory of Information
  Network Technology, Zhejiang University, Hangzhou 310027, China.
  He now resides at Zhengyuan Xiaoqu 10-2-101, Fengtan Road, Hangzhou
  310011, China (e-mail: \url{yangst@codlab.net}).
 }
 \thanks{
  T.~Honold, Z.~Zhang, and P.~Qiu are with the Department of
  Information Science and Electronic Engineering, Zhejiang University,
  Hangzhou 310027, China (e-mail: \url{honold@zju.edu.cn};
  \url{ning_ming@zju.edu.cn}; \url{qiupl@zju.edu.cn}).}
 \thanks{
  Y.~Chen was with the Department of Information Science and
  Electronic Engineering, Zhejiang University, Hangzhou 310027, China.
  She is now with Huawei Technologies Co., Ltd (Shanghai), Shanghai
  201206, China (e-mail: \url{bigbird.chenyan@huawei.com}).
 }
}

\markboth{%
 Accepted for publication in IEEE Transactions on Information Theory
 \omitted{(extended version)}
}{}


\maketitle

\begin{abstract}
  Linear encoders with good joint spectra are suitable candidates for
  optimal lossless joint source-channel coding (JSCC), where the joint
  spectrum is a variant of the input-output complete weight
  distribution and is considered good if it
  is close to the average joint spectrum of all linear encoders (of
  the same coding rate).
  In spite of their existence, little is known on how to construct
  such encoders in practice.
  This paper is devoted to their construction.
  In particular, two families of linear encoders are presented and
  proved to have good joint spectra.
  The first family is derived from Gabidulin codes, a class
  of maximum-rank-distance codes.
  The second family is constructed using a serial concatenation of an
  encoder of a low-density parity-check code (as outer encoder) with a
  low-density generator matrix encoder (as inner encoder).
  In addition, criteria for good linear encoders are defined for three
  coding applications: lossless source coding, channel coding, and
  lossless JSCC.
  In the framework of the code-spectrum approach, these three
  scenarios correspond to the problems of constructing linear
  encoders with good kernel spectra, good image spectra, and good
  joint spectra, respectively.
  Good joint spectra imply both good kernel spectra and good image
  spectra, and for every linear encoder having a good kernel (resp.,
  image) spectrum, it is proved that there exists a linear encoder not
  only with the same kernel (resp., image) but also with a good joint
  spectrum.
  Thus a good joint spectrum is the most important feature of a linear
  encoder.
\end{abstract}

\begin{keywords}
Code spectrum, Gabidulin codes, linear codes, linear encoders,
low-density generator matrix (LDGM), low-density parity-check (LDPC)
codes, MacWilliams identities, maximum-rank-distance (MRD) codes.
\end{keywords}

\IEEEpeerreviewmaketitle

\section{Introduction}

Linear codes, owing to their good structure, are widely applied in the
areas of channel coding, source coding, and joint source-channel
coding (JSCC).
A variety of good linear codes such as Turbo codes
\cite{JSCC:Berrou199305} and low-density parity-check (LDPC) codes
\cite{JSCC:Gallager196300, JSCC:MacKay199903} have been constructed
for channel coding.
In the past decade, the parity-check matrices of good linear codes for
channel coding have also been employed as encoders for distributed
source coding.
They proved good both in theory
\cite{JSCC:Muramatsu200510,JSCC:Yang200503} and practice
\cite{JSCC:Pradhan200100,JSCC:Stankovic200604,JSCC:Coleman200608}.
However, for the general case of lossless JSCC (based on linear
codes), there is still no mature and complete solution (see
\cite{JSCC:Yang200904,JSCC:Yang200909} and the references therein for
background information).

\omitted{%
We do not even know how to design an implementable
optimal JSCC scheme based on linear encoders for arbitrary sources and
channels.  For instance, much work on practical designs of lossless
JSCC based on linear codes has been done for transmission of specific
correlated sources over specific multiple-access channels
(e.g., \cite{JSCC:Garcia200103, JSCC:Zhong200505, JSCC:Zhao200611a,
  JSCC:Murugan200408, JSCC:Garcia200709, JSCC:Zhao200611b}%
), but it is still not clear how to construct an implementable optimal
lossless JSCC scheme for the general case.  The same problem
occurs in the case of point-to-point transmission, but
since traditional (nonlinear) source coding techniques combined with
joint source-channel decoding work well in this case,
linear-encoder-based schemes are less important
here than in distributed JSCC.
One exception is the demand for a simple universal encoder, that is, the
encoder does not require any knowledge of the source statistics.  For
background information on lossless JSCC in the point-to-point case, we
refer to \cite{JSCC:Zhu200408,JSCC:Jaspar200706} and the references
therein.
}

Recently, for lossless transmission of correlated sources over
multiple-access channels (MACs), we proposed a general scheme based on
linear encoders \cite{JSCC:Yang200904}.
It was proved optimal if good linear encoders and good conditional
probability distributions are chosen.%
\footnote{In \cite{JSCC:Yang200904} a linear encoder is called a
``linear code'', which in fact conflicts with the traditional meaning
of the term ``linear code''.
To avoid such conflicts as well as possible misunderstanding, we use
the term ``linear encoder'' in this paper.}
Fig.~\ref{fig:Scheme1} illustrates the mechanism of this encoding
scheme (see \cite[Sec.~III-C]{JSCC:Yang200904}), which can also be
formulated as
\[
\Phi(\seq{x}) \eqdef \varphi(\seq{x}, \Sigma_m(F(\Sigma_n(\seq{x})))
 + \bar{Y}^m) \qquad \forall \seq{x} \in \mathcal{X}^n.
\]
Roughly speaking, the scheme consists of two steps.  First, the source
sequence $\seq{x}$ is processed by a special kind of random affine
mapping, $(\Sigma_m \circ F \circ \Sigma_n)(\seq{x}) + \bar{Y}^m$,
where $F$ is a random linear encoder from $\mathcal{X}^n$ to
$\mathcal{Y}^m$, $\Sigma_m$ and $\Sigma_n$ are uniform random
interleavers, $\bar{Y}^m$ is a uniform random vector, and all
of them are independent.
Second, the output of the
first step, together with the source sequence, is fed into the
quantization map $\varphi$ to yield the final output.
The first step is to
generate uniformly distributed output with the so-called
pairwise-independence property, while the second step is to shape the
output so that it is suitable for a given channel.
From this, two main issues arise:
how can we design a good linear encoder and a good quantization map to
fulfill the above two goals, respectively?
About the former, we proved in \cite{JSCC:Yang200904} that linear
encoders with joint spectra (a variant of input-output complete weight
distribution) close to the average joint spectrum over all linear
encoders (from $\mathcal{X}^n$ to $\mathcal{Y}^m$) are good
candidates.
We say that such linear encoders have good joint spectra.
Hence, for designing a lossless JSCC scheme in practice, the crucial
problem is how to construct linear encoders with good joint spectra.
To our knowledge, however, this problem has never been studied before.
\begin{figure*}[tbp]
\centering
\includegraphics{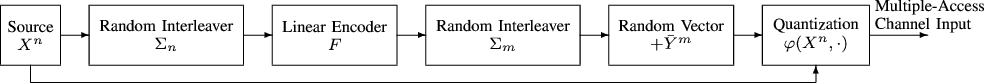}
\caption{The proposed lossless joint source-channel encoding scheme
based on linear encoders for multiple-access channels in
\cite{JSCC:Yang200904}.}
\label{fig:Scheme1}
\end{figure*}

In this paper, we shall give a thorough investigation of this problem.
Our main tool is the code-spectrum approach \cite{JSCC:Yang200904}.
As we shall see, the spectra of a linear encoder, including kernel
spectrum, image spectrum, and especially joint spectrum, provide an
important characterization of its performance for most applications
(see Definitions~\ref{df:KernelSpectrumCondition}--%
\ref{df:JointSpectrumCondition}, which form the
base of this paper).
The rest of the paper is organized as follows.

In Section~\ref{sec:BasicsOfCodeSpectrumApproach}, the code-spectrum
approach is briefly reviewed.
In Section~\ref{sec:ConceptsOfGoodLinearCodes}, three kinds of good
linear encoders are defined for lossless source coding, channel
coding, and lossless JSCC, respectively.
They are called
\emph{$\delta$-asymptotically good linear source encoders} (LSEs),
\emph{$\delta$-asymptotically good linear channel encoders} (LCEs),
and \emph{$\delta$-asymptotically good linear source-channel encoders}
(LSCEs), respectively.
We show that, under some conditions, good LSCEs are also good as LSEs
and LCEs.
Thus the problem of constructing good LSCEs (i.e., linear encoders
with good joint spectra) is of particular interest and importance.

Based on this observation, in
Section~\ref{sec:GeneralPrinciplesForConstructingGoodLSCC}, we proceed
to study the general principles for constructing good LSCEs.
In Section~\ref{subsec:MRDLSCC}, we provide a family of good LSCEs
derived from so-called maximum-rank-distance (MRD) codes.
In Section~\ref{subsec:ConstructingGoodLSCCGPA}, we investigate the
problem of how to construct a good LSCE with the same kernel (resp.,
image) as a given good LSE (resp., LCE).
In Section~\ref{subsec:ConstructingGoodLSCCGPB}, we propose a general
serial concatenation scheme for constructing good LSCEs.
In light of this general scheme, in
Section~\ref{sec:ExplicitConstruction}, we turn to the analysis of
joint spectra of regular low-density generator matrix (LDGM) encoders.
We show that the joint spectra of regular LDGM encoders with
appropriate parameters are approximately $\delta$-asymptotically good.
Based on this fact, we finally construct a family of good LSCEs by
means of a serial concatenation of an inner LDGM encoder and an outer
encoder of an LDPC code.

Some advanced tools of the code-spectrum approach are developed in
Section~\ref{sec:NewMethodsAndResultsOfCodeSpectrum}
in order to prove the results in the preceding sections.
Section~\ref{sec:Conclusion} concludes the paper.

Since this paper is highly condensed, we refer the reader to
\cite{JSCC:Yang200909} for concrete examples and important remarks.
In particular, an example is provided there to show how the binary
$[7,4,3]$ Hamming code is beaten by a $[7,4,1]$ linear code in the
case of lossless JSCC if the encoder (or generator matrix) is not
carefully chosen.
Recall that an $[n,k,d]$ linear code over a finite field $F$ is a
$k$-dimensional subspace of $F^n$ with minimum (Hamming) distance $d$.

We close this section with some basic notations and conventions used
throughout the paper.
In general, mathematical objects such as real variables and
deterministic mappings are denoted by lowercase letters.
Conventionally, sets, matrices, and random elements are denoted by
capital letters, and alphabets are denoted by script capital letters.

The symbols $\integers$, $\pintegers$, $\nnintegers$, $\realnumbers$,
$\complexnumbers$ denote the ring of integers, the set of positive
integers, the set of nonnegative integers, the field of real numbers,
and the field of complex numbers, respectively.
For a prime power $q>1$ the finite field of order $q$ is denoted by
$\field{q}$.
The multiplicative subgroup of nonzero elements of $\field{q}$ is
denoted by $\fieldstar{q}$ (and similarly for other fields).
The index set $\{1,2,\ldots,n\}$ for $n\in\pintegers$ is denoted by
$\mathcal{I}_n$.

A sequence (or vector) in $\mathcal{X}^n$ is denoted by
$\seq{x} = x_1x_2\cdots x_n$, with $x_i$ denoting the $i$th component.
The length of $\seq{x}$ is denoted by $|\seq{x}|$.
For the $l$-fold repetition of a single symbol $a\in\mathcal{X}$, we
write $a^l$ for brevity.
For any set $A = \{a_1, a_2, \dots, a_r\} \subseteq \mathcal{I}_n$
with $a_1 < a_2 < \dots < a_r$, we define the sequence
$(x_i)_{i \in A}$ or $x_A$ as $x_{a_1}x_{a_2}\cdots x_{a_r}$.

For any maps $f: \mathcal{X}_1\to\mathcal{Y}_1$ and
$g: \mathcal{X}_2\to\mathcal{Y}_2$, the cartesian product
$f\pprod g: \mathcal{X}_1\times\mathcal{X}_2 \to
 \mathcal{Y}_1\times\mathcal{Y}_2$
is given by $(x_1, x_2)\mapsto (f(x_1), g(x_2))$.
Given a map $f$ from a finite set $X$ to $\nnintegers$ with
$\sum_{x\in X} f(x)=n$, the multinomial coefficient
$n!/\prod_{x\in X}(f(x)!)$ is denoted by ${n\choose f}$.

\omitted{%
The function $1\{\cdot\}$ is a mapping defined by
$1\{\mathrm{true}\} = 1$ and $1\{\mathrm{false}\} = 0$.
Then the indicator function of a subset $A$ of a set $X$ can be
written as $1\{x\in A\}$.
For $x\in\realnumbers$, $\ceil{x}$ denotes the smallest integer
$\ge x$.
For any real-valued functions $f(n)$ and $g(n)$ with domain
$\pintegers$, the statement $f(n)=\order(g(n))$ refers to the
existence of positive constants $c_1$ and $c_2$ such that
$c_1g(n)\le f(n)\le c_2g(n)$ for sufficiently large $n$.
}

By default, all vectors are regarded as row vectors.
An $m\times n$ matrix is denoted by
$\mat{M}=(\matentry{M}_{i,j})_{i\in\mathcal{I}_m, j\in\mathcal{I}_n}$,
with $\matentry{M}_{i,j}$ denoting the $(i,j)$-th entry.
The transpose of a matrix $\mat{M}$ is denoted by
$\transpose{\mat{M}}$.
The set of all $m \times n$-matrices over a field $F$ is denoted by
$F^{m \times n}$.

When performing probabilistic analysis, all objects of study are
related to a basic probability space $(\Omega, \mathcal{A}, \pr)$
where $\mathcal{A}$ is a $\sigma$-algebra in $\Omega$ and $\pr$ is a
probability measure on $(\Omega, \mathcal{A})$.
For any event $A \in \mathcal{A}$, $\pr A = \pr(A)$ is called the
probability of $A$.
A random element is a measurable mapping of $\Omega$ into some
measurable space $(\Omega', \mathcal{B})$.
\emph{In this paper, distinct random elements are assumed to be
independent, and cartesian products of the same random sets (or maps)
are also regarded as cartesian products of independent copies.}

For a (discrete) probability distribution $P$ on $\mathcal{X}$, the entropy $H(P)$
is given by $-\sum_{a\in\mathcal{X}} P(a)\ln P(a)$, with
$0\ln 0\eqdef 0$.
For probability distributions $P$ and $Q$ on $\mathcal{X}$ with
$P\ll Q$ (i.e., $P$ absolutely continuous with respect to $Q$), the
information divergence $D(P\|Q)$ is given by
$\sum_{a\in\mathcal{X}} P(a) \ln(P(a)/Q(a))$.

\section{Basics of the Code-Spectrum Approach}
\label{sec:BasicsOfCodeSpectrumApproach}

In this section, we briefly introduce the basics of the code-spectrum
approach \cite{JSCC:Yang200904}, a variant of the weight-distribution
approach, e.g., \cite{JSCC:Divsalar199809, JSCC:MacWilliams197211}.

Let $\mathcal{X}$ and $\mathcal{Y}$ be two finite (additive) abelian
groups.  A \emph{linear encoder} is a homomorphism $f: \mathcal{X}^n
\to \mathcal{Y}^m$.
The image $f(\mathcal{X}^n)\subseteq\mathcal{Y}^m$ of such an $f$ is a
\emph{linear code} over $\mathcal{Y}$ in the usual sense, i.e., a
block code of length $m$ over $\mathcal{Y}$ that forms a subgroup of
$\mathcal{Y}^m$.%
\footnote{In the special case $\mathcal{X}=\mathcal{Y}=(\field{q},+)$,
where this concept refers to linearity over the prime field
$\field{p}$ of $\field{q}$, we rather speak of \emph{additive
encoders/codes}, so that there is no conflict with the stronger
concept of $\field{q}$-linearity.}
With each $f$ there are associated three kinds of rates:
First, the \emph{source transmission rate}
$R_s(f) \eqdef n^{-1}\ln|f(\mathcal{X}^n)|$.
Second, the \emph{channel transmission rate}
$R_c(f) \eqdef m^{-1}\ln|f(\mathcal{X}^n)|$.
Third, the coding rate $R(f)\eqdef n / m$.
There is a simple relation among these quantities, viz.,
$R(f)R_s(f) = R_c(f)$.
\omitted{%
If $f$ is injective, then $R_s(f) = \ln |\mathcal{X}|$ and
$R_c(f) = R(f) \ln |\mathcal{X}|$; if $f$ is surjective, then
$R_s(f) = \ln |\mathcal{Y}|/R(f)$ and $R_c(f) = \ln |\mathcal{Y}|$.}

For any linear encoder $f:\mathcal{X}^n\to\mathcal{Y}^m$, there exist
uniquely determined homomorphisms $f_{ij}:\mathcal{X}\to \mathcal{Y}$
($1\leq i\leq n$, $1\leq j\leq m$) such that
$f(\seq{x})
 =\left(\sum_{i=1}^nx_if_{i1},\dots,\sum_{i=1}^nx_if_{im}\right)
 =\seq{x}\mat{M}$,
where $x_if_{ij}\eqdef f_{ij}(x_i)$ and
$\mat{M}
 \eqdef (f_{ij})\in\mathrm{Hom}(\mathcal{X},\mathcal{Y})^{n\times m}$,
with $\mathrm{Hom}(\mathcal{X},\mathcal{Y})$ denoting the abelian
group of all homomorphisms from $\mathcal{X}$ to $\mathcal{Y}$ under
map addition.
In the special case of a prime field
$\mathcal{X}=\mathcal{Y}=\field{p}$, the usual representation of an
$\field{p}$-linear encoder $f:\field{p}^n\to\field{p}^m$ by its
generator matrix $\mat{M}\in\field{p}^{n\times m}$ is recovered, since
$\field{p}\cong\mathrm{End}(\field{p},+)
 \eqdef\mathrm{Hom}(\field{p},\field{p})$ via
$a\mapsto(\mu_a:\field{p}\to\field{p}$, $x\mapsto ax$).
For a general finite field $\field{q}$, the generator matrix
representation requires an $\field{q}$-linear encoder and is stronger.
Nevertheless, identifying a linear encoder with a generator matrix
over an appropriate finite field is still useful in general, since
every linear encoder with $\mathcal{X}=\mathcal{Y}=\field{q}$ has a
generator matrix representation over a certain subfield of
$\field{q}$.

A particularly simple class of linear encoders from $\mathcal{X}^n$ to
$\mathcal{X}^n$, including the identity map, is formed by the special
group automorphisms defined by
$\sigma(\seq{x}) \eqdef x_{\sigma^{-1}(1)} x_{\sigma^{-1}(2)} \cdots
 x_{\sigma^{-1}(n)}$
for each permutation $\sigma$ in $\symmetricgroup{n}$, the group of
all permutations of $\mathcal{I}_n$.
These linear encoders are called coordinate permutations or
interleavers.%
\footnote{For convenience, we shall slightly abuse the term
``permutation'' to refer to an induced coordinate permutation as long
as the exact meaning is clear from the context.}
Considering a random encoder uniformly distributed over all
permutations in $\symmetricgroup{n}$, we obtain a uniform random
permutation, denoted $\Sigma_n$.
We tacitly assume that different random permutations occurring in the
same expression are independent, and notation
such as $\Sigma_m$ and $\Sigma_n$ refers to
different random permutations even in the case $m = n$.

The \emph{type} of a sequence $\seq{x}$ in $\mathcal{X}^n$ is the
empirical distribution $P_{\seq{x}}$ on $\mathcal{X}$ defined by
\[
P_{\seq{x}}(a) \eqdef \frac{1}{|\seq{x}|}
\sum_{i=1}^{|\seq{x}|} 1\{x_i = a\}.
\]
For a (probability) distribution $P$ on $\mathcal{X}$, the set of
sequences of type $P$ in $\mathcal{X}^n$ is denoted by
$\mathcal{T}_P^n(\mathcal{X})$ or simply $\mathcal{T}_P^n$.
A distribution $P$ on $\mathcal{X}$ is called a type of sequences in
$\mathcal{X}^n$ if $\mathcal{T}_P^n \ne \varnothing$.
We denote by $\mathcal{P}_n(\mathcal{X})$ (or $\mathcal{P}_n$ if the
alphabet is clear from the context) the set of all types of sequences
in $\mathcal{X}^n$.
Since the set $\mathcal{P}_n\setminus\{P_{0^n}\}$ will be frequently
used, we denote it by $\mathcal{P}_n^*$ in the sequel.

The \emph{spectrum} of a nonempty set $A \subseteq \mathcal{X}^n$ is
the empirical distribution $\spec_{\mathcal{X}}(A)$ on
$\mathcal{P}_n(\mathcal{X})$ defined by
\[
\spec_{\mathcal{X}}(A)(P) \eqdef
\frac{|\{\seq{x} \in A : P_{\seq{x}} = P\}|}{|A|}
\qquad \forall P \in \mathcal{P}_n(\mathcal{X})
\]
and for convenience, we write $\spec(A)$, $\spec(A)(P)$, or further
$\spec_A(P)$ provided $P$ refers to an element of
$\mathcal{P}_n(\mathcal{X})$.
In other words, $\spec(A)$ is the empirical distribution of types of
sequences in $A$.
The spectrum of $A$ is closely related to the well-established
\emph{complete weight distribution} of $A$ (see e.g.,
\cite[Ch.~7.7]{JSCC:Huffman200300} or
\cite[Sec.~10]{JSCC:Pless199800}).
In fact, both distributions differ only by a scaling factor.
Nevertheless, introducing the spectrum as a new, independent concept
has its merits, see \cite{JSCC:Yang200909} for a detailed explanation.

Analogously, the \emph{joint spectrum}
$\spec_{\mathcal{X}\mathcal{Y}}(B)(P,Q)$ of a nonempty set
$B \subseteq \mathcal{X}^n \times \mathcal{Y}^m$ is the empirical
distribution of type pairs $(P_{\seq{x}},P_{\seq{y}})$ of sequence
pairs $(\seq{x},\seq{y})\in B$.
By considering the marginal and conditional distributions of
$\spec_{\mathcal{X}\mathcal{Y}}(B)$, we obtain the
\emph{marginal spectra} $\spec_{\mathcal{X}}(B)(P)$,
$\spec_{\mathcal{Y}}(B)(Q)$ and the \emph{conditional spectra}
$\spec_{\mathcal{Y}|\mathcal{X}}(B)(Q|P)$,
$\spec_{\mathcal{X}|\mathcal{Y}}(B)(P|Q)$.
These definitions can also be extended to the case of more than two
alphabets in the obvious way.
For convenience of notation, we sometimes write, e.g., $\spec(B)(P,Q)$,
$\spec(B)(P)$, $\spec(B)(Q|P)$, or further, $\spec_B(P,Q)$,
$\spec_B(P)$, $\spec_B(Q|P)$, provided $(P,Q)$ refers to an element of
$\mathcal{P}_n(\mathcal{X})\times\mathcal{P}_m(\mathcal{Y})$.

Furthermore, for any given function $f: \mathcal{X}^n \to
\mathcal{Y}^m$, we can define its \emph{joint spectrum}
$\spec_{\mathcal{X}\mathcal{Y}}(f)$,
\emph{forward conditional spectrum}
$\spec_{\mathcal{Y}|\mathcal{X}}(f)$,
and \emph{image spectrum} $\spec_{\mathcal{Y}}(f)$ as
$\spec_{\mathcal{X}\mathcal{Y}}(\mathrm{rl}(f))$,
$\spec_{\mathcal{Y}|\mathcal{X}}(\mathrm{rl}(f))$,
and $\spec_{\mathcal{Y}}(\mathrm{rl}(f))$, respectively, where
$\mathrm{rl}(f) \eqdef \{(\seq{x}, f(\seq{x})) : \seq{x}
\in \mathcal{X}^n\}$ is the graph of $f$.
In this case, the forward conditional spectrum $\spec(f)(Q|P)$ or
$\spec_f(Q|P)$ is given by $\spec_f(P, Q)/\spec_{\mathcal{X}^n}(P)$.
If $f$ is a linear encoder, we further define its \emph{kernel
spectrum} as $\spec(\ker f)$, where
$\ker f \eqdef \{\seq{x} \in \mathcal{X}^n : f(\seq{x}) = 0^m\}$, and
in this case, $\spec_\mathcal{Y}(f) = \spec(f(\mathcal{X}^n))$.

It is easy to see that coordinate permutations preserve the type and
hence the spectrum.
Two sets $A, B \subseteq \mathcal{X}^n$ are said to be
\emph{equivalent} (under permutation) if $\sigma(A) = B$ for some
$\sigma \in \symmetricgroup{n}$.
Two maps $f, g$ from $\mathcal{X}^n$ to $\mathcal{Y}^m$ are said to be
\emph{equivalent} (under permutation) if
$\sigma' \circ f \circ \sigma = g$ for some
$\sigma \in \symmetricgroup{n}$ and $\sigma' \in \symmetricgroup{m}$.
The notion of equivalence is extended to random sets and maps in the
obvious way.

For a random nonempty set $A\subseteq \mathcal{X}^n$, we define
$\alpha(A)(P)$ or
$\alpha_A(P) \eqdef \av[\spec_A(P)]/\spec_{\mathcal{X}^n}(P)$.
To simplify notation, we shall write, e.g., $\avS_A(P)$ in place of
$\av[\spec_A(P)]$.
Similarly, for a random map $F: \mathcal{X}^n \to \mathcal{Y}^m$, we
define
\begin{equation}\label{eq:DefinitionOfAlpha}
\alpha_F(P,Q)
\eqdef
 \frac{\avS_F(P,Q)}{\spec_{\mathcal{X}^n\times\mathcal{Y}^m}(P,Q)}.
\end{equation}
The definition of $\alpha$ is essentially a ratio of two spectra.
Its purpose is to measure the distance from the spectrum of a set to a
``random-like'' spectrum (e.g., $\spec_{\mathcal{X}^n}(P)$).
The reader may also compare it or its logarithm to the notion of
information divergence.
It is clear that equivalent sets or maps have the same $\alpha$.
As we shall see, $\alpha$ plays an important role in characterizing
the average behavior of an equivalence class of sets or maps
(Proposition~\ref{pr:GeneralSpectrumPropertyOfSets} as well as
\citeSpectrumProperty), and hence it provides a good criterion of code
performance for various applications
(Section~\ref{sec:ConceptsOfGoodLinearCodes}).

\section{Good Linear Encoders for Source Coding, Channel Coding, and
Joint Source-Channel Coding}\label{sec:ConceptsOfGoodLinearCodes}

In classical coding theory, a good linear code typically refers to a
set of codewords that has good performance for some family of channels
or has a large minimum Hamming distance close to one of the well-known
upper bounds such as the sphere packing, Plotkin, or Singleton bound.
Such a viewpoint may be sufficient for channel coding, but has its
limitations in the context of source coding and JSCC.
The main reason is that an approach focusing on linear codes (i.e., the
image of encoder map) cannot cover all coding-related properties of
linear encoders, and relevant criteria for good linear encoders are
generally different for different applications.

In \cite[Table~I]{JSCC:Yang200904}, we have only briefly reviewed the
criteria of good linear encoders in terms of spectrum requirements for
lossless source coding, channel coding, and lossless JSCC.
So in this section, we shall resume this discussion, including the
concepts of good linear encoders and the relations among different
kinds of good linear encoders.
Since our main constructions require the underlying alphabet to be a
finite field, we assume from now on (unless stated otherwise) that the
alphabet of any linear encoder is the finite field $\field{q}$, where
$q = p^r$ and $p$ is prime.

We begin with some concepts related to the asymptotic rate of an
encoder sequence.
Let
$\seqx{F} = \{F_k: \field{q}^{n_k} \to \field{q}^{m_k}\}_{k=1}^\infty$
be a sequence of random linear encoders.
If $R_s(F_k)$ converges in probability to a constant, we say that the
\emph{asymptotic source transmission rate} $R_s(\seqx{F})$ of
$\seqx{F}$ is $\plim_{k \to \infty} R_s(F_k)$.
Analogously, we define the \emph{asymptotic channel transmission rate}
and the \emph{asymptotic coding rate} of $\seqx{F}$ by
$R_c(\seqx{F}) \eqdef \plim_{k \to \infty} R_c(F_k)$ and
$R(\seqx{F}) \eqdef \lim_{k \to \infty} R(F_k)$, respectively.
When $R(F_k)$ does not necessarily converge, we define the
\emph{superior asymptotic coding rate} $\overline{R}(\seqx{F})$ and
\emph{inferior asymptotic coding rate} $\underline{R}(\seqx{F})$ by
taking the limit superior and limit inferior, respectively.
To simplify notation, in the rest of this section, when writing
$\seqx{F}$, we always mean a sequence $\{F_k\}_{k=1}^\infty$ of random
linear encoders $F_k: \field{q}^{n_k} \to \field{q}^{m_k}$.
To avoid some degenerate cases, we assume that
$\lim_{k \to \infty} |F_k(\field{q}^{n_k})| = \infty$.

Next, we introduce the definitions of good linear encoders for
lossless source coding, channel coding, and lossless JSCC,
respectively.
The rationale behind them is explained in \cite{JSCC:Yang200909} using
the ideas of
\cite{JSCC:Yang200503,JSCC:Gallager196800,JSCC:Shulman199909,
 JSCC:Bennatan200403,JSCC:Yang200904}.

\begin{definition}\label{df:KernelSpectrumCondition}
Let $\seqx{F}$ be a sequence of random linear encoders with the
asymptotic source transmission rate $R_s(\seqx{F})$.
If $\seqx{F}$ satisfies the \emph{kernel-spectrum condition}:
\begin{equation}\label{eq:DefinitionOfAsympGoodLSC}
\limsup_{k \to \infty} \max_{P \in \mathcal{P}_{n_k}^*}
 \frac{1}{n_k} \ln \alpha_{\ker F_k}(P)
\le \delta,
\end{equation}
then it is called a sequence of $\delta$-asymptotically good
\emph{linear source encoders} (LSEs) or is said to be
$\delta$-asymptotically SC-good (where the last ``C'' stands for
``coding'').
\end{definition}

\begin{definition}\label{df:ImageSpectrumCondition}
Let $\seqx{F}$ be a sequence of random linear encoders with the
asymptotic channel transmission rate $R_c(\seqx{F})$.
If $\seqx{F}$ satisfies the \emph{image-spectrum condition}:
\begin{equation}\label{eq:DefinitionOfAsympGoodLCC}
\limsup_{k \to \infty} \max_{Q \in \mathcal{P}_{m_k}^*}
 \frac{1}{m_k} \ln \alpha_{F_k(\field{q}^{n_k})}(Q)
\le \delta,
\end{equation}
then it is called a sequence of $\delta$-asymptotically good
\emph{linear channel encoders} (LCEs) or is said to be
$\delta$-asymptotically CC-good.
\end{definition}

\begin{definition}\label{df:JointSpectrumCondition}
Let $\seqx{F}$ be a sequence of random linear encoders.
If $\seqx{F}$ satisfies the \emph{joint-spectrum condition}:
\begin{equation}\label{eq:DefinitionOfAsympGoodLSCC}
\limsup_{k \to \infty}
 \max_{P \in \mathcal{P}_{n_k}^*, Q \in \mathcal{P}_{m_k}}
 \frac{1}{n_k} \ln \alpha_{F_k}(P, Q)
\le \delta,
\end{equation}
then it is called a sequence of $\delta$-asymptotically good
\emph{linear source-channel encoders} (LSCEs) or is said to be
$\delta$-asymptotically SCC-good.
\end{definition}

When $\delta = 0$, we use the simplified term ``asymptotically good'';
when talking about $\delta$-asymptotically good LSEs (resp., LCEs), we
tacitly assume that their asymptotic source (resp., channel)
transmission rates exist.
Since the conditions for $\delta$-asymptotically good LSEs or LCEs
only depend on the kernel or image of the linear encoders involved, we
also introduce the concepts of equivalence in the sense of LSE or
LCE.
Linear encoders $f: \field{q}^n \to \field{q}^m$ and $g:
\field{q}^n \to \field{q}^m$ are said to be \emph{SC-equivalent}
if their kernels are equivalent, and
\emph{CC-equivalent} if their images are equivalent.
For convenience, we define the function
\[
\rho(F)
\eqdef \max_{P \in \mathcal{P}_n^*, Q \in \mathcal{P}_{m}}
 \frac{1}{n} \ln \alpha_F(P, Q)
\]
of a random linear encoder $F: \field{q}^n \to \field{q}^{m}$.
Then condition~\eqref{eq:DefinitionOfAsympGoodLSCC} can be rewritten
as $\limsup_{k \to \infty} \rho(F_k) \le \delta$.

The following propositions relate these three kinds of good linear
encoders to each other.

\begin{proposition}\label{pr:RelationBetweenLSCandLCC}
For a sequence $\seqx{F}$ of $\delta$-asymptotically good LSEs, there
exists a sequence $\seqx{G} = \{G_k\}_{k=1}^\infty$ of
$\delta$-asymptotically good LCEs
$G_k: \field{q}^{l_k} \to \field{q}^{n_k}$ such that
$G_k(\field{q}^{l_k}) = \ker F_k$ for all $k \in \pintegers$.
\end{proposition}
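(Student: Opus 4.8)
The plan is to realize $\ker F_k$ as the image of a suitable linear code $G_k$ and then transfer the kernel‑spectrum condition for $\seqx{F}$ verbatim into the image‑spectrum condition for $\seqx{G}$. Recall that, by our convention, $\seqx{F}$ being a $\delta$‑asymptotically good LSC means exactly that the kernel‑spectrum condition \eqref{eq:DefinitionOfAsympGoodLSC} holds and that $R_s(F_k) \to R_s(\seqx{F})$ in probability.

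First I would construct $G_k$. For each $k$ I want a block length $l_k \in \pintegers$ and, for every realization $f_k$ of $F_k$, a surjective homomorphism $\mathcal{X}^{l_k} \to \ker f_k$; letting $G_k$ be such a surjection selected deterministically according to the value of $F_k$ then yields a random linear code $G_k: \mathcal{X}^{l_k} \to \mathcal{X}^{n_k}$ with $G_k(\mathcal{X}^{l_k}) = \ker F_k$, and $G_k$ is measurable since it is a function of $F_k$, which takes only finitely many values. The existence of such surjections is the one genuinely group‑theoretic point: $\ker f_k$ is a subgroup of $\mathcal{X}^{n_k}$, hence a finite abelian group whose exponent divides $e \eqdef \exp(\mathcal{X})$; writing $\ker f_k \cong \bigoplus_{j=1}^{r} \integers/d_j$ with each $d_j \mid e$, it suffices to exhibit, for every $d \mid e$, a surjection $\mathcal{X}^{s} \twoheadrightarrow \integers/d$ with $s$ independent of $d$, because then $\mathcal{X}^{sr} \twoheadrightarrow \ker f_k$. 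This follows from the structure theorem: for each prime $p$ with $p^{b}\,\|\,e$ the $p$‑primary component of $\mathcal{X}$ has $\integers/p^{b}$ as a direct summand, hence as a quotient, so taking $s$ to be the number of distinct prime divisors of $e$ we get $\mathcal{X}^{s} \twoheadrightarrow \bigoplus_{p\mid e}\integers/p^{b} \cong \integers/e \twoheadrightarrow \integers/d$. Taking $l_k$ to be the maximum of the resulting block lengths over the finitely many subgroups of $\mathcal{X}^{n_k}$ makes $l_k$ a legitimate deterministic quantity; note that its precise value is irrelevant below, since $G_k(\mathcal{X}^{l_k}) = \ker F_k$ no matter how $l_k$ is chosen.

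Next I would match up the spectra. Since $G_k(\mathcal{X}^{l_k}) = \ker F_k$ as random subsets of the common space $\mathcal{X}^{n_k}$, we have $S_{\mathcal{X}}(G_k(\mathcal{X}^{l_k})) = S_{\mathcal{X}}(\ker F_k)$ pointwise in $\omega$, so dividing expectations by $S_{\mathcal{X}}(\mathcal{X}^{n_k})$ as in \eqref{eq:Definition2OfAlpha} gives
$$
\alpha\bigl(G_k(\mathcal{X}^{l_k})\bigr)(Q) = \alpha(\ker F_k)(Q) \qquad \forall Q \in \mathcal{P}_{n_k}(\mathcal{X}).
$$
Hence the image‑spectrum condition \eqref{eq:DefinitionOfAsympGoodLCC} for $\seqx{G}$ is literally the kernel‑spectrum condition \eqref{eq:DefinitionOfAsympGoodLSC} for $\seqx{F}$, which holds by hypothesis; thus $\seqx{G}$ is a $\delta$‑asymptotically good LCC by our formal definition (equivalently, by Proposition~\ref{pr:ImageSpectrumCondition}, $\Sigma_{n_k}\circ G_k$ is $\delta$‑asymptotically good for channel coding). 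Finally, for the rate bookkeeping: by the first isomorphism theorem $|\ker F_k|\cdot|F_k(\mathcal{X}^{n_k})| = |\mathcal{X}|^{n_k}$, so $R_c(G_k) = \tfrac{1}{n_k}\ln|\ker F_k| = \ln|\mathcal{X}| - R_s(F_k) \to \ln|\mathcal{X}| - R_s(\seqx{F})$ in probability, giving $R_c(\seqx{G}) = \ln|\mathcal{X}| - R_s(\seqx{F})$.

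The main obstacle, as flagged, is precisely the group‑theoretic step of writing an arbitrary subgroup of $\mathcal{X}^{n_k}$ as a homomorphic image of a power of $\mathcal{X}$: over a field this is the triviality that $\ker F_k$ is a subspace and $G_k$ an isomorphism onto it, but for a general finite abelian group $\mathcal{X}$ it requires the structure‑theorem argument above. The only remaining delicacy is the degenerate case in which $\ker F_k$ fails to grow (e.g.\ $F_k$ eventually injective), so that $|G_k(\mathcal{X}^{l_k})| \not\to \infty$; this is ruled out by the standing convention on singular cases (equivalently, one restricts to $R_s(\seqx{F}) < \ln|\mathcal{X}|$), under which the construction is complete.
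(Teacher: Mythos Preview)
Your proof is correct and follows essentially the same route as the paper's own (very terse) sketch: invoke the structure theorem for finite abelian groups (what the paper calls the ``Kronecker decomposition theorem'') to realize $\ker F_k$ as a surjective image of some $\mathcal{X}^{l_k}$, and then observe that the image spectrum of $G_k$ coincides with the kernel spectrum of $F_k$, so the LCC condition is inherited verbatim. You are in fact more careful than the paper, spelling out the surjection $\mathcal{X}^s\twoheadrightarrow\integers/d$ explicitly, verifying the channel-rate convergence $R_c(G_k)\to\ln|\mathcal{X}|-R_s(\seqx{F})$, and flagging the degenerate case $|\ker F_k|\not\to\infty$ that the paper's sketch passes over in silence.
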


\begin{proposition}\label{pr:RelationBetweenLCCandLSC}
For a sequence $\seqx{F}$ of $\delta$-asymptotically good LCEs, there
exists a sequence $\seqx{G} = \{G_k\}_{k=1}^\infty$ of
$\delta$-asymptotically good LSEs
$G_k: \field{q}^{m_k} \to \field{q}^{l_k}$ such that
$\ker G_k = F_k(\field{q}^{n_k})$ for all $k \in \pintegers$.
\end{proposition}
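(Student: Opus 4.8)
The plan is to build each $G_k$ as the canonical projection of $\mathcal{Y}^{m_k}$ onto the quotient group $\mathcal{Y}^{m_k}/F_k(\mathcal{X}^{n_k})$, followed by an embedding of that quotient into a suitable power of $\mathcal{Y}$. The point is that $\ker G_k$ is then \emph{literally} the subgroup $F_k(\mathcal{X}^{n_k})$, so that the kernel-spectrum condition \eqref{eq:DefinitionOfAsympGoodLSC} for $\seqx{G}$ becomes, term by term, the image-spectrum condition \eqref{eq:DefinitionOfAsympGoodLCC} for $\seqx{F}$. No spectrum transform is needed; the only genuine ingredient is a purely group-theoretic one, namely that a quotient of a finite abelian group embeds back into that group.

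First I would record this group-theoretic fact: for any finite abelian group $\mathcal{A}$ and subgroup $\mathcal{B} \le \mathcal{A}$, the quotient $\mathcal{A}/\mathcal{B}$ is isomorphic to a subgroup of $\mathcal{A}$. This follows from Pontryagin duality for finite abelian groups, since $\widehat{\mathcal{A}/\mathcal{B}}$ is isomorphic to the annihilator $\mathcal{B}^{\perp}\le\widehat{\mathcal{A}}$, while $\widehat{\mathcal{A}}\cong\mathcal{A}$ and $\widehat{\mathcal{A}/\mathcal{B}}\cong\mathcal{A}/\mathcal{B}$ (one can also argue directly via invariant factors; in the finite-field or elementary-abelian cases it is just the statement that a quotient vector space embeds into the ambient one). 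Applying this with $\mathcal{A}=\mathcal{Y}^{m_k}$ and $\mathcal{B}=C_k$, where $C_k=F_k(\mathcal{X}^{n_k})$, gives an injective homomorphism $\iota_k:\mathcal{Y}^{m_k}/C_k\hookrightarrow\mathcal{Y}^{m_k}$. Since there are only finitely many subgroups of $\mathcal{Y}^{m_k}$, I fix one such $\iota_k$ for each possible value of the random subgroup $C_k$; then $G_k\eqdef\iota_k\circ\pi_k$, with $\pi_k:\mathcal{Y}^{m_k}\to\mathcal{Y}^{m_k}/C_k$ the canonical projection, is a well-defined random linear code $\mathcal{Y}^{m_k}\to\mathcal{Y}^{l_k}$ with $l_k=m_k$ (and it is a measurable function of $F_k$).

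Next I would verify the three requirements. (i) By the first isomorphism theorem $\ker G_k=\ker\pi_k=C_k=F_k(\mathcal{X}^{n_k})$. (ii) The quantity $\alpha(\cdot)$ of a (random) set depends only on the set; since $\ker G_k$ and $F_k(\mathcal{X}^{n_k})$ are equal as random subsets of $\mathcal{Y}^{m_k}$, their spectra agree as random variables, whence $\alpha(\ker G_k)(P)=\alpha(F_k(\mathcal{X}^{n_k}))(P)$ for every $P\in\mathcal{P}_{m_k}(\mathcal{Y})$; thus \eqref{eq:DefinitionOfAsympGoodLSC} for $\seqx{G}$ is exactly \eqref{eq:DefinitionOfAsympGoodLCC} for $\seqx{F}$, which holds by hypothesis. (iii) For the rate, $|G_k(\mathcal{Y}^{m_k})|=|\mathcal{Y}|^{m_k}/|F_k(\mathcal{X}^{n_k})|$, so $R_s(G_k)=\ln|\mathcal{Y}|-R_c(F_k)$, which converges in probability to $\ln|\mathcal{Y}|-R_c(\seqx{F})$ because the good LCC $\seqx{F}$ has (by the standing convention) a well-defined asymptotic channel transmission rate; hence $R_s(\seqx{G})$ exists.

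I expect no serious obstacle; the work is essentially the embedding lemma plus bookkeeping, and the argument is the exact mirror of the proof of Proposition~\ref{pr:RelationBetweenLSCandLCC} with the roles of kernel and image interchanged. The one point deserving a word of care is the standing nondegeneracy assumption $\lim_k|G_k(\mathcal{Y}^{m_k})|=\infty$, which here amounts to $R_c(\seqx{F})<\ln|\mathcal{Y}|$, i.e. $\seqx{F}$ is not asymptotically surjective; in the degenerate case $R_c(\seqx{F})=\ln|\mathcal{Y}|$ the construction still yields maps satisfying \eqref{eq:DefinitionOfAsympGoodLSC} (trivially, with $\alpha\equiv 1$), but these are the uninteresting rate-zero codes, and I would simply dispatch that case with a remark.
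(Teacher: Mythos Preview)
Your proposal is correct and follows essentially the same route as the paper's proof, which is only a one-line sketch invoking the Kronecker decomposition theorem to produce a $G_k:\mathcal{Y}^{m_k}\to\mathcal{Y}^{l_k}$ with $\ker G_k=F_k(\mathcal{X}^{n_k})$. You make the construction explicit (quotient map followed by an embedding of $\mathcal{Y}^{m_k}/C_k$ into $\mathcal{Y}^{m_k}$, so $l_k=m_k$), and you additionally check the rate convergence and the measurability in $F_k$, points the paper's sketch leaves implicit.
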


\begin{proposition}\label{pr:RelationBetweenLSCCandLSCLCC}
Let $\seqx{F}$ be a sequence of $\delta$-asymptotically good LSCEs.
Then $\seqx{F}$ is $\delta$-asymptotically SC-good whenever its
asymptotic source transmission rate exists, and $\seqx{F}$ is
$\delta\overline{R}(\seqx{F})$-asymptotically CC-good whenever its
asymptotic channel transmission rate exists.
\end{proposition}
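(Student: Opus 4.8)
The plan is to derive both conclusions from two pointwise comparisons relating the quantity $\alpha(F_k)(P,Q)$ of the joint spectrum to the quantities $\alpha(\ker F_k)(P)$ and $\alpha(F_k(\mathcal{X}^{n_k}))(Q)$ attached to the kernel and image spectra; the hypotheses on convergence of the source (resp.\ channel) transmission rate enter only so that the phrases ``$\delta$-asymptotically SC-good'' (resp.\ ``CC-good'') are applicable, in accordance with the tacit conventions adopted after Proposition~\ref{pr:JointSpectrumCondition}. Writing $\rho(F_k)$ as in \eqref{eq:DefinitionOfRho}, the hypothesis is $\limsup_{k\to\infty}\rho(F_k)\le\delta$. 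A preliminary remark that will be needed is $\rho(F_k)\ge 0$: for each $P\ne P_{0^{n_k}}$, summing the numerators in the definition of $\alpha$ and using $S_{\mathcal{X}}(\mathrm{rl}(F_k))(P)=S_{\mathcal{X}}(\mathcal{X}^{n_k})(P)$ together with Proposition~\ref{pr:SpectrumOfSets} gives $\sum_{Q}\alpha(F_k)(P,Q)\,S_{\mathcal{Y}}(\mathcal{Y}^{m_k})(Q)=1$, so $\max_{Q}\alpha(F_k)(P,Q)\ge 1$; in particular $\delta\ge 0$.

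For the SC-good claim, observe that a sequence $\seq{x}\in\ker F_k$ of type $P$ is exactly a point $(\seq{x},0^{m_k})$ of $\mathrm{rl}(F_k)$ with $\mathcal{X}$-type $P$ and $\mathcal{Y}$-type $P_{0^{m_k}}$, so $|\ker F_k|\cdot S_{\mathcal{X}}(\ker F_k)(P)=|\mathcal{X}^{n_k}|\cdot S_{\mathcal{X}\mathcal{Y}}(F_k)(P,P_{0^{m_k}})$. Since $F_k$ is a homomorphism, $|\ker F_k|^{-1}=|F_k(\mathcal{X}^{n_k})|/|\mathcal{X}^{n_k}|\le|\mathcal{Y}|^{m_k}/|\mathcal{X}^{n_k}|$, whence $E[S_{\mathcal{X}}(\ker F_k)(P)]\le|\mathcal{Y}|^{m_k}\,E[S_{\mathcal{X}\mathcal{Y}}(F_k)(P,P_{0^{m_k}})]$. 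Dividing by $S_{\mathcal{X}}(\mathcal{X}^{n_k})(P)$ and using Proposition~\ref{pr:SpectrumOfSets} together with $S_{\mathcal{Y}}(\mathcal{Y}^{m_k})(P_{0^{m_k}})=|\mathcal{Y}|^{-m_k}$ yields $\alpha(\ker F_k)(P)\le\alpha(F_k)(P,P_{0^{m_k}})$ for every $P\ne P_{0^{n_k}}$. Taking $\tfrac{1}{n_k}\ln$, then the maximum over such $P$, then $\limsup$ over $k$, turns the joint-spectrum condition \eqref{eq:DefinitionOfAsympGoodLSCC} into the kernel-spectrum condition \eqref{eq:DefinitionOfAsympGoodLSC} with the same $\delta$, i.e.\ $\seqx{F}$ is $\delta$-asymptotically SC-good.

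For the CC-good claim, use that the image spectrum of a homomorphism is the $\mathcal{Y}$-marginal of its joint spectrum, $S_{\mathcal{Y}}(F_k(\mathcal{X}^{n_k}))(Q)=\sum_{P}S_{\mathcal{X}\mathcal{Y}}(F_k)(P,Q)$. Taking expectations, substituting $E[S_{\mathcal{X}\mathcal{Y}}(F_k)(P,Q)]=\alpha(F_k)(P,Q)\,S_{\mathcal{X}}(\mathcal{X}^{n_k})(P)\,S_{\mathcal{Y}}(\mathcal{Y}^{m_k})(Q)$ from Proposition~\ref{pr:SpectrumOfSets}, and dividing by $S_{\mathcal{Y}}(\mathcal{Y}^{m_k})(Q)$ gives $\alpha(F_k(\mathcal{X}^{n_k}))(Q)=\sum_{P}\alpha(F_k)(P,Q)\,S_{\mathcal{X}}(\mathcal{X}^{n_k})(P)$. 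For $Q\ne P_{0^{m_k}}$ the term $P=P_{0^{n_k}}$ vanishes by \eqref{eq:Identity2OfLC}, and since the remaining weights $S_{\mathcal{X}}(\mathcal{X}^{n_k})(P)$ are nonnegative with total at most $1$, one gets $\alpha(F_k(\mathcal{X}^{n_k}))(Q)\le\max_{P\ne P_{0^{n_k}}}\alpha(F_k)(P,Q)$. Taking $\tfrac{1}{m_k}\ln$, factoring out $n_k/m_k=R(F_k)>0$, and maximizing over $Q\ne P_{0^{m_k}}$ yields $\max_{Q\ne P_{0^{m_k}}}\tfrac{1}{m_k}\ln\alpha(F_k(\mathcal{X}^{n_k}))(Q)\le R(F_k)\,\rho(F_k)$. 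Since $R(F_k)>0$ and $\rho(F_k)\ge 0$, $\limsup_k R(F_k)\rho(F_k)\le\bigl(\limsup_k R(F_k)\bigr)\bigl(\limsup_k\rho(F_k)\bigr)\le\overline{R}(\seqx{F})\,\delta$, which is the image-spectrum condition \eqref{eq:DefinitionOfAsympGoodLCC} with $\delta$ replaced by $\delta\overline{R}(\seqx{F})$; hence $\seqx{F}$ is $\delta\overline{R}(\seqx{F})$-asymptotically CC-good.

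The bulk of the work is routine unwinding of the definitions of $\alpha$ and of the kernel, image, and joint spectra via Proposition~\ref{pr:SpectrumOfSets}. The two points that need care are (i) eliminating the $P=P_{0^{n_k}}$ contribution to the image spectrum, which is exactly the role of \eqref{eq:Identity2OfLC}, and (ii) the change of normalization from $1/n_k$ in $\rho(F_k)$ to $1/m_k$ in the image-spectrum condition: this is where the factor $\overline{R}(\seqx{F})$ appears (it is the price of passing from ``per source symbol'' to ``per channel symbol''), and it is where one uses $\rho(F_k)\ge 0$ to bound the $\limsup$ of the product $R(F_k)\rho(F_k)$ by the product of the separate $\limsup$s. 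I do not expect any obstacle beyond these.
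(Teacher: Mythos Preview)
Your proof is correct and follows essentially the same approach as the paper: both arguments reduce to the pointwise inequalities $\alpha(\ker F_k)(P)\le\alpha(F_k)(P,P_{0^{m_k}})$ and $\alpha(F_k(\mathcal{X}^{n_k}))(Q)\le\max_{P\ne P_{0^{n_k}}}\alpha(F_k)(P,Q)$, then take $\limsup$. The only cosmetic difference is that for the SC-good part the paper routes through the random-permutation identities (Propositions~\ref{pr:SpectrumPropertyOfFunctions} and~\ref{pr:SpectrumProperty2UnderVectorPermutation}) to reach the first inequality, whereas your direct counting argument gets there without that machinery; for the CC-good part the paper uses a uniform $\epsilon$-bound where you bound the convex combination by a maximum and invoke $\rho(F_k)\ge 0$ to control the $\limsup$ of the product.
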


These relations are all depicted in Fig.~\ref{fig:Relation1}.
\begin{figure}[htbp]
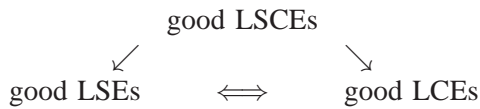

\[
\begin{array}{rcl}
&\mbox{good LSCEs} \\
\swarrow & &\searrow \\
\mbox{good LSEs} &\Longleftrightarrow &\mbox{good LCEs} \\
\end{array}
\]
\caption{The relations among different kinds of good linear encoders}
\label{fig:Relation1}
\end{figure}
Among them, the relations
(Proposition~\ref{pr:RelationBetweenLSCCandLSCLCC}) that good LSCEs
are good LSEs and LCEs deserve much more attention.  This fact
indicates the fundamental role of good LSCEs, and allows us to
concentrate on only one problem, namely, constructing linear encoders
with good joint spectra.
So in the following sections, we shall investigate this problem in
depth.
Note, however, that there are two arrows missing in
Fig.~\ref{fig:Relation1}, one from LSE to LSCE and the other from LCE
to LSCE.
This naturally leads to the following question: Are
$\delta$-asymptotically good LSEs (resp., LCEs)
$\delta$-asymptotically SCC-good, or can we construct
$\delta$-asymptotically good LSCEs which are SC-equivalent (resp.,
CC-equivalent) to given $\delta$-asymptotically good LSEs (resp.,
LCEs)?

As $\delta$ is close to zero, it is clear that $\delta$-asymptotically
good LSEs (or LCEs) may not be $\delta$-asymptotically SCC-good.
\cite[Theorem 4]{JSCC:Yang200503} shows that there exist
asymptotically good LSEs in the ensemble of low-density parity-check
matrices.
However, \cite[Corollary~4.2]{JSCC:Yang200904} proves that these
encoders are not asymptotically SCC-good because the matrices are
sparse.
It is also well known that for any linear code there exists a
CC-equivalent systematic linear encoder.
According to \cite[Corollary~4.1]{JSCC:Yang200904}, high-rate
systematic linear encoders are not asymptotically SCC-good.
So we can always find asymptotically good systematic LCEs which are
not asymptotically SCC-good.

It remains yet to decide whether, and if so, how we can construct
$\delta$-asymptotically good LSCEs which are SC-equivalent (resp.,
CC-equivalent) to given $\delta$-asymptotically good LSEs (resp.,
LCEs).
The answer is positive and will be given in
Section~\ref{sec:GeneralPrinciplesForConstructingGoodLSCC}.

We close this section with an easy result.

\begin{proposition}\label{pr:DiagonalArgument}
Let $\{\seqx{F}_i = \{F_{i,k}\}_{k=1}^\infty\}_{i=1}^\infty$ be a
family of sequences of random linear encoders $F_{i,k}:
\field{q}^{n_k} \to \field{q}^{m_k}$.
Suppose that $\seqx{F}_i$ is $\delta_i$-asymptotically SCC-good where
$\delta_i$ is nonincreasing in $i$ and converges to $\delta$ as
$i\to\infty$.
Define the random linear encoder $G_{i,k} \eqdef F_{i_0,k}$ where
$i_0 \eqdef \arg \min_{1 \le j \le i} \rho(F_{j,k})$.
Then $\{G_{k,k}\}_{k=1}^\infty$ is $\delta$-asymptotically SCC-good.
\end{proposition}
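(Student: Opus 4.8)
The statement is a diagonal-argument lemma: given a family $\{\seqx{F}_i\}$ of sequences of random linear codes, where $\seqx{F}_i$ is $\delta_i$-asymptotically SCC-good and $\delta_i \downarrow \delta$, we build a single sequence $\{G_{k,k}\}_{k}$ that is $\delta$-asymptotically SCC-good by, at block length $n_k$, choosing among the first $i$ codes the one with the smallest $\rho(F_{j,k})$ and then diagonalizing. I would first recall that, by the remark following Proposition~\ref{pr:JointSpectrumCondition}, "$\delta_i$-asymptotically SCC-good" for $\seqx{F}_i$ means exactly $\limsup_{k\to\infty}\rho(F_{i,k})\le\delta_i$, and that our goal is to show $\limsup_{k\to\infty}\rho(G_{k,k})\le\delta$.

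**Key steps.** First I would observe that $G_{k,k}=F_{i_0,k}$ where $i_0=\arg\min_{1\le j\le k}\rho(F_{j,k})$, so by definition $\rho(G_{k,k})=\min_{1\le j\le k}\rho(F_{j,k})\le \rho(F_{j,k})$ for every fixed $j\le k$. Second, fix an arbitrary $\epsilon>0$ and choose an index $i$ large enough that $\delta_i\le\delta+\epsilon$ (possible since $\delta_i\downarrow\delta$). Third, since $\limsup_{k\to\infty}\rho(F_{i,k})\le\delta_i\le\delta+\epsilon$, there is $K$ such that for all $k\ge K$ we have $\rho(F_{i,k})\le\delta+2\epsilon$; enlarging $K$ if necessary so that $K\ge i$, we get for all $k\ge K$ that $\rho(G_{k,k})\le\rho(F_{i,k})\le\delta+2\epsilon$. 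Fourth, this yields $\limsup_{k\to\infty}\rho(G_{k,k})\le\delta+2\epsilon$, and since $\epsilon>0$ was arbitrary, $\limsup_{k\to\infty}\rho(G_{k,k})\le\delta$. Invoking Proposition~\ref{pr:JointSpectrumCondition} (or rather the definition of $\delta$-asymptotically SCC-good codes, which is the joint-spectrum condition phrased via $\rho$), the sequence $\{G_{k,k}\}_{k=1}^\infty$ is $\delta$-asymptotically SCC-good.

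**The main obstacle.** There is essentially no analytic obstacle here; the argument is a standard $\limsup$ diagonalization. The only point that requires a little care is the bookkeeping between the two indices: we must ensure that when we pass to block length $k$, the index $i$ we fixed (depending on $\epsilon$) satisfies $i\le k$, so that $F_{i,k}$ is actually among the candidates $\{F_{1,k},\dots,F_{k,k}\}$ minimized over; this is handled by taking $K\ge i$. One should also note that the measurability/well-definedness of $G_{i,k}$ is not an issue, since $i_0$ is a deterministic function only if $\rho(F_{j,k})$ are deterministic — but $\rho(F_{j,k})$ depends only on $\alpha(F_{j,k})$, which in turn depends only on the \emph{distribution} of $F_{j,k}$ (through $E[S_{\mathcal{X}\mathcal{Y}}(F_{j,k})]$), hence $i_0$ is a genuine constant and $G_{i,k}\eqvar{d}F_{i_0,k}$ is a bona fide random linear code. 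No monotonicity of $\delta_i$ beyond convergence is actually needed for the $\limsup$ bound, though it is harmless to keep it as stated.
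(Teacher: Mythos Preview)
Your proof is correct and follows essentially the same approach as the paper: both use that $\rho(G_{k,k})=\min_{1\le j\le k}\rho(F_{j,k})\le\rho(F_{i,k})$ for every fixed $i\le k$, whence $\limsup_{k\to\infty}\rho(G_{k,k})\le\delta_i$ for all $i$, and then pass to $\inf_i\delta_i=\delta$. The paper compresses this into a single line via the intermediate quantity $\rho(G_{i,k})$, while you unfold it as an explicit $\epsilon$--$K$ argument; your added remark that $\rho(F_{j,k})$ depends only on the distribution of $F_{j,k}$ (so $i_0$ is deterministic and $G_{i,k}$ is well defined) is a nice point the paper leaves implicit.
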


The proofs of this section are given in
Appendix~\ref{subsec:ProofOfConceptsOfGoodLinearCodes}.

\section{General Principles for Constructing Good Linear
 Source-Channel Encoders}
\label{sec:GeneralPrinciplesForConstructingGoodLSCC}

\subsection{A Class of SCC-Good Encoders Derived from Certain
 Maximum-Rank-Distance Codes}\label{subsec:MRDLSCC}

Recall that in \cite[Sec.~III]{JSCC:Yang200904}, a random linear
encoder $F: \field{q}^n \to \field{q}^m$ was said to be good for JSCC
if it satisfies
\begin{equation}\label{eq:SuperGoodLinearCodes1}
\alpha_F(P, Q) = 1
\end{equation}
for all $P \in \mathcal{P}_n^*$ and $Q \in \mathcal{P}_m$.
To distinguish good linear encoders for JSCC from good linear encoders
in other contexts, we say that $F$ is SCC-good.
It is clear that SCC-good linear encoders are asymptotically SCC-good.
By \citeSpectrumProperty, we also have an alternative condition:
\begin{equation}\label{eq:SuperGoodLinearCodes2}
\pr\{\tilde{F}(\seq{x}) = \seq{y}\} = q^{-m}
\quad \forall \seq{x} \in \field{q}^n \setminus \{0^n\},
 \seq{y} \in \field{q}^m.
\end{equation}

Let $\rlccode{n,m}: \field{q}^n\to\field{q}^m$ be a random linear
encoder uniformly distributed over $\field{q}^{n\times m}$.
Obviously, $\rlccode{n,m}$ is SCC-good (\citeGoodLinearCode) but in
some sense trivial, since its distribution has support
$\field{q}^{n \times m}$, the set of all linear encoders
$f:\field{q}^n\to\field{q}^m$.
In this subsection we provide further examples of SCC-good random
linear encoders with support size much smaller than
$|\field{q}^{n\times m}|=q^{mn}$.
These are derived from so-called maximum-rank-distance (MRD) codes,
and have the stronger property
\begin{equation}\label{eq:SuperGoodBeforeSymmLinearCodes}
\pr\{F(\seq{x}) = \seq{y}\} = q^{-m}
\quad \forall \seq{x} \in \field{q}^n \setminus \{0^n\},
 \seq{y} \in \field{q}^m
\end{equation}
which one might call ``SCC-good before symmetrization''.

We first give a brief review of MRD codes.
Let $n,m,k$ be positive integers with $k\leq\min\{n,m\}$.
Write $n'=\max\{n,m\}$, $m'=\min\{n,m\}$ (so that $(n',m')$ equals
$(n,m)$ or $(m,n)$).
An \emph{$(n,m,k)$ maximum-rank-distance (MRD) code} over $\field{q}$
is a set $\mathcal{C}$ of $q^{kn'}$ matrices in
$\field{q}^{n\times m}$ with minimum rank distance
$\rdist(\mathcal{C})
 \eqdef\min_{\mat{A},\mat{B}\in\mathcal{C}:\mat{A}\neq\mat{B}}
 \rank{\mat{A}-\mat{B}}
 =m'-k+1$.
\omitted{%
MRD codes are optimal for the rank distance on $\field{q}^{n\times m}$
in the same way as maximum distance separable (MDS) codes are optimal
for the Hamming distance, as the following well-known argument shows:
Let $\mathcal{C}\subseteq\field{q}^{n\times m}$ have minimum rank
distance $d$, $1\leq d\leq m'$.
We may view $\mathcal{C}$ as a code of length $m$ over $\field{q}^n$
(the columns of $\mat{A}\in\field{q}^{n\times m}$ being the
``entries'' of a codeword).
As such $\mathcal{C}$ has Hamming distance $\geq d$ (since a matrix of
rank $d$ must have at least $d$ nonzero columns).
Hence $|\mathcal{C}|\leq q^{n(m-d+1)}$ by the Singleton bound.
By transposing we also have $|\mathcal{C}|\leq q^{m(n-d+1)}$, so that
\begin{equation}\label{eq:MRDsingleton}
|\mathcal{C}|
\leq\min\left\{q^{n(m-d+1)},q^{m(n-d+1)}\right\}
=q^{n'(m'-d+1)}.
\end{equation}
The codes for which the Singleton-like bound \eqref{eq:MRDsingleton}
is sharp are exactly the $(n,m,k)$ MRD codes with $k=m'-d+1$.%
}

MRD codes have maximum cardinality among all $n\times m$ matrix codes
over $\field{q}$ with fixed minimum rank distance.
They were introduced in \cite{JSCC:Delsarte197800} under the name
``Singleton system'' and investigated further in
\cite{JSCC:Gabidulin198501,JSCC:Roth199103}.  As shown in
\cite{JSCC:Delsarte197800,JSCC:Gabidulin198501,JSCC:Roth199103},
linear $(n,m,k)$ MRD codes over $\field{q}$ (i.e.\ MRD codes that are
$\field{q}$-subspaces of $\field{q}^{n\times m}$) exist for all
$n,m,k$ with $1\leq k\leq m'$.
The standard construction uses $q$-analogues of Reed-Solomon codes,
which are defined as follows:
Assuming $n\geq m$ for a moment, let $x_1,\dots,x_m\in\field{q^n}$ be
linearly independent over $\field{q}$.
For $1\leq k\leq m$ let $C_k$ be the linear $[m,k]$ code over
$\field{q^n}$ having generator matrix
\begin{equation}\label{eq:qRS}
\mat{G}_k=
\begin{pmatrix}
 x_1&x_2&\hdots&x_m\\
 x_1^q&x_2^q&\hdots&x_m^q\\
 \vdots&\vdots&&\vdots\\
 x_1^{q^{k-1}}&x_2^{q^{k-1}}&\hdots&x_m^{q^{k-1}}
\end{pmatrix}.
\end{equation}
Replace each codeword $\seq{c}=(c_1,\dots,c_m)\in C_k$ by the $n\times
m$ matrix $\mat{C}$ having as columns the coordinate vectors of $c_j$
with respect to some fixed basis of $\field{q^n}$ over $\field{q}$.
The set $\mathcal{C}_k$ of all $q^{nk}$
matrices $\mat{C}$ obtained in this way forms a linear
$(n,m,k)$ MRD code over $\field{q}$.  The restriction $n\geq m$ is not
essential, since transposing each matrix of an $(n,m,k)$ MRD code
yields an $(m,n,k)$ MRD code (due to the fact that
$\mat{A}\mapsto\mat{A}^T$ preserves the rank distance).  We shall
follow \cite{JSCC:Silva200809} and call the codes
$\mathcal{C}_k\subseteq\field{q}^{n \times m}$ ($1\leq k\leq m\leq
n$), as well as their transposes, \emph{Gabidulin codes}.

For our construction of SCC-good random linear encoders we need the
following property of Gabidulin codes.

\begin{proposition}\label{pr:MRDLSCC}
Suppose $\mathcal{C}\subseteq\field{q}^{n\times m}$ is a Gabidulin
code.
Then $\{\seq{x}\mat{A}:\mat{A}\in\mathcal{C}\}=\field{q}^m$ for every
$\seq{x} \in \field{q}^n \setminus \{0^n\}$, and similarly
$\{\mat{A}\seq{y}^T:\mat{A}\in\mathcal{C}\}=\field{q}^n$ for every
$\seq{y} \in \field{q}^m \setminus \{0^m\}$.
\end{proposition}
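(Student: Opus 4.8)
The plan is to peel the statement down to the ``bottom layer'' of a Gabidulin code, where it becomes a routine fact about $\field{q^n}$ regarded as a vector space over $\field{q}$. First I would remove the transposition from consideration: for $\mathcal{C}\subseteq\field{q}^{n\times m}$ the identity $\seq{x}\mat{A}=\transpose{\bigl(\transpose{\mat{A}}\transpose{\seq{x}}\bigr)}$ shows that the first assertion for $\mathcal{C}$ coincides with the second assertion for $\transpose{\mathcal{C}}=\{\transpose{\mat{A}};\mat{A}\in\mathcal{C}\}$, and symmetrically; since the class of Gabidulin codes is closed under transposition, it suffices to prove \emph{both} assertions for the codes $\mathcal{C}_k\subseteq\field{q}^{n\times m}$ with $m\le n$, i.e.\ those obtained directly from a $q$-analogue Reed--Solomon code $D_k$. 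Next, because the first row of the generator matrix $\mat{G}_k$ in \eqref{eq:qRS} is precisely $(x_1,\dots,x_m)$, we have $D_1\subseteq D_k$ and hence $\mathcal{C}_1\subseteq\mathcal{C}_k$, so it is enough to establish the two assertions for $\mathcal{C}_1$. This is the decisive reduction: a codeword of $D_1$ has the form $(ux_1,\dots,ux_m)$ with $u\in\field{q^n}$, so the corresponding matrix $\mat{C}(u)\in\mathcal{C}_1$ has $\mathrm{cv}(ux_j)$ as its $j$-th column, where $\mathrm{cv}:\field{q^n}\to\field{q}^n$ denotes the coordinate map in the fixed $\field{q}$-basis of $\field{q^n}$ (an $\field{q}$-linear bijection).

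For the first assertion I would fix $\seq{x}\in\field{q}^n\setminus\{0^n\}$ and set $\lambda(z)\eqdef\seq{x}\cdot\mathrm{cv}(z)$ for $z\in\field{q^n}$; then $\lambda:\field{q^n}\to\field{q}$ is $\field{q}$-linear and nonzero (since $\seq{x}\ne0^n$), and $\seq{x}\mat{C}(u)=\bigl(\lambda(ux_1),\dots,\lambda(ux_m)\bigr)$. Using nondegeneracy of the trace form of the separable extension $\field{q^n}/\field{q}$, I would write $\lambda(z)=\mathrm{Tr}_{\field{q^n}/\field{q}}(\zeta z)$ for a suitable $\zeta\ne0$ and substitute $w=\zeta u$ (which again ranges over all of $\field{q^n}$); the set in question then equals $\bigl\{\bigl(\mathrm{Tr}_{\field{q^n}/\field{q}}(wx_1),\dots,\mathrm{Tr}_{\field{q^n}/\field{q}}(wx_m)\bigr);w\in\field{q^n}\bigr\}$, which is the image of an $\field{q}$-linear map $\field{q^n}\to\field{q}^m$ whose kernel is the trace-orthogonal complement of $\mathrm{span}_{\field{q}}\{x_1,\dots,x_m\}$. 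Since $x_1,\dots,x_m$ are linearly independent over $\field{q}$, that complement has dimension $n-m$, so the map is onto $\field{q}^m$, which gives the first assertion. For the second assertion I would fix $\seq{y}\in\field{q}^m\setminus\{0^m\}$ and use $\field{q}$-linearity of $\mathrm{cv}$: $\mat{C}(u)\transpose{\seq{y}}=\sum_{j}y_j\,\mathrm{cv}(ux_j)=\mathrm{cv}(uz)$ with $z\eqdef\sum_{j}y_j x_j$. Here $z\ne0$ because $\seq{y}\ne0^m$ and the $x_j$ are linearly independent over $\field{q}$; since $\field{q^n}$ is a field, $u\mapsto uz$ is a bijection, so $uz$ runs over all of $\field{q^n}$ and $\mathrm{cv}(uz)$ over all of $\field{q}^n$, which gives the second assertion.

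I do not expect a genuine obstacle. Once the reduction to $\mathcal{C}_1$ is in place both assertions follow immediately; the two points deserving some care are the bookkeeping in the transposition step (so that assuming $m\le n$ is indeed harmless) and the one external ingredient invoked, namely nondegeneracy of the bilinear form $(z,z')\mapsto\mathrm{Tr}_{\field{q^n}/\field{q}}(zz')$ on $\field{q^n}$. A more intrinsic route (trying to derive the statement from the minimum-rank-distance property alone, by reducing to $\seq{x}=(1,0,\dots,0)$ via left multiplication by an invertible matrix and then analysing the codewords of $\mathcal{C}$ with vanishing first row) looks possible but requires extra casework in $k$ and is less transparent than using the explicit Gabidulin description.
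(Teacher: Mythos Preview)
Your argument is correct. Both you and the paper first reduce to the smallest code $\mathcal{C}_1$ (via $\mathcal{C}_1\subseteq\mathcal{C}_k$), but then diverge. The paper does not stay in the rectangular case: it extends $x_1,\dots,x_m$ to a full basis $x_1,\dots,x_n$ of $\field{q^n}/\field{q}$, so that $\mathcal{C}_1\subseteq\field{q}^{n\times m}$ is a coordinate projection of a square code $\mathcal{C}_1'\subseteq\field{q}^{n\times n}$, and it suffices to treat $\mathcal{C}_1'$. In that square case $|\mathcal{C}_1'|=q^n=|\field{q}^n|$, so the linear maps $\mat{A}\mapsto\seq{x}\mat{A}$ and $\mat{A}\mapsto\mat{A}\transpose{\seq{y}}$ are onto iff they are one-to-one; and injectivity is immediate from the minimum rank distance being $n$, since $\seq{x}(\mat{A}-\mat{B})=0$ forces $\rank{\mat{A}-\mat{B}}<n$ and hence $\mat{A}=\mat{B}$. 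Your route instead keeps $m\leq n$, writes $\mathcal{C}_1=\{\mat{C}(u):u\in\field{q^n}\}$ explicitly, and computes the images directly, invoking nondegeneracy of the trace form for the row case and the field property for the column case. The paper's argument is shorter and, amusingly, is precisely the ``more intrinsic route'' you mention at the end and set aside: the reduction to the square case is what eliminates the casework you were worried about, because then injectivity and surjectivity coincide by cardinality and no analysis of individual $k$ is needed. Your approach, on the other hand, never appeals to the rank-distance property and gives an explicit parametrisation of the image, at the price of the extra input about the trace pairing.
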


\begin{theorem}\label{th:MRDLSCC}
Let $\mathcal{C}\subseteq\field{q}^{n\times m}$ be a Gabidulin code
and $F:\field{q}^n\to\field{q}^m$ a random linear encoder uniformly
distributed over $\mathcal{C}$.
Then $F$ is SCC-good (before symmetrization).
\end{theorem}

Theorem~\ref{th:MRDLSCC} provides us with SCC-good random linear
encoders $F:\field{q}^n\to\field{q}^m$ of support size as small as
$q^{n'}=q^{\max\{n,m\}}$, realized by an $(n,m,1)$ MRD code, which has
minimum rank distance $d = m' = \min\{n,m\}$ (the full-rank case).%
\footnote{It was shown in \cite{JSCC:Yang201105} that
 Theorem~\ref{th:MRDLSCC} actually holds for all MRD codes.
 Furthermore, it was proved that the minimum support size of an
 SCC-good-before-symmetrization random linear encoder is exactly
 $q^{\max\{n,m\}}$, and that a random linear encoder of support size
 $q^{\max\{n,m\}}$ is SCC-good before symmetrization if and only if it
 is uniformly distributed over an $(n,m,1)$ MRD code.}

\omitted{%
As an application of Theorem~\ref{th:MRDLSCC}, the next example gives
us some insights into the general form of a random linear encoder that
is SCC-good before symmetrization.

\begin{example}\label{ex:Gabidulin}
Let $m=n=q=2$ and $\field{4} = \field{2}[\alpha]$ with
$\alpha^2 = \alpha+1$.
It is clear that $\{1, \alpha\}$ is a basis of $\field{4}$.
By definition of Gabidulin codes, we may consider the generator matrix
$\mat{G}_1 = (1\;\; \alpha)$.
Then the Gabidulin code $\mathcal{C}_1$ is
\[
\left\{(0\;\; 0), (1\;\; \alpha), (\alpha\;\; 1+\alpha),
 (1+\alpha\;\; 1)\right\}
\]
or
\[
\left\{
\begin{pmatrix} 0 &0\\ 0 &0\end{pmatrix},
\begin{pmatrix} 1 &0\\ 0 &1\end{pmatrix},
\begin{pmatrix} 0 &1\\ 1 &1\end{pmatrix},
\begin{pmatrix} 1 &1\\ 1 &0\end{pmatrix}\right\}.
\]
Note that $\mathcal{C}_1$ is isomorphic to $\field{4}$.
By Theorem~\ref{th:MRDLSCC}, the random linear encoder uniformly
distributed over $\mathcal{C}_1$, which we also denote by
$\mathcal{C}_1$, is SCC-good.
Interchanging the columns of each matrix in $\mathcal{C}_1$ gives
another SCC-good random linear encoder
\[
\mathcal{C}_2 = \left\{
\begin{pmatrix} 0 &0\\ 0 &0\end{pmatrix},
\begin{pmatrix} 0 &1\\ 1 &0\end{pmatrix},
\begin{pmatrix} 1 &0\\ 1 &1\end{pmatrix},
\begin{pmatrix} 1 &1\\ 0 &1\end{pmatrix}\right\}.
\]
The cosets of $\mathcal{C}_1$, such as
$\mathcal{C}_3 = \mathcal{C}_1
 + (\begin{smallmatrix} 0 & 1 \\ 0 & 0 \end{smallmatrix})$,
are also SCC-good.
Note that $\mathcal{C}_3$ contains only one matrix of full rank.
Finally, let $\mathcal{C} \eqdef \mathcal{C}_I$, where $I$ is an
arbitrary random variable taking values in $\mathcal{I}_3$.
It is clear that $\mathcal{C}$ is SCC-good.
\end{example}
}

The proofs of this subsection are omitted since
Proposition~\ref{pr:MRDLSCC} and Theorem~\ref{th:MRDLSCC} are special
cases of \cite[Lemma~2.4 and Theorem~2.5]{JSCC:Yang201105},
respectively.

\subsection{Constructing Good LSCEs Based on Good LSEs or LCEs}
\label{subsec:ConstructingGoodLSCCGPA}

Condition \eqref{eq:SuperGoodLinearCodes2} is a very strong
condition, which in fact reflects the property of the alphabet.
Combining this with the injectivity property of mappings, we say
that an abelian group $\mathcal{X}$ is \emph{super good} if there
exists a sequence $\{F_n\}_{n=1}^\infty$ of SCC-good random linear
encoders $F_n: \mathcal{X}^n \to \mathcal{X}^n$ such that
\begin{equation}\label{eq:InjectiveProbabilityCondition}
\inf_{n \in \pintegers} \pr\{|\ker F_n| = 1\} > 0.
\end{equation}

We shall now prove that a finite abelian group is super good if
and only if it is \emph{elementary abelian}, i.e.\ isomorphic to
$\integers_p^s$ for some prime $p$ and some integer $s\geq 0$.
The nontrivial elementary abelian $p$-groups are exactly the additive
groups of the finite fields $\field{q}$ with $q=p^r$.
Hence choosing $\field{q}$ as alphabets incurs no essential loss of
generality, and we shall later on keep the assumption that the alphabet
is $\field{q}$.%
\footnote{A similar conclusion is valid in classical coding theory.
 It was shown in \cite{JSCC:Forney199211} that group codes over
 general groups can be no better than linear codes over finite fields,
 in terms of Hamming distance.}

The following fact \cite{JSCC:Cooper200010} shows that all elementary
abelian groups are super good.
\begin{equation}
\pr\left\{|\ker\rlccode{n,n}| = 1\right\}
= \prod_{i=1}^n (1-q^{-i})
> K_q,\label{eq:RankOfRLC}
\end{equation}
where $K_q \eqdef \prod_{i=1}^\infty (1-q^{-i}) > 1-q^{-1}-q^{-2}$
by Euler's pentagonal number theorem \cite{JSCC:Andrews198311}.

The next theorem shows that, conversely, a super good finite abelian
group is necessarily elementary abelian.

\begin{theorem}\label{th:elementary_abelian}
For nontrivial abelian groups $\mathcal{X}$ and $\mathcal{Y}$, if
there exists an SCC-good random linear encoder
$F: \mathcal{X}^n\to\mathcal{Y}^m$, then $\mathcal{X}$ and
$\mathcal{Y}$ are elementary abelian $p$-groups for the same prime $p$
(provided that $m,n\geq 1$).
\end{theorem}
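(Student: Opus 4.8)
The plan is to analyze the condition $\alpha(F)(P,Q) = 1$ via the equivalent pointwise identity \eqref{eq:SuperGoodLinearCodes2}, namely $P\{\tilde F(\seq{x}) = \seq{y}\} = |\mathcal{Y}|^{-m}$ for all $\seq{x} \neq 0^n$ and all $\seq{y} \in \mathcal{Y}^m$. First I would extract structural constraints on $\mathcal{Y}$. Fix any $\seq{x} \neq 0^n$; then $\seq{y} \mapsto P\{\tilde F(\seq{x}) = \seq{y}\}$ is the uniform distribution on $\mathcal{Y}^m$, which forces $\mathcal{Y}^m$, and hence $\mathcal{Y}$, to admit a uniform distribution — automatic for any finite group, so this alone says nothing. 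The real leverage comes from the \emph{group} structure: for a fixed realization $f$ of $F$ (a homomorphism $\mathcal{X}^n \to \mathcal{Y}^m$), the image $f(\seq{x}\text{-orbit})$ lies in a subgroup, and averaging the indicator $1\{f(\seq{x}) = \seq{y}\}$ over permutations and over the ensemble must yield the constant $|\mathcal{Y}|^{-m}$. In particular, taking $\seq{y}$ to range over a coset of a proper subgroup $H \leq \mathcal{Y}^m$ must be consistent with uniformity, which is fine, but picking $\seq{x}$ of small support (e.g. a single nonzero coordinate $a \in \mathcal{X}$ in position $1$, zeros elsewhere) pins things down: $\tilde F(a\,0^{n-1}0\cdots) $ is supported, after symmetrization, on the set of sequences whose coordinates lie in the subgroup of $\mathcal{Y}$ generated by $\{f_j(a) : j\}$ where $f_j$ are the coordinate maps. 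For this to be all of $\mathcal{Y}^m$ with uniform probability, we need (with positive ensemble probability) that the relevant image subgroup is all of $\mathcal{Y}$; combined over all $a \neq 0$ this shows every nonzero $a \in \mathcal{X}$ must map onto generators, and a counting/divisibility argument on $|\mathcal{Y}^m| = |\mathcal{Y}|^m$ versus the sizes of cyclic subgroups generated by single elements will force $\mathcal{Y}$ to have prime exponent $p$, i.e.\ $\mathcal{Y}$ is an elementary abelian $p$-group.

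Next I would pin down $\mathcal{X}$. The key is the exponent: if $a \in \mathcal{X}$ has order $d$, then $\seq{x} = a\,0^{n-1}$ satisfies $d\seq{x} = 0^n$, so $f(d\seq{x}) = d\,f(\seq{x}) = 0^m$ for \emph{every} realization $f$. Hence $f(\seq{x})$ always lies in the $d$-torsion subgroup $(\mathcal{Y}^m)[d]$. If $\gcd(d, p) \neq 1$ is impossible to avoid issues, but more to the point: since $\mathcal{Y}$ has exponent $p$, we have $(\mathcal{Y}^m)[d] = \mathcal{Y}^m$ if $p \mid d$ and $(\mathcal{Y}^m)[d] = \{0^m\}$ if $p \nmid d$. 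In the latter case $\tilde F(\seq{x}) = 0^m$ deterministically, contradicting $P\{\tilde F(\seq{x}) = \seq{y}\} = |\mathcal{Y}|^{-m} \neq 1\{\seq{y}=0^m\}$ (using $|\mathcal{Y}| \geq 2$, $m \geq 1$). Therefore $p \mid d$ for the order $d$ of every nonzero $a \in \mathcal{X}$. Applying this to an element $a$ of prime order $\ell$ (which exists since $\mathcal{X}$ is nontrivial finite) gives $p \mid \ell$, so $\ell = p$; and applying it to an arbitrary $a \neq 0$ gives $p \mid \operatorname{ord}(a)$ — but I actually want $\operatorname{ord}(a) = p$ exactly. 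For that, suppose $\operatorname{ord}(a) = p^2 t$ or more generally has a composite order; consider $a' = (\operatorname{ord}(a)/p)\cdot a$, no — the cleaner route: the element $b = \operatorname{ord}(a)/\ell' \cdot a$ for a prime $\ell' \mid \operatorname{ord}(a)$ with $\ell' \neq p$ would have order $\ell' \neq p$ prime, contradicting the previous line; hence $\operatorname{ord}(a)$ is a power of $p$. To rule out orders $p^2$ and higher, apply the torsion argument with $d = p$: for any $a \neq 0$, either $pa = 0$ or $pa \neq 0$; if $pa \neq 0$ then $\seq{x}' = pa\,0^{n-1}$ is a nonzero element with $f(\seq{x}')$ always in $p(\mathcal{Y}^m) = \{0^m\}$ (since $\mathcal{Y}$ has exponent $p$), again forcing $\tilde F(\seq{x}') = 0^m$ deterministically — contradiction. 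So $pa = 0$ for all $a \in \mathcal{X}$, i.e.\ $\mathcal{X}$ has exponent $p$ and is an elementary abelian $p$-group for the \emph{same} $p$.

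The main obstacle is the first part — showing $\mathcal{Y}$ has prime exponent — because it cannot rely on torsion of $\mathcal{X}$ (we have no control over $\mathcal{X}$ yet) and must instead exploit the uniformity of $P\{\tilde F(\seq{x}) = \cdot\}$ on $\mathcal{Y}^m$ together with the fact that each realization $f$ is a homomorphism. The argument I sketched — that $f(\seq{x})$ is confined to a subgroup determined by the images of coordinates of $\seq{x}$, and uniformity forces this subgroup to be all of $\mathcal{Y}^m$ with positive probability, which via subgroup-index divisibility forces $\mathcal{Y}$ to have no element of composite order — needs to be made precise, probably by the following clean version: pick a character $\psi$ of $\mathcal{Y}^m$ of order equal to $\exp(\mathcal{Y})$; then $E[\psi(\tilde F(\seq{x}))] = \sum_{\seq{y}} P\{\tilde F(\seq{x})=\seq{y}\}\psi(\seq{y}) = |\mathcal{Y}|^{-m}\sum_{\seq{y}}\psi(\seq{y}) = 0$, yet $\psi(\tilde F(\seq{x}))$ is always a root of unity, so this is just orthogonality and gives nothing directly — instead one argues via a \emph{nonzero} $a \in \mathcal{X}$ with, say, $\seq{x} = a\,0^{n-1}$: then $\tilde F(\seq{x})$ after symmetrization is a length-$m$ word with all but a uniformly-random-positioned few coordinates zero, and the nonzero entries are among $\{f_j(a)\}$. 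Matching this against the uniform distribution on $\mathcal{Y}^m$ — which has full support, including e.g.\ $\seq{y}$ with a coordinate of maximal order in $\mathcal{Y}$ — forces some $f_j(a)$ to have order $\exp(\mathcal{Y})$, and then a comparison of $P\{\tilde F(\seq{x}) = \seq{y}\}$ for $\seq{y}$ with one nonzero coordinate $c$ of order $\exp(\mathcal{Y})$ versus $\seq{y}$ with one nonzero coordinate $2c$ (of smaller order when $\exp(\mathcal{Y})$ is composite) breaks uniformity unless $\exp(\mathcal{Y})$ is prime. Turning this outline into airtight combinatorics is where the work lies; everything afterward is the straightforward torsion bookkeeping above.
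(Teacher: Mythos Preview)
Your second half (constraining $\mathcal{X}$ once $\mathcal{Y}$ is known to have exponent $p$) is correct and is exactly the paper's argument: if $pa\neq 0$ for some $a\in\mathcal{X}$, then $\seq{x}'=(pa)0^{n-1}\neq 0^n$ but every realization sends it into $p\,\mathcal{Y}^m=\{0^m\}$, contradicting uniformity.

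The gap is in your first half, and it stems from a false premise. You write that showing $\mathcal{Y}$ has prime exponent ``cannot rely on torsion of $\mathcal{X}$ (we have no control over $\mathcal{X}$ yet)''. But you do have control: $\mathcal{X}$ is a nontrivial finite abelian group, so by Cauchy's theorem it contains an element $x$ of prime order $p$ for some prime $p$ dividing $|\mathcal{X}|$. Now run your own torsion argument in the other direction. Set $\seq{x}=x\,0^{n-1}\neq 0^n$; since $p\seq{x}=0^n$, every realization $f$ (being a homomorphism) satisfies $pf(\seq{x})=0^m$, and the same holds after composing with permutations. Hence $\tilde F(\seq{x})\in(\mathcal{Y}^m)[p]$ with probability $1$. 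But \eqref{eq:SuperGoodLinearCodes2} says $P\{\tilde F(\seq{x})=\seq{y}\}=|\mathcal{Y}|^{-m}>0$ for \emph{every} $\seq{y}\in\mathcal{Y}^m$, so $(\mathcal{Y}^m)[p]=\mathcal{Y}^m$, i.e.\ $\mathcal{Y}$ has exponent $p$ and is elementary abelian. If $|\mathcal{X}|$ had a second prime divisor $q\neq p$, the same argument would force $\mathcal{Y}$ to have exponent $q$ as well, hence $|\mathcal{Y}|=1$, a contradiction; so $\mathcal{X}$ is a $p$-group, and your second-half argument finishes.

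This is precisely the paper's route, phrased there as: uniformity implies that for every $x\in\mathcal{X}\setminus\{0\}$ and $y\in\mathcal{Y}$ there exists a homomorphism $h:\mathcal{X}\to\mathcal{Y}$ with $h(x)=y$; taking $x$ of prime order $p$ forces $py=0$ for all $y$. Your attempted workarounds (characters, analyzing the support of $\tilde F(a\,0^{n-1})$ coordinate-wise) are unnecessary, and the coordinate-wise picture is in any case incorrect --- a homomorphism $\mathcal{X}^n\to\mathcal{Y}^m$ need not act coordinate-wise, so $f(a\,0^{n-1})$ can have all $m$ coordinates nonzero.
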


Next let us investigate the relation between conditions
\eqref{eq:SuperGoodLinearCodes2} and
\eqref{eq:InjectiveProbabilityCondition} for elementary abelian
groups.

\begin{theorem}\label{th:KernelOfSuperGoodLinearCodes}
Let $\mathcal{X}$ be an elementary abelian group of order $q=p^r$.
Then for every SCC-good random linear encoder
$F:\mathcal{X}^n \to \mathcal{X}^n $ the following bound holds:
\begin{equation*}
\pr\{|\ker F| = 1\} \geq \frac{p-2+q^{-n}}{p-1}.
\end{equation*}
\end{theorem}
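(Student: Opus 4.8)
The plan is to analyze the kernel of an SCC-good random linear code $F:\mathcal{X}^n\to\mathcal{X}^n$ directly via the SCC-good condition $\alpha(F)(P,Q)=1$ for all $P\in\mathcal{P}_n(\mathcal{X})\setminus\{P_{0^n}\}$ and all $Q\in\mathcal{P}_n(\mathcal{X})$, equivalently (by Proposition~\ref{pr:SpectrumPropertyOfFunctions}) that $P\{\tilde F(\seq{x})=\seq{y}\}=q^{-n}$ for all $\seq{x}\neq 0^n$ and all $\seq{y}$. Since $\tilde F=\Sigma_n\circ F\circ\Sigma_n$ has the same kernel size as $F$ (permutation-induced automorphisms are bijective), it suffices to bound $P\{|\ker\tilde F|=1\}$; I will write $G=\tilde F$ and work with $G$. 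First I would observe that $|\ker G|$ is a power of $p$, say $|\ker G|=p^j$ for some random $j\in\{0,1,\dots,rn\}$, because $\ker G$ is a subgroup of the elementary abelian $p$-group $\mathcal{X}^n\cong\integers_p^{rn}$.

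The key step is to compute $E[|\ker G|]$ exactly. Writing $|\ker G|=\sum_{\seq{x}\in\mathcal{X}^n}1\{G(\seq{x})=0^n\}$ and taking expectations, the $\seq{x}=0^n$ term contributes $1$, and each $\seq{x}\neq 0^n$ contributes $P\{G(\seq{x})=0^n\}=q^{-n}$ by the SCC-good condition. Hence
\begin{equation*}
E[|\ker G|]=1+(q^n-1)q^{-n}=2-q^{-n}.
\end{equation*}
Now I would exploit the fact that $|\ker G|\in\{1,p,p^2,\dots\}$ takes only the value $1$ or values $\geq p$. Splitting the expectation according to whether $|\ker G|=1$ or $|\ker G|\geq p$,
\begin{equation*}
2-q^{-n}=E[|\ker G|]\geq 1\cdot P\{|\ker G|=1\}+p\cdot P\{|\ker G|\geq p\}
= P\{|\ker G|=1\}+p\bigl(1-P\{|\ker G|=1\}\bigr).
\end{equation*}
Rearranging gives $P\{|\ker G|=1\}(1-p)\geq 2-q^{-n}-p$, i.e. $P\{|\ker G|=1\}\geq\frac{p-2+q^{-n}}{p-1}$, which is exactly the claimed bound.

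The only genuine obstacle is making sure the SCC-good condition really yields $P\{G(\seq{x})=0^n\}=q^{-n}$ for \emph{every} nonzero $\seq{x}$, including $\seq{x}$ of every type; this is immediate from \eqref{eq:SuperGoodLinearCodes2} (the alternative form of SCC-goodness), so no case analysis on types is needed. A minor point worth a sentence in the writeup: when $p=2$ the bound reads $P\{|\ker G|=1\}\geq q^{-n}$, which is the trivial-looking but correct statement that $F$ is injective with probability at least $q^{-n}>0$ — one should check the inequality direction is preserved when dividing by $1-p<0$, which is where the sign flip above comes from. Everything else is a one-line expectation computation, so the proof is short.
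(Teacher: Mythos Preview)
Your proof is correct and follows essentially the same approach as the paper: compute $E[|\ker F|]=2-q^{-n}$ from the SCC-good condition (the paper isolates this as a separate lemma), observe via Lagrange that $|\ker F|\in\{1,p,p^2,\ldots\}$, and bound the expectation below by $P\{|\ker F|=1\}+p(1-P\{|\ker F|=1\})$ to solve for the desired inequality. Your detour through $G=\tilde F$ is harmless but unnecessary, since $|\ker F|=|\ker\tilde F|$ pointwise and the paper simply works with $F$ directly.
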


An immediate consequence follows.

\begin{corollary}\label{co:KernelOfSuperGoodLinearCodes}
Let $\mathcal{X}$ be an elementary abelian $p$-group for some prime
$p>2$.
Then there exists a positive constant $c(|\mathcal{X}|)$ (which may be
taken as $c(|\mathcal{X}|)= 1-\frac{1}{p-1}$) such that
$\pr\{|\ker F|=1\}\geq c(|\mathcal{X}|)$ for every SCC-good random
linear encoder $F: \mathcal{X}^n \to \mathcal{X}^n$.
\end{corollary}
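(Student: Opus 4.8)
The plan is to deduce the corollary directly from Theorem~\ref{th:KernelOfSuperGoodLinearCodes}, so the only genuine task is to check that the lower bound supplied by that theorem stays uniformly away from zero once $p > 2$. Write $|\mathcal{X}| = q = p^r$. Theorem~\ref{th:KernelOfSuperGoodLinearCodes} gives, for every SCC-good random linear code $F:\mathcal{X}^n \to \mathcal{X}^n$ and every $n \geq 1$,
\[
P\{|\ker F| = 1\} \geq \frac{p - 2 + q^{-n}}{p - 1}.
\]
First I would discard the strictly positive term $q^{-n}$ to obtain the weaker but $n$-independent estimate $P\{|\ker F| = 1\} \geq \frac{p-2}{p-1} = 1 - \frac{1}{p-1}$.

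Next I would verify that this quantity really is a positive constant under the hypothesis $p > 2$: then $p - 1 \geq 2$, hence $\frac{1}{p-1} \leq \frac12$, so that $c(|\mathcal{X}|) \eqdef 1 - \frac{1}{p-1} \geq \frac12 > 0$. Since $c(|\mathcal{X}|)$ depends only on $p$ (equivalently, only on $|\mathcal{X}|$), and since the inequality $P\{|\ker F| = 1\} \geq c(|\mathcal{X}|)$ then holds for all $n$ and all SCC-good $F:\mathcal{X}^n \to \mathcal{X}^n$, this is precisely the assertion of the corollary with the stated value of the constant.

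There is essentially no obstacle here: all the substance lives in Theorem~\ref{th:KernelOfSuperGoodLinearCodes}, and the corollary merely records that, for $p > 2$, the bound there does not decay with $n$. It is worth noting (though it needs no proof) that the hypothesis $p > 2$ cannot be dropped: when $p = 2$ the bound of Theorem~\ref{th:KernelOfSuperGoodLinearCodes} collapses to $\frac{q^{-n}}{p-1} = q^{-n}$, which tends to $0$, so no $n$-uniform positive lower bound of this form is available in characteristic $2$.
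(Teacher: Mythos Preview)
Your proposal is correct and matches the paper's approach exactly: the paper states that the corollary is ``an immediate consequence'' of Theorem~\ref{th:KernelOfSuperGoodLinearCodes} and omits the proof, and your argument---dropping the $q^{-n}$ term from that theorem's bound to obtain the $n$-independent constant $1-\frac{1}{p-1}$, which is positive for $p>2$---is precisely that immediate consequence.
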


However, the conclusion of
Corollary~\ref{co:KernelOfSuperGoodLinearCodes} does not hold for
$p=2$.

\begin{proposition}\label{pr:KernelOfSuperGoodLinearCodes}
If $\mathcal{X}$ is an elementary abelian $2$-group, there exists a
sequence $\{F_n\}_{n=1}^\infty$ of SCC-good random linear encoders
$F_n:\mathcal{X}^n\to\mathcal{X}^n$ such that
$\lim_{n\to\infty} \pr\{|\ker F_n|=1\}=0$.
\end{proposition}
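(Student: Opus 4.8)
The plan is to exhibit an explicit sequence $\{F_n\}$ in which each $F_n$ is uniformly distributed over a small family of linear maps, generalizing the code $\mathcal{C}_3$ of the Example. An elementary abelian $2$-group of order $q=2^r$ is isomorphic, as an abelian group, to the additive group of $\field{q}$, so that $\mathcal{X}^n$ is isomorphic to the additive group of $\field{q^n}$; we identify $\mathcal{X}^n$ with $\field{q^n}$ accordingly. For $\lambda\in\field{q^n}$ define $\phi_\lambda:\field{q^n}\to\field{q^n}$ by $\phi_\lambda(v)\eqdef v^2+\lambda v$, and let $F_n$ be the random linear code uniformly distributed over $\{\phi_\lambda;\lambda\in\field{q^n}\}$ (the $\phi_\lambda$ being pairwise distinct, this amounts to choosing $\lambda$ uniformly in $\field{q^n}$). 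After fixing a $\field{q}$-basis of $\field{q^n}$, the maps $\seq{v}\mapsto\lambda\seq{v}$ ($\lambda\in\field{q^n}$) form the $(n,n,1)$ Gabidulin code of Section~\ref{subsec:MRDLSCC}, so $\{\phi_\lambda\}$ is the translate of that Gabidulin code by the $\field{2}$-Frobenius $v\mapsto v^2$ of $\field{q^n}$; for $q=2$ this recovers a construction of the $\mathcal{C}_3$-type.

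The verification then consists of three routine steps. (i) Each $\phi_\lambda$ is a legitimate linear code $\mathcal{X}^n\to\mathcal{X}^n$: the map $\seq{v}\mapsto\lambda\seq{v}$ is additive and, since the characteristic is $2$, so is $v\mapsto v^2$; hence $\phi_\lambda$ is a group endomorphism of $(\field{q^n},+)$. (It need not be $\field{q}$-linear when $r>1$, but only additivity is required of a linear code over the alphabet $\mathcal{X}$.) (ii) $F_n$ is SCC-good: I would verify the stronger property \eqref{eq:SuperGoodBeforeSymmLinearCodes}, which entails \eqref{eq:SuperGoodLinearCodes2} because the uniform permutations in $\tilde{F}_n=\Sigma_n\circ F_n\circ\Sigma_n$ send nonzero vectors to nonzero vectors. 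For fixed $\seq{x}\in\mathcal{X}^n\setminus\{0^n\}$, i.e.\ $\xi\in\field{q^n}\setminus\{0\}$, the map $\lambda\mapsto\lambda\xi$ is a bijection of $\field{q^n}$, so $\phi_\lambda(\xi)=\xi^2+\lambda\xi$ is uniformly distributed over $\field{q^n}$ when $\lambda$ is; hence $P\{F_n(\seq{x})=\seq{y}\}=|\mathcal{X}|^{-n}$ for all $\seq{y}$, which is \eqref{eq:SuperGoodBeforeSymmLinearCodes}. (This is the coset analogue of Theorem~\ref{th:MRDLSCC} and uses the same property of Gabidulin codes as Proposition~\ref{pr:MRDLSCC}.) (iii) Kernel count: $\ker\phi_\lambda=\{v\in\field{q^n}:v(v+\lambda)=0\}$, which is $\{0,\lambda\}$ when $\lambda\neq 0$ (a $2$-element set, since $\field{q^n}$ is a field and $-\lambda=\lambda$) and $\{0\}$ when $\lambda=0$. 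Thus $|\ker F_n|=1$ if and only if $\lambda=0$, and therefore
\[
P\{|\ker F_n|=1\}=P\{\lambda=0\}=q^{-n},
\]
which tends to $0$ as $n\to\infty$, as required.

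The only non-mechanical point — the step I would flag as the crux — is the choice of the translating map. Because a linear code over $\mathcal{X}=\field{2^r}$ is merely required to be additive ($\field{2}$-linear), we are free to translate the Gabidulin code by the $\field{2}$-Frobenius $v\mapsto v^2$, and it is precisely this translate that makes \emph{every} nonzero member of the family singular. Had one insisted on a $\field{q}$-linear translate such as $v\mapsto v^q+\lambda v$, staying inside $\field{q}^{n\times n}$, a brief computation with the power map $v\mapsto v^{q-1}$ on $\field{q^n}\setminus\{0\}$ shows that the singular members are exactly those with $\lambda$ a $(q-1)$-th power, giving $P\{|\ker F_n|=1\}\to\frac{q-2}{q-1}$, which is positive for $q>2$; so the construction must genuinely leave the $\field{q}$-linear world. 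In particular the sequence above shows that the bound of Theorem~\ref{th:KernelOfSuperGoodLinearCodes} is attained with equality (at $q^{-n}$) for every $n$, and that the positivity conclusion of Corollary~\ref{co:KernelOfSuperGoodLinearCodes} genuinely fails when $p=2$.
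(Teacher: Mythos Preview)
Your proof is correct and takes a genuinely different route from the paper. The paper identifies $\mathcal{X}^n$ with $\field{2}^{ns}$ (writing $|\mathcal{X}|=2^s$) and takes $F_n$ uniform over a \emph{linear} $(ns,ns,2)$ Gabidulin code $\mathcal{C}\subseteq\field{2}^{ns\times ns}$ of size $2^{2ns}$, invoking Theorem~\ref{th:MRDLSCC} for SCC-goodness; since the nonzero matrices in $\mathcal{C}$ have rank $ns$ or $ns-1$, the expected-kernel identity (Lemma~\ref{le:KernelOfSuperGoodLinearCodes1}) forces $P\{|\ker F_n|=1\}\to 0$. Your construction uses instead a \emph{coset} of the rank-$1$ Gabidulin code --- the Frobenius translate $\{v\mapsto v^2+\lambda v:\lambda\in\field{q^n}\}$ --- of minimal support size $q^n=2^{rn}$, and verifies both SCC-goodness (before symmetrization) and the kernel count directly from the factorization $v(v+\lambda)=0$, without appeal to Theorem~\ref{th:MRDLSCC} or Lemma~\ref{le:KernelOfSuperGoodLinearCodes1}. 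Your argument is more self-contained, has the smallest possible support, and in fact attains the lower bound of Theorem~\ref{th:KernelOfSuperGoodLinearCodes} exactly; the price is that for $r>1$ the maps $\phi_\lambda$ are only $\field{2}$-linear rather than $\field{q}$-linear --- a point you correctly identify as the crux, and your side computation with $v\mapsto v^q+\lambda v$ confirms it is essential. The paper's version, by contrast, stays within the $\field{2}$-matrix framework already developed and simply reuses Theorem~\ref{th:MRDLSCC} off the shelf.
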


With the preparations above, let us investigate the problem of
constructing $\delta$-asymptotically good LSCEs which are
SC-equivalent (resp., CC-equivalent) to given $\delta$-asymptotically
good LSEs (resp., LCEs).
The next theorem provides a method for constructing SC-equivalent or
CC-equivalent linear encoders.

\begin{theorem}\label{th:ConstructionOfEquivalentCodes}
Let $F: \field{q}^n \to \field{q}^m$ be a random linear encoder.
The linear encoder $G_1 \eqdef \rlccode{m,m} \circ F$ satisfies
\begin{equation}
\pr\{\ker G_1 = \ker F\} > K_q \label{eq:SCEquivalentProbability}
\end{equation}
and
\begin{equation}
\avS_{G_1}(Q|P)
\le \spec_{\field{q}^m}(Q) + 1\{Q = P_{0^{m}}\} \avS_F(P_{0^{m}}|P)
\quad
\forall P \in \mathcal{P}_n^*, Q \in \mathcal{P}_m.
\label{eq:SCEquivalentCodeSpectrum}
\end{equation}
The linear encoder $G_2 \eqdef F \circ \rlccode{n,n}$ satisfies
\begin{equation}
\pr\{G_2(\field{q}^n) = F(\field{q}^n)\} > K_q
\label{eq:CCEquivalentProbability}
\end{equation}
and
\begin{equation}
\avS_{G_2}(Q|P) = \avS_{F(\field{q}^n)}(Q)
\quad\forall P \in \mathcal{P}_n^*, Q \in \mathcal{P}_m.
\label{eq:CCEquivalentCodeSpectrum}
\end{equation}
\end{theorem}

Based on Theorem~\ref{th:ConstructionOfEquivalentCodes}, we have the
following answer to the problem:

\begin{theorem}\label{th:ConstructionOfGoodEquivalentCodes}
Let $\seqx{f} = \{f_k\}_{k=1}^\infty$ be a sequence of linear encoders
$f_k: \field{q}^{n_k} \to \field{q}^{m_k}$.
If $\seqx{f}$ is a sequence of $\delta$-asymptotically good LSEs such
that $R_c(\seqx{f}) = \ln q$, then there exists a sequence
$\{g_{1,k}\}_{k=1}^\infty$ of $\delta$-asymptotically good LSCEs
$g_{1,k}: \field{q}^{n_k} \to \field{q}^{m_k}$ such that $g_{1,k}$ is
SC-equivalent to $f_k$ for each $k \in \pintegers$.
Analogously, if $\seqx{f}$ is a sequence of $\delta$-asymptotically
good injective LCEs, then there exists a sequence
$\{g_{2,k}\}_{k=1}^\infty$ of $\delta/R(\seqx{f})$-asymptotically good
injective LSCEs $g_{2,k}: \field{q}^{n_k} \to \field{q}^{m_k}$ such
that $g_{2,k}$ is CC-equivalent to $f_k$ for each $k \in \pintegers$.
\end{theorem}

\omitted{%
Theorem~\ref{th:ConstructionOfGoodEquivalentCodes} is a fundamental
result, which not only claims the existence of SC-equivalent (or
CC-equivalent) $\delta$-asymptotically good LSCEs but also paves the
way for constructing such good LSCEs by concatenating rate-$1$ linear
encoders.
Since rate-$1$ linear codes (e.g., the ``accumulate'' code) are
frequently used to construct good LCEs (e.g.,
\cite{JSCC:Divsalar199809,JSCC:Pfister200306,JSCC:Abbasfar200704}), we
believe that finding good rate-$1$ LSCEs is an issue deserving further
consideration.
}

The proof of Corollary~\ref{co:KernelOfSuperGoodLinearCodes} are
omitted, while the proofs of the other results are given in
Appendix~\ref{subsec:ProofOfConstructingGoodLSCCGPA}.

\subsection{A General Scheme for Constructing Good LSCEs}
\label{subsec:ConstructingGoodLSCCGPB}

Theorems~\ref{th:MRDLSCC} and
\ref{th:ConstructionOfGoodEquivalentCodes} do give possible ways for
constructing asymptotically good LSCEs.
However, such constructions are somewhat difficult to implement in
practice, because the generator matrices of $\rlccode{n,n}$ and random
linear encoders derived from Gabidulin codes are not sparse.
Thus, our next question is how to construct $\delta$-asymptotically
good LSCEs based on sparse matrices so that known iterative decoding
procedures have low complexity.
For such purposes, in this subsection, we shall present a general
scheme for constructing $\delta$-asymptotically good LSCEs.

Let $\seqx{F} = \{F_k\}_{k=1}^\infty$ be a sequence of random linear
encoders $F_k: \field{q}^{n_k} \to \field{q}^{m_k}$.
We say that $\seqx{F}$ is $\delta$-asymptotically SCC-good relative to
the sequence $\{A_k\}_{k=1}^\infty$ of subsets
$A_k \subseteq \mathcal{P}_{n_k}^*$ if
\[
\limsup_{k \to \infty} \max_{P \in A_k, Q \in \mathcal{P}_{m_k}}
 \frac{1}{n_k} \ln \alpha_{F_k}(P, Q)
\le \delta.
\]
Clearly, this is a generalization of $\delta$-asymptotically good
LSCEs, and may be regarded as an approximate version of
$\delta$-asymptotically good LSCEs when $A_k$ is a proper subset of
$\mathcal{P}_{n_k}^*$.
The next theorem shows that $\delta$-asymptotically good LSCEs based
on these linear encoders may be constructed by serial concatenations.

\begin{theorem}\label{th:ConstructionOfGoodLSCC}
Let $\{G_k\}_{k=1}^\infty$ be a sequence of random linear encoders
$G_k: \field{q}^{m_k} \to \field{q}^{l_k}$ that is
$\delta$-asymptotically SCC-good relative to the sequence
$\{A_k\}_{k=1}^\infty$ of sets $A_k \subseteq \mathcal{P}_{m_k}^*$.
If there is a sequence $\seqx{F} = \{F_k\}_{k=1}^\infty$ of random
linear encoders $F_k: \field{q}^{n_k} \to \field{q}^{m_k}$ such that
\begin{equation}\label{eq:ConstructionOfGoodLSCC1}
F_k(\field{q}^{n_k} \backslash \{0^{n_k}\})
\subseteq \bigcup_{P \in A_k} \mathcal{T}_{P}^{m_k},
\end{equation}
then
\[
\limsup_{k \to \infty} \rho(G_k \circ \Sigma_{m_k} \circ F_k)
\le \frac{\delta}{\underline{R}(\seqx{F})}.
\]
\end{theorem}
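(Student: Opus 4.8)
The plan is to rewrite $\rho(G_k\circ\Sigma_{m_k}\circ F_k)$ in terms of the quantity that the hypothesis on $\{G_k\}$ controls, namely $\max_{P\in A_k,\,Q}\tfrac1{m_k}\ln\alpha(G_k)(P,Q)$, by using the ``serial concatenation behaves like composition of conditional distributions'' principle (Proposition~\ref{pr:SpectrumOfSeriallyConcatenatedFunctions}). The bridge between $\rho$ (via $\alpha$) and forward conditional spectra is the identity
\[
\alpha(H)(P,Q)=\frac{E[S_{\mathcal{Z}|\mathcal{X}}(H)(Q|P)]}{S_{\mathcal{Z}}(\mathcal{Z}^l)(Q)},\qquad P\in\mathcal{P}_n(\mathcal{X}),\ Q\in\mathcal{P}_l(\mathcal{Z}),
\]
valid for every random linear code $H:\mathcal{X}^n\to\mathcal{Z}^l$; it is immediate from the definition \eqref{eq:DefinitionOfAlpha} of $\alpha$ once one notes that the $\mathcal{X}$-marginal of $\mathrm{rl}(H)$ is the deterministic spectrum $S_{\mathcal{X}}(\mathcal{X}^n)$ and that $S_{\mathcal{X}\mathcal{Z}}(\mathcal{X}^n\times\mathcal{Z}^l)=S_{\mathcal{X}}(\mathcal{X}^n)\cdot S_{\mathcal{Z}}(\mathcal{Z}^l)$ by Proposition~\ref{pr:SpectrumOfSets}.

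Applying this identity to $H=G_k\circ\Sigma_{m_k}\circ F_k$ and then Proposition~\ref{pr:SpectrumOfSeriallyConcatenatedFunctions} (with $F_k$, $G_k$, $\Sigma_{m_k}$ mutually independent, as required there),
\[
\alpha(G_k\circ\Sigma_{m_k}\circ F_k)(P,Q)=\frac{1}{S_{\mathcal{Z}}(\mathcal{Z}^{l_k})(Q)}\sum_{P'\in\mathcal{P}_{m_k}(\mathcal{Y})}E[S_{\mathcal{Y}|\mathcal{X}}(F_k)(P'|P)]\,E[S_{\mathcal{Z}|\mathcal{Y}}(G_k)(Q|P')].
\]
Fix $P\in\mathcal{P}_{n_k}(\mathcal{X})\setminus\{P_{0^{n_k}}\}$ and $Q\in\mathcal{P}_{l_k}(\mathcal{Z})$. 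Every sequence of type $P$ is nonzero, so by \eqref{eq:ConstructionOfGoodLSCC1} its image under $F_k$ has type in $A_k$ for every realization of $F_k$; hence $E[S_{\mathcal{Y}|\mathcal{X}}(F_k)(P'|P)]=0$ for $P'\notin A_k$, and the sum is supported on $P'\in A_k$. Feeding in $E[S_{\mathcal{Z}|\mathcal{Y}}(G_k)(Q|P')]=\alpha(G_k)(P',Q)\,S_{\mathcal{Z}}(\mathcal{Z}^{l_k})(Q)$ (the same identity, now for $G_k$) gives
\[
\alpha(G_k\circ\Sigma_{m_k}\circ F_k)(P,Q)=\sum_{P'\in A_k}E[S_{\mathcal{Y}|\mathcal{X}}(F_k)(P'|P)]\,\alpha(G_k)(P',Q).
\]

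Since the numbers $\alpha(G_k)(P',Q)$ are nonnegative and $\sum_{P'\in A_k}E[S_{\mathcal{Y}|\mathcal{X}}(F_k)(P'|P)]\leq1$, the right-hand side is at most $\beta_k\eqdef\max_{P'\in A_k,\,Q'\in\mathcal{P}_{l_k}(\mathcal{Z})}\alpha(G_k)(P',Q')$. As $P,Q$ were arbitrary, $\rho(G_k\circ\Sigma_{m_k}\circ F_k)\leq\frac{1}{n_k}\ln\beta_k=\frac{b_k}{R(F_k)}$, where $b_k\eqdef\max_{P'\in A_k,\,Q'}\frac{1}{m_k}\ln\alpha(G_k)(P',Q')$ and $R(F_k)=n_k/m_k$.

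Finally I would pass to the limit. The hypothesis that $\{G_k\}$ is $\delta$-asymptotically SCC-good relative to $\{A_k\}$ is precisely $\limsup_k b_k\leq\delta$; combined with $\limsup_k\frac{1}{R(F_k)}=\frac{1}{\underline{R}(\seqx{F})}$ and (assuming $\delta\geq0$ and $\underline{R}(\seqx{F})>0$, the remaining cases being trivial or degenerate) the elementary bound $\limsup_k(x_ky_k)\leq(\limsup_k x_k)(\limsup_k y_k)$ for nonnegative sequences — applied to $x_k=\max\{b_k,0\}$ (so $\limsup_k x_k\leq\delta$) and $y_k=1/R(F_k)$, together with $\rho(G_k\circ\Sigma_{m_k}\circ F_k)\leq b_k/R(F_k)\leq x_ky_k$ — this yields $\limsup_k\rho(G_k\circ\Sigma_{m_k}\circ F_k)\leq\delta/\underline{R}(\seqx{F})$, as claimed. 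I expect the substantive point to be the algebraic collapse of $\alpha$ of the concatenation onto the weighted average of $\alpha(G_k)$-values over $A_k$, where \eqref{eq:ConstructionOfGoodLSCC1} and Proposition~\ref{pr:SpectrumOfSeriallyConcatenatedFunctions} do the real work; the concluding $\limsup$ manipulation is routine once $b_k$ is truncated below at $0$.
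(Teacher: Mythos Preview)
Your proof is correct and follows essentially the same route as the paper's: both invoke Proposition~\ref{pr:SpectrumOfSeriallyConcatenatedFunctions} to express $\alpha(G_k\circ\Sigma_{m_k}\circ F_k)(P,Q)$ as a convex combination (over $P'\in A_k$, thanks to \eqref{eq:ConstructionOfGoodLSCC1}) of $\alpha(G_k)(P',Q)$, and then pass to the limit. The only cosmetic difference is that the paper fixes $\epsilon>0$ at the outset and bounds $E[S_{\mathcal{Z}|\mathcal{Y}}(G_k)(Q|P')]\le e^{m_k(\delta+\epsilon)}S_{\mathcal{Z}}(\mathcal{Z}^{l_k})(Q)$ directly, whereas you carry the exact quantity $b_k$ through and apply the product-of-$\limsup$ inequality at the end; note also that your truncation $x_k=\max\{b_k,0\}$ is in fact unnecessary, since $\sum_{Q'}\alpha(G_k)(P',Q')S_{\mathcal{Z}}(\mathcal{Z}^{l_k})(Q')=1$ forces $\beta_k\ge 1$ and hence $b_k\ge 0$.
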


\begin{remark}\label{re:ConstructionOfGoodLSCC}
Take, for example,
\begin{equation}\label{eq:ApproximateGoodSet}
A_k = \{P \in \mathcal{P}_{m_k} : P(0) \in [0, 1-\gamma]\}
\end{equation}
for some $\gamma\in(0,1)$.
Then Theorem~\ref{th:ConstructionOfGoodLSCC} shows that we can
construct asymptotically good LSCEs using a serial concatenation
scheme, where the inner encoder is approximately
$\delta$-asymptotically SCC-good and the outer code has large minimum
distance.
As we know (see
\cite{JSCC:Gallager196300, JSCC:Bennatan200403, JSCC:Como2008,
 JSCC:Yang201111}, etc.),
there exist good LDPC codes over finite fields such that
\eqref{eq:ConstructionOfGoodLSCC1} is met for an appropriate $\gamma$,
so the problem is reduced to finding a sequence of linear encoders
that is $\delta$-asymptotically good relative to a sequence of sets
such as \eqref{eq:ApproximateGoodSet}.
In the next section, we shall find such candidates in a family of
encoders called LDGM encoders.
\end{remark}

Since an injective linear encoder $F_k$ always satisfies condition
\eqref{eq:ConstructionOfGoodLSCC1} with $A_k = \mathcal{P}_{n_k}^*$,
we immediately obtain the following corollary from
Theorem~\ref{th:ConstructionOfGoodLSCC}.

\begin{corollary}\label{co:ConstructionOfGoodLSCCbyInjectiveMapping}
Let $\{G_k\}_{k=1}^\infty$ be a sequence of $\delta$-asymptotically
good random LSCEs $G_k: \field{q}^{m_k} \to \field{q}^{l_k}$ and
$\seqx{F} = \{F_k\}_{k=1}^\infty$ a sequence of injective random
linear encoders $F_k: \field{q}^{n_k} \to \field{q}^{m_k}$.
Then
\[
\limsup_{k \to \infty} \rho(G_k \circ \Sigma_{m_k} \circ F_k)
\le \frac{\delta}{\underline{R}(\seqx{F})}.
\]
\end{corollary}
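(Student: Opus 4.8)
The plan is to read Corollary~\ref{co:ConstructionOfGoodLSCCbyInjectiveMapping} as the special case of Theorem~\ref{th:ConstructionOfGoodLSCC} in which the auxiliary sets are taken as large as possible, namely $A_k \eqdef \mathcal{P}_{m_k}(\mathcal{Y}) \backslash \{P_{0^{m_k}}\}$ for every $k$. With this choice only two hypotheses of Theorem~\ref{th:ConstructionOfGoodLSCC} need to be checked: that $\{G_k\}$ is $\delta$-asymptotically SCC-good relative to $\{A_k\}$, and that $\{F_k\}$ satisfies the range condition \eqref{eq:ConstructionOfGoodLSCC1}.

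First I would observe that for $G_k\colon\mathcal{Y}^{m_k}\to\mathcal{Z}^{l_k}$, being $\delta$-asymptotically SCC-good relative to $\{A_k\}$ with $A_k=\mathcal{P}_{m_k}(\mathcal{Y})\backslash\{P_{0^{m_k}}\}$ is by definition the statement $\limsup_{k\to\infty}\max_{P\in\mathcal{P}_{m_k}(\mathcal{Y})\backslash\{P_{0^{m_k}}\},\,Q\in\mathcal{P}_{l_k}(\mathcal{Z})}\frac{1}{m_k}\ln\alpha(G_k)(P,Q)\le\delta$, which by \eqref{eq:DefinitionOfRho} is exactly $\limsup_{k\to\infty}\rho(G_k)\le\delta$, i.e.\ the joint-spectrum condition \eqref{eq:DefinitionOfAsympGoodLSCC} defining a $\delta$-asymptotically good LSCC. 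Hence the hypothesis on $\{G_k\}$ supplies this for free.

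Second I would verify \eqref{eq:ConstructionOfGoodLSCC1}. Since each realization of $F_k$ is an injective homomorphism, $\ker F_k=\{0^{n_k}\}$, so $F_k(\seq{x})\ne 0^{m_k}$ for every $\seq{x}\in\mathcal{X}^{n_k}\backslash\{0^{n_k}\}$. A nonzero vector $\seq{y}\in\mathcal{Y}^{m_k}$ has at least one nonzero coordinate, so $P_{\seq{y}}(0)<1$ and therefore $P_{\seq{y}}\ne P_{0^{m_k}}$, i.e.\ $P_{\seq{y}}\in A_k$. Consequently $F_k(\mathcal{X}^{n_k}\backslash\{0^{n_k}\})\subseteq\bigcup_{P\in A_k}\mathcal{T}_P^{m_k}(\mathcal{Y})$, which is precisely \eqref{eq:ConstructionOfGoodLSCC1}. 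Applying Theorem~\ref{th:ConstructionOfGoodLSCC} with these $A_k$ then yields $\limsup_{k\to\infty}\rho(G_k\circ\Sigma_{m_k}\circ F_k)\le\delta/\underline{R}(\seqx{F})$, as claimed.

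I do not expect a genuine obstacle: the whole content is the elementary remark that injectivity forces each nonzero input to have a nonzero image, hence a non-degenerate type, so the ``good distance'' hypothesis of Theorem~\ref{th:ConstructionOfGoodLSCC} becomes vacuous. The only thing to be careful about is matching the data of the corollary to the roles in Theorem~\ref{th:ConstructionOfGoodLSCC} (block lengths $m_k$ for the domain of $G_k$ and $n_k$ for the domain of $F_k$, alphabets $\mathcal{Y}$ and $\mathcal{Z}$), and noting that the assertion is vacuously true when $\underline{R}(\seqx{F})=0$, so no positivity assumption on the rate is needed.
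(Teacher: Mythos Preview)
Your proposal is correct and is exactly the argument the paper has in mind: the text preceding the corollary already notes that injectivity of $F_k$ makes \eqref{eq:ConstructionOfGoodLSCC1} hold with $A_k=\mathcal{P}_{m_k}(\mathcal{Y})\backslash\{P_{0^{m_k}}\}$, whence Theorem~\ref{th:ConstructionOfGoodLSCC} applies directly, and the paper accordingly omits the proof. Your write-up simply spells out those two verifications in full.
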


In the same vein, we have:

\begin{proposition}
\label{pr:ConstructionOfGoodLSCCbySurjectiveMapping}
Let $\{F_k\}_{k=1}^\infty$ be a sequence of $\delta$-asymptotically
good random LSCEs $F_k: \field{q}^{n_k} \to \field{q}^{m_k}$ and
$\{G_k\}_{k=1}^\infty$ a sequence of surjective random linear
encoders $G_k: \field{q}^{m_k} \to \field{q}^{l_k}$.
Then
\[
\limsup_{k \to \infty} \rho(G_k \circ \Sigma_{m_k} \circ F_k)
\le \delta.
\]
\end{proposition}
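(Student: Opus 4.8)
The plan is to estimate $\rho(H_k)$ for $H_k \eqdef G_k \circ \Sigma_{m_k} \circ F_k$ by combining the serial-concatenation identity of Proposition~\ref{pr:SpectrumOfSeriallyConcatenatedFunctions} with the elementary fact that a \emph{surjective} linear code $g: \mathcal{Y}^{m_k} \to \mathcal{Z}^{l_k}$ has trivial image spectrum, namely $S_{\mathcal{Z}}(g(\mathcal{Y}^{m_k})) = S_{\mathcal{Z}}(\mathcal{Z}^{l_k})$; this triviality is exactly what makes any goodness hypothesis on $G_k$ superfluous. Since $\rho(H_k)$ depends only on $\alpha(H_k) = \alpha(\mathrm{rl}(H_k))$, and since \eqref{eq:DefinitionOfAlpha}, Proposition~\ref{pr:SpectrumOfSets}, and $S_{\mathcal{X}}(\mathrm{rl}(H_k)) = S_{\mathcal{X}}(\mathcal{X}^{n_k})$ together give $\alpha(H_k)(O,Q) = E[S_{\mathcal{Z}|\mathcal{X}}(H_k)(Q|O)] / S_{\mathcal{Z}}(\mathcal{Z}^{l_k})(Q)$, it suffices to show that for every $\epsilon > 0$ and all large $k$, $E[S_{\mathcal{Z}|\mathcal{X}}(H_k)(Q|O)] \le e^{n_k(\delta+\epsilon)}\, S_{\mathcal{Z}}(\mathcal{Z}^{l_k})(Q)$ for all $O \in \mathcal{P}_{n_k}(\mathcal{X}) \setminus \{P_{0^{n_k}}\}$ and all $Q \in \mathcal{P}_{l_k}(\mathcal{Z})$; then $\rho(H_k) \le \delta + \epsilon$ for large $k$, and letting $\epsilon \downarrow 0$ finishes the proof.

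First I would invoke Proposition~\ref{pr:SpectrumOfSeriallyConcatenatedFunctions} (with $F_k$, $\Sigma_{m_k}$, $G_k$ independent, as usual) to write $E[S_{\mathcal{Z}|\mathcal{X}}(H_k)(Q|O)] = \sum_{P \in \mathcal{P}_{m_k}(\mathcal{Y})} E[S_{\mathcal{Y}|\mathcal{X}}(F_k)(P|O)]\, E[S_{\mathcal{Z}|\mathcal{Y}}(G_k)(Q|P)]$. Next, from the definitions of the forward conditional and marginal spectra, of $\alpha$ in \eqref{eq:DefinitionOfAlpha}, and from Proposition~\ref{pr:SpectrumOfSets}, one verifies $E[S_{\mathcal{Y}|\mathcal{X}}(F_k)(P|O)] = \alpha(F_k)(O,P)\, S_{\mathcal{Y}}(\mathcal{Y}^{m_k})(P)$. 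Since $\seqx{F}$ consists of $\delta$-asymptotically good LSCCs, i.e.\ $\limsup_k \rho(F_k) \le \delta$, for every $\epsilon > 0$ and all large $k$ we have $\alpha(F_k)(O,P) \le e^{n_k(\delta+\epsilon)}$ for every $O \ne P_{0^{n_k}}$ and every $P$; plugging this in and extracting the factor, $E[S_{\mathcal{Z}|\mathcal{X}}(H_k)(Q|O)] \le e^{n_k(\delta+\epsilon)} \sum_P S_{\mathcal{Y}}(\mathcal{Y}^{m_k})(P)\, E[S_{\mathcal{Z}|\mathcal{Y}}(G_k)(Q|P)]$. Finally I would evaluate the residual sum: for a fixed realization $g$ of $G_k$, using $S_{\mathcal{Y}}(\mathrm{rl}(g)) = S_{\mathcal{Y}}(\mathcal{Y}^{m_k})$ and $S_{\mathcal{Z}|\mathcal{Y}}(g)(Q|P) = S_{\mathcal{Y}\mathcal{Z}}(g)(P,Q)/S_{\mathcal{Y}}(g)(P)$ gives $\sum_P S_{\mathcal{Y}}(\mathcal{Y}^{m_k})(P)\, S_{\mathcal{Z}|\mathcal{Y}}(g)(Q|P) = \sum_P S_{\mathcal{Y}\mathcal{Z}}(g)(P,Q) = S_{\mathcal{Z}}(g)(Q) = S_{\mathcal{Z}}(g(\mathcal{Y}^{m_k}))(Q)$, the last equality because $g$ is a homomorphism; surjectivity of $G_k$ turns this into the deterministic quantity $S_{\mathcal{Z}}(\mathcal{Z}^{l_k})(Q)$, so $\sum_P S_{\mathcal{Y}}(\mathcal{Y}^{m_k})(P)\, E[S_{\mathcal{Z}|\mathcal{Y}}(G_k)(Q|P)] = S_{\mathcal{Z}}(\mathcal{Z}^{l_k})(Q)$, and the desired bound follows.

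I do not expect a genuine obstacle here; once the spectral identities are assembled, the estimate is essentially bookkeeping. The step that deserves care is the last one — recognizing that the inner sum over $\mathcal{Y}$-types telescopes into the expected image spectrum of $G_k$, and that surjectivity collapses that image spectrum onto $S_{\mathcal{Z}}(\mathcal{Z}^{l_k})$. This is the only place where the hypothesis on $G_k$ enters, and it is what allows us to dispense with both a goodness assumption on $G_k$ and any rate factor, in contrast with the injective case of Corollary~\ref{co:ConstructionOfGoodLSCCbyInjectiveMapping}, where the roles of the two maps are reversed and the bound picks up a factor $1/\underline{R}(\seqx{F})$. One harmless technicality: type-pairs $(O,P)$ with $E[S_{\mathcal{X}\mathcal{Y}}(F_k)(O,P)] = 0$ (so that $\alpha(F_k)(O,P) = 0$) contribute nothing to the sum and may be dropped before any logarithms are taken.
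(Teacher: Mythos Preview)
Your proof is correct and follows essentially the same approach as the paper's: apply Proposition~\ref{pr:SpectrumOfSeriallyConcatenatedFunctions} to the serial concatenation, use the $\delta$-asymptotic SCC-goodness of $F_k$ to bound $E[S_{\mathcal{Y}|\mathcal{X}}(F_k)(P|O)] \le e^{n_k(\delta+\epsilon)} S_{\mathcal{Y}}(\mathcal{Y}^{m_k})(P)$, and then collapse the residual sum $\sum_P S_{\mathcal{Y}}(\mathcal{Y}^{m_k})(P)\, E[S_{\mathcal{Z}|\mathcal{Y}}(G_k)(Q|P)] = E[S_{\mathcal{Z}}(G_k(\mathcal{Y}^{m_k}))(Q)] = S_{\mathcal{Z}}(\mathcal{Z}^{l_k})(Q)$ via surjectivity. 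The only cosmetic difference is that you unpack $\alpha(F_k)$ explicitly before bounding, whereas the paper states the conditional-spectrum bound directly; the substance is the same.
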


\omitted{%
The above two results tell us that any linear encoder, if serially
concatenated with an outer injective linear encoder or an inner
surjective linear encoder, will not have worse performance in terms of
condition \eqref{eq:DefinitionOfAsympGoodLSCC}.
Maybe, if we are lucky, some linear encoders with better joint spectra
can be constructed in this way.
Note that a nonsingular rate-$1$ linear encoder is both injective and
surjective, so adding rate-$1$ linear encoders into a serial
concatenation scheme is never a bad idea for constructing good LSCEs.
But certainly, the addition of rate-$1$ encoders may have a negative
impact on the decoding performance.
}

The proof of
Corollary~\ref{co:ConstructionOfGoodLSCCbyInjectiveMapping} is
omitted, while the proofs of the other results are given in
Appendix~\ref{subsec:ProofOfConstructingGoodLSCCGPB}.

\section{An Explicit Construction Based on Sparse Matrices}
\label{sec:ExplicitConstruction}

In light of Remark~\ref{re:ConstructionOfGoodLSCC}, we proceed to
investigate the joint spectra of regular low-density generator matrix
(LDGM) encoders.

First we define three basic linear encoders:
A \emph{single symbol repetition encoder}
$\repcode{c}: \field{q} \to \field{q}^c$ is given by
$x \mapsto (x, \ldots, x)$, where $c \in \pintegers$.
A \emph{single symbol check encoder}
$\chkcode{d}: \field{q}^d \to \field{q}$ is given by
$\seq{x} \mapsto \sum_{i=1}^d x_i$, where $d \in \pintegers$.
The sum can be abbreviated as $\seq{x}_{\oplus}$, and in the sequel,
we shall use this kind of abbreviation to denote the sum of all
components of a sequence.
A \emph{random multiplier encoder}
$\rmcode{}: \field{q} \to \field{q}$ is given by $x \mapsto Cx$ where
$C$ is uniformly distributed over $\fieldstar{q}$.

Let $c$, $n$, and $d$ be positive integers such that $d$ divides $cn$.
A random regular LDGM encoder
$\ldcode{c,d,n}: \mathbb{F}_q^{n} \to \mathbb{F}_q^{cn/d}$ is defined
by
\begin{equation}\label{eq:DefinitionOfLDGMCode}
\ldcode{c,d,n}
\eqdef \chkcode{d,cn/d} \circ \rmcode{cn} \circ \Sigma_{cn}
 \circ \repcode{c,n},
\end{equation}
where $\repcode{c,n} \eqdef \bigpprod_{i=1}^n \repcode{c}$,
$\chkcode{d,n} \eqdef \bigpprod_{i=1}^{n} \chkcode{d}$,
$\rmcode{n} \eqdef \bigpprod_{i=1}^n \rmcode{}$.

To calculate the joint spectrum of $\ldcode{c,d,n}$, we first need to
calculate the joint spectra of its constituent encoders.
We note that definition \eqref{eq:DefinitionOfLDGMCode} can be
rewritten as
\begin{IEEEeqnarray*}{rCl}
\ldcode{c,d,n}
&\eqvar{d} &\rchkcode{d,cn/d} \circ \Sigma_{cn} \circ \repcode{c,n}
 \IEEEnonumber\\
&\eqvar{d} &\chkcode{d,cn/d} \circ \Sigma_{cn} \circ \rrepcode{c,n},
\end{IEEEeqnarray*}
where the symbol $\eqvar{d}$ means that the random elements at both
sides have the same probability distribution, and
$\rrepcode{c,n} \eqdef \rmcode{cn} \circ \repcode{c,n}
 \circ \rmcode{n}$,
$\rchkcode{d,n} \eqdef \rmcode{n} \circ \chkcode{d,n}
 \circ \rmcode{dn}$.
Thus it suffices to calculate the joint spectra of $\repcode{c,n}$ and
$\rchkcode{d,n}$.

\begin{proposition}\label{pr:SpectrumOfREPCode}
\begin{equation}\label{eq:SpectrumOfREPCodeA1}
\gf_{\repcode{c}}(\seq{u}, \seq{v})
= \frac{1}{q} \sum_{a \in \field{q}} u_a v_a^c,
\end{equation}
\begin{equation}
\avG_{\rrepcode{c}}(\seq{u}, \seq{v})
= \frac{1}{q} u_{0} v_{0}^c
 + \frac{1}{q} (\seq{u}_{\oplus} - u_0)
 \left( \frac{\seq{v}_{\oplus} - v_0}{q-1} \right)^c,
 \label{eq:SpectrumOfREPCodeA2}
\end{equation}
\begin{equation}\label{eq:SpectrumOfREPCodeB}
\gf_{\repcode{c,n}}(\seq{u}, \seq{v})
= \sum_{P\in\mathcal{P}_n} \spec_{\field{q}^n}(P)
 \seq{u}^{nP} \seq{v}^{ncP},
\end{equation}
\begin{equation}\label{eq:SpectrumOfREPCodeC}
\spec_{\repcode{c,n}}(P,Q) = \spec_{\mathbb{F}_q^n}(P) 1\{P=Q\},
\end{equation}
\begin{equation}\label{eq:SpectrumOfREPCodeD}
\spec_{\repcode{c,n}}(Q|P) = 1\{Q=P\}.
\end{equation}
\end{proposition}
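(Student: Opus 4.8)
The plan is to prove \eqref{eq:SpectrumOfREPCodeA1} straight from the definition of the joint-spectrum generating function and then bootstrap the other four identities from it. For \eqref{eq:SpectrumOfREPCodeA1}, note that $\repcode{q,c}$ is deterministic with graph $\mathrm{rl}(\repcode{q,c}) = \{(x,(x,\dots,x)) : x \in \field{q}\}$, and that both $x$ and $(x,\dots,x)$ have the point-mass type at $x$. Substituting into $\mathcal{G}(f)(\seq{u},\seq{v}) = |\field{q}^n|^{-1}\sum_{\seq{x}}\seq{u}^{nP_{\seq{x}}}\seq{v}^{mP_{f(\seq{x})}}$ with $n=1$, $m=c$ then gives $\mathcal{G}(\repcode{q,c})(\seq{u},\seq{v}) = \frac1q\sum_{a\in\field{q}} u_a v_a^c$ immediately.

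For \eqref{eq:SpectrumOfREPCodeB}--\eqref{eq:SpectrumOfREPCodeD}, I would use that $\repcode{q,c,n} = \bigpprod_{i=1}^n\repcode{q,c}$, so applying Corollary~\ref{co:GeneratingFunctionOfFunctionParallelProduct} $n$ times yields $\mathcal{G}(\repcode{q,c,n})(\seq{u},\seq{v}) = \bigl(\frac1q\sum_{a}u_a v_a^c\bigr)^n$. Expanding this $n$-th power as a sum over $\seq{x}\in\field{q}^n$ and using $\prod_i u_{x_i} = \seq{u}^{nP_{\seq{x}}}$, $\prod_i v_{x_i}^c = \seq{v}^{ncP_{\seq{x}}}$, then grouping sequences by type with $|\{\seq{x} : P_{\seq{x}} = P\}| = {n \choose nP}$ and $S(\field{q}^n)(P) = {n \choose nP}q^{-n}$ (Proposition~\ref{pr:SpectrumOfSets}), gives \eqref{eq:SpectrumOfREPCodeB}. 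Reading off the coefficient of $\seq{u}^{nP}\seq{v}^{ncQ}$ in \eqref{eq:SpectrumOfREPCodeB} gives \eqref{eq:SpectrumOfREPCodeC} (equivalently: for $(\seq{x},\seq{y})\in\mathrm{rl}(\repcode{q,c,n})$ one has $P_{\seq{y}} = P_{\seq{x}}$, so the joint spectrum is supported on the diagonal and its value there is the input marginal); and since $\repcode{q,c,n}$ has domain $\field{q}^n$, its input marginal spectrum is $S(\field{q}^n)$, so dividing \eqref{eq:SpectrumOfREPCodeC} by $S(\field{q}^n)(P)$ gives \eqref{eq:SpectrumOfREPCodeD}.

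For \eqref{eq:SpectrumOfREPCodeA2}, I would write $\rrepcode{q,c} = \rmcode{q,c}\circ\repcode{q,c}\circ\rmcode{q}$ with $\rmcode{q,c} = \bigpprod_{j=1}^c\rmcode{q}$, and use the elementary graph identity $\mathrm{rl}(h\circ f\circ g) = (g^{-1}\pprod h)(\mathrm{rl}(f))$ together with $(\rmcode{q})^{-1}\eqvar{d}\rmcode{q}$ to get $\mathrm{rl}(\rrepcode{q,c})\eqvar{d}(\rmcode{q}\pprod\rmcode{q,c})(\mathrm{rl}(\repcode{q,c}))$. Proposition~\ref{pr:RenameGeneratingFunction} then transforms \eqref{eq:SpectrumOfREPCodeA1} into $E[\mathcal{G}(\rrepcode{q,c})(\seq{u},\seq{v})]$ via the substitutions $u_a\mapsto E[u_{Ca}]$ and $v_b\mapsto E[v_{Cb}]$ with $C$ uniform on $\fieldstar{q}$; since $E[u_{C\cdot 0}] = u_0$ and $E[u_{Ca}] = ((\seq{u})_\oplus - u_0)/(q-1)$ for $a\neq 0$ (and likewise for $v$), the $a=0$ term contributes $\frac1q u_0 v_0^c$ while the $q-1$ nonzero terms are identical and add up to $\frac1q((\seq{u})_\oplus - u_0)\bigl(((\seq{v})_\oplus - v_0)/(q-1)\bigr)^c$, which is \eqref{eq:SpectrumOfREPCodeA2}. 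As a self-contained alternative, one can instead compute $E[\mathcal{G}(\rrepcode{q,c})(\seq{u},\seq{v})] = \frac1q\sum_x u_x\,E\bigl[\prod_{j=1}^c v_{C_j C_0 x}\bigr]$ directly, conditioning on the input multiplier $C_0$ (which sends a nonzero $x$ to a uniform element of $\fieldstar{q}$, after which the $C_j C_0 x$ are i.i.d.\ uniform on $\fieldstar{q}$).

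I do not expect any genuinely hard step; everything reduces to manipulations already set up in Sections~\ref{subsec:NewResultsIIIpre}--\ref{subsec:NewResultsIII}. The one place meriting a little care is the derivation of \eqref{eq:SpectrumOfREPCodeA2}: one must check that the random-multiplier modules $\rmcode{q}$ (on the input) and $\rmcode{q,c} = \bigpprod_{j=1}^c\rmcode{q}$ (on the output) genuinely fit the ``two independent random bijective mappings'' hypothesis of Proposition~\ref{pr:RenameGeneratingFunction}, and that the passage $(\rmcode{q})^{-1}\eqvar{d}\rmcode{q}$ used to orient the input side is legitimate; the direct computation sketched above sidesteps this if one prefers.
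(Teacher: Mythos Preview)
Your proposal is correct and follows essentially the same route as the paper's proof: \eqref{eq:SpectrumOfREPCodeA1} by inspection, \eqref{eq:SpectrumOfREPCodeA2} via Proposition~\ref{pr:RenameGeneratingFunction} applied to \eqref{eq:SpectrumOfREPCodeA1}, \eqref{eq:SpectrumOfREPCodeB} from \eqref{eq:SpectrumOfREPCodeA1} and Corollary~\ref{co:GeneratingFunctionOfFunctionParallelProduct}, and \eqref{eq:SpectrumOfREPCodeC}--\eqref{eq:SpectrumOfREPCodeD} as immediate consequences. Your justification of the hypotheses of Proposition~\ref{pr:RenameGeneratingFunction} (the graph identity $\mathrm{rl}(h\circ f\circ g)=(g^{-1}\pprod h)(\mathrm{rl}(f))$ and $(\rmcode{q})^{-1}\eqvar{d}\rmcode{q}$) is exactly the step the paper suppresses when it writes ``together with Proposition~\ref{pr:RenameGeneratingFunction}''; it is correct and worth keeping.
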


\begin{proposition}\label{pr:SpectrumOfCHKCode}
\begin{equation}
\avG_{\rchkcode{d}}(\seq{u}, \seq{v})
= \frac{1}{q^{d+1}} \biggl[ (\seq{u}_\oplus)^d \seq{v}_\oplus
 + \biggl( \frac{qu_0 - \seq{u}_\oplus}{q-1} \biggr)^d
 (qv_0 - \seq{v}_\oplus) \biggr],
 \label{eq:SpectrumOfCHKCodeA}
\end{equation}
\begin{equation}\label{eq:SpectrumOfCHKCodeB}
\avS_{\rchkcode{d,n}}(P, Q)
= [\seq{u}^{dnP}]\left(g_{d,n}^{(1)}(\seq{u}, Q)\right),
\end{equation}
\begin{equation}
\avS_{\rchkcode{d,n}}(P, Q)
\le g_{d,n}^{(2)}(O,P,Q)
\quad
\forall O \in \mathcal{P}_{dn}\;\mbox{with $P\ll O$},
\label{eq:SpectrumOfCHKCodeC}
\end{equation}
\begin{equation}\label{eq:SpectrumOfCHKCodeD}
\frac{1}{n} \ln \alpha_{\rchkcode{d,n}}(P, Q)
\le \delta_{d}(P(0),Q(0)) + d\Delta_{dn}(P),
\end{equation}
where $[\seq{u}^{\seq{n}}](f)$ denotes the coefficient of monomial
$\seq{u}^{\seq{n}}$ in the polynomial $f$,
\begin{IEEEeqnarray}{rCl}
g_{d,n}^{(1)}(\seq{u}, Q)
&\eqdef &\frac{{n \choose nQ}}{q^{n(d+1)}}
 \left[ (\seq{u}_\oplus)^d + (q - 1) \left(
 \frac{qu_0 - \seq{u}_\oplus}{q-1} \right)^d \right]^{nQ(0)}
 \IEEEnonumber\\
& &\breakop{\times} \left[ (\seq{u}_\oplus)^d
 - \left( \frac{qu_0 - \seq{u}_\oplus}{q-1} \right)^d
 \right]^{n(1-Q(0))},\label{eq:DefinitionOfCHKCodeGF1}
\end{IEEEeqnarray}
\begin{IEEEeqnarray}{rCl}
g_{d,n}^{(2)}(O, P, Q)
&\eqdef &\frac{{n \choose nQ}}{q^{n(d+1)} O^{dnP}}
 \left[ 1 + (q - 1) \left( \frac{qO(0) - 1}{q-1} \right)^d
 \right]^{nQ(0)} \IEEEnonumber\\
& &\breakop{\times} \left[ 1 - \left( \frac{qO(0) - 1}{q-1} \right)^d
 \right]^{n(1-Q(0))},\label{eq:DefinitionOfCHKCodeGF2}
\end{IEEEeqnarray}
\begin{IEEEeqnarray}{rCl}
\delta_{d}(x, y)
&\eqdef &\inf_{0 < \hat{x} < 1} \delta_{d}(x, \hat{x}, y),
 \label{eq:DefinitionOfDelta} \\
\delta_{d}(x, \hat{x}, y)
&\eqdef &dD(x \| \hat{x}) + J_{d}(\hat{x}, y),
 \label{eq:DefinitionOfDelta2} \\
D(x\|y) &\eqdef &D((x,1-x)\|(y,1-y)),\IEEEnonumber\\
J_{d}(x, y)
&\eqdef &y \ln\left[1+(q - 1)\left(\frac{qx - 1}{q-1}\right)^d\right]
 \IEEEnonumber\\
& &\breakop{+} (1 - y) \ln
 \left[1-\left(\frac{qx - 1}{q-1}\right)^d\right],
 \label{eq:DefinitionOfJ}\\
\Delta_{n}(P)
&\eqdef &H(P) - \frac{1}{n} \ln {n \choose nP}.
 \label{eq:DefinitionOfDelta3}
\end{IEEEeqnarray}
\end{proposition}

In the formulas above, we have used the notion of spectrum generating
function, a tool to be introduced in
Section~\ref{sec:NewMethodsAndResultsOfCodeSpectrum}.
From Propositions~\ref{pr:SpectrumOfREPCode} and
\ref{pr:SpectrumOfCHKCode} we obtain a tight upper bound on the
joint-spectrum performance of $\ldcode{c,d,n}$.%
\footnote{The tightness of the bound follows from
 \cite[Theorem~1]{JSCC:Burshtein200406}.}

\begin{theorem}\label{th:SpectrumOfLDGMCode}
\begin{equation}\label{eq:SpectrumOfLDGMCodeA}
\avS_{\ldcode{c,d,n}}(Q|P) = \avS_{\rchkcode{d,cn/d}}(Q|P)
\end{equation}
and
\begin{equation}\label{eq:SpectrumOfLDGMCodeB}
\frac{1}{n} \ln \alpha_{\ldcode{c,d,n}}(P,Q)
\le \frac{c}{d}\delta_{d}(P(0),Q(0)) + c\Delta_{cn}(P)
\end{equation}
where $\delta_{d}(x,y)$ is defined by \eqref{eq:DefinitionOfDelta} and
$\Delta_{n}(P)$ is defined by \eqref{eq:DefinitionOfDelta3}.
\end{theorem}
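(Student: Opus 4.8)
The plan is to collapse $\ldcode{q,c,d,n}$ onto its check-code constituent using the serial-concatenation calculus of Section~\ref{subsec:NewResultsII}, and then to transfer the estimates of Proposition~\ref{pr:SpectrumOfCHKCode} after rescaling the block length. Nothing analytically new is needed; the point is that the repetition part of the LDGM code is ``transparent'' at the level of (averaged) conditional spectra.

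First I would establish \eqref{eq:SpectrumOfLDGMCodeA}. Start from the second of the three equivalent forms in \eqref{eq:DefinitionOfLDGMCode2}, namely $\ldcode{q,c,d,n} \eqvar{d} \rchkcode{q,d,c'n} \circ \Sigma_{cd'n} \circ \repcode{q,c,d'n}$; the lengths fit because $cd' = c'd$, so that $\Sigma_{cd'n}$ acts on $\field{q}^{cd'n} = \field{q}^{dc'n}$, which is both the codomain of $\repcode{q,c,d'n}$ and the domain of $\rchkcode{q,d,c'n}$. Applying Proposition~\ref{pr:SpectrumOfSeriallyConcatenatedFunctions} with $F = \repcode{q,c,d'n}$ and $G = \rchkcode{q,d,c'n}$ gives
\[
E[S(\ldcode{q,c,d,n})(Q|P)] = \sum_{O \in \mathcal{P}_{cd'n}(\field{q})} S(\repcode{q,c,d'n})(O|P)\, E[S(\rchkcode{q,d,c'n})(Q|O)].
\]
By \eqref{eq:SpectrumOfREPCodeD} the forward conditional spectrum of a repetition code is $1\{O = P\}$ --- repeating each symbol $c$ times does not alter the empirical type --- so the sum collapses to $E[S(\rchkcode{q,d,c'n})(Q|P)]$, which is \eqref{eq:SpectrumOfLDGMCodeA}. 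Here $P$ on the right is read as the element of $\mathcal{P}_{c'dn}(\field{q})$ equal to the given $P \in \mathcal{P}_{d'n}(\field{q})$, which is legitimate since $c'dn = c\cdot d'n$ is a multiple of $d'n$.

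Next I would derive \eqref{eq:SpectrumOfLDGMCodeB}. For any random linear code $f: \field{q}^a \to \field{q}^b$ one has $\alpha(f)(P,Q) = E[S(f)(Q|P)] / S(\field{q}^b)(Q)$, because the marginal input spectrum of a graph equals $S(\field{q}^a)$ (so it is deterministic and cancels) and the denominator in \eqref{eq:DefinitionOfAlpha} factors by Proposition~\ref{pr:SpectrumOfSets}. Since $\ldcode{q,c,d,n}$ and $\rchkcode{q,d,c'n}$ share the output space $\field{q}^{c'n}$, the identity just proved yields $\alpha(\ldcode{q,c,d,n})(P,Q) = \alpha(\rchkcode{q,d,c'n})(P,Q)$. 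Now \eqref{eq:SpectrumOfCHKCodeD}, applied with $c'n$ check symbols, reads
\[
\frac{1}{c'n}\ln \alpha(\rchkcode{q,d,c'n})(P,Q) \le \delta_{q,d}(P(0),Q(0)) + d\,\Delta_{q,dc'n}(P),
\]
and multiplying by $c'n/(d'n) = c'/d' = c/d$ and using $dc'n = c'dn$ gives exactly \eqref{eq:SpectrumOfLDGMCodeB}.

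The substantive work is already behind us: the estimation of the check-code spectrum is Proposition~\ref{pr:SpectrumOfCHKCode}, and the reduction of a serial concatenation to a product of averaged conditional spectra is Proposition~\ref{pr:SpectrumOfSeriallyConcatenatedFunctions}. Hence the only thing requiring care here is bookkeeping: verifying that the three forms in \eqref{eq:DefinitionOfLDGMCode2} really coincide in distribution (so that the convenient one may be substituted), checking the length identities $cd' = c'd$ and $cd'n = c'dn$ so that every composition and every reinterpretation of a type is legitimate, and disposing of the degenerate input type $P = P_{0^{d'n}}$, on which $\ldcode{q,c,d,n}$ sends $0^{d'n}$ to $0^{c'n}$, by a direct check. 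I do not anticipate a genuine obstacle.
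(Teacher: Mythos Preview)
Your proposal is correct and follows essentially the same route as the paper: apply Proposition~\ref{pr:SpectrumOfSeriallyConcatenatedFunctions} to the form $\rchkcode{q,d,c'n} \circ \Sigma_{cd'n} \circ \repcode{q,c,d'n}$, collapse the sum via \eqref{eq:SpectrumOfREPCodeD}, then rewrite $\alpha$ as a ratio of conditional spectrum over output spectrum, invoke \eqref{eq:SpectrumOfCHKCodeD} at block length $c'n$, and rescale by $c'/d' = c/d$. The paper's proof is line-for-line the same, including the observation that the input-spectrum factors cancel so that $\alpha(\ldcode{q,c,d,n})(P,Q) = \alpha(\rchkcode{q,d,c'n})(P,Q)$.
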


Our next theorem, which is based on the results above, shows that for
any $\delta > 0$, regular LDGM encoders with $d$ large enough are
approximately $\delta$-asymptotically SCC-good.

\begin{theorem}\label{th:ApproximateSCCGoodLDGMCode}
Let $\mathcal{N}$ consist of all positive integers $n$ such that $d$
divides $cn$.
Let $\{\ldcode{c,d,n}\}_{n\in \mathcal{N}}$ be a sequence of random
regular LDGM encoders whose coding rate is $r_0 = d/c$.
Let $\{A_n\}_{n\in\mathcal{N}}$ be a sequence of sets
$A_n \subseteq \mathcal{P}_{n}^*$.
Then
\[
\limsup_{n\in\mathcal{N}:n\to\infty}
 \max_{P \in A_n, Q \in \mathcal{P}_{cn/d}}
 \frac{1}{n} \ln \alpha_{\ldcode{c,d,n}}(P, Q)
\le \rho_0,
\]
where
\[
\rho_0
\eqdef \limsup_{n\in\mathcal{N}:n\to\infty}
 \max_{P \in A_n} \frac{1}{r_0}
 \ln\left[1+(q-1)\left| \frac{qP(0)-1}{q-1} \right|^d\right].
\]
If
\[
A_n
=\left\{P\in\mathcal{P}_{n}: P(0)\in
 \left[\frac{1}{q}-\gamma_1,\frac{1}{q}+\gamma_2\right]\right\}
\]
where $\gamma_1 \in (0,1/q] \backslash \{\frac{1}{2}\}$ and
$\gamma_2 \in (0,1-1/q)$, then
\[
\rho_0
= \frac{1}{r_0}
 \ln\left[1+(q-1)\left(\frac{q\gamma}{q-1}\right)^d\right]
\]
where $\gamma \eqdef \max\{\gamma_1,\gamma_2\}$.
For any $\delta>0$, define
\[
d_0(\gamma,\delta)
\eqdef\ceil{\frac{\ln[(e^{r_0 \delta}-1)/(q-1)]}{\ln[q\gamma/(q-1)]}}.
\]
Then we have $\rho_0 \le \delta$ for all $d \ge d_0(\gamma,\delta)$.
\end{theorem}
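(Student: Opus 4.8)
The plan is to obtain the first inequality by successively relaxing the bound of Theorem~\ref{th:SpectrumOfLDGMCode} using Lemmas~\ref{le:UpperBoundOfDelta} and~\ref{le:UpperBoundOfJ}, and then to specialize to the stated family $A_n$ and optimize. Starting from
\[
\frac{1}{d'n}\ln\alpha(\ldcode{q,c,d,n})(P,Q) \le \frac{c}{d}\,\delta_{q,d}(P(0),Q(0)) + c\,\Delta_{q,c'dn}(P)
\]
and using $c/d = 1/r_0$ together with $\delta_{q,d}(x,y)\le J_{q,d}(x,y)\le\ln\bigl[1+(qy-1)\bigl(\tfrac{qx-1}{q-1}\bigr)^{d}\bigr]$, the right-hand side becomes $\frac{1}{r_0}\ln\bigl[1+(qQ(0)-1)\bigl(\tfrac{qP(0)-1}{q-1}\bigr)^{d}\bigr]$ plus the error term $c\,\Delta_{q,c'dn}(P)$. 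I then maximize over $Q\in\mathcal{P}_{c'n}(\field{q})$: writing $t=\bigl(\tfrac{qP(0)-1}{q-1}\bigr)^{d}$ and using $qQ(0)-1\in[-1,q-1]$, one has $(qQ(0)-1)t\le(q-1)|t|$ (take $Q(0)=1$ if $t\ge 0$, $Q(0)=0$ if $t<0$), with $|t|\le 1$ since $|qP(0)-1|\le q-1$; monotonicity of $\ln$ then gives, uniformly in $Q$,
\[
\frac{1}{d'n}\ln\alpha(\ldcode{q,c,d,n})(P,Q) \le \frac{1}{r_0}\ln\!\Bigl[1+(q-1)\Bigl|\tfrac{qP(0)-1}{q-1}\Bigr|^{d}\Bigr] + c\,\Delta_{q,c'dn}(P).
\]
The last term vanishes uniformly in $P$, because the standard type bounds $(c'dn+1)^{-q}e^{c'dn H(P)}\le\binom{c'dn}{c'dn P}\le e^{c'dn H(P)}$ yield $0\le\Delta_{q,c'dn}(P)\le\frac{q\ln(c'dn+1)}{c'dn}\to 0$. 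Taking $\max_{P\in A_n,\,Q}$ and then $\limsup_{n\to\infty}$ gives the first assertion, with $\rho_0$ exactly as defined.

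For the explicit $A_n = \{P: P(0)\in[\frac1q-\gamma_1,\frac1q+\gamma_2]\}$, set $\phi(x)=\frac1{r_0}\ln[1+(q-1)|\tfrac{qx-1}{q-1}|^{d}]$. Since $\phi$ is continuous and the attainable type values $P(0)=j/(d'n)$ become dense in $[0,1]$ as $n\to\infty$, we get $\limsup_n\max_{P\in A_n}\phi(P(0)) = \max_{x\in[\frac1q-\gamma_1,\frac1q+\gamma_2]}\phi(x)$; as $\phi$ is increasing in $|qx-1|$, which ranges over $[0,\max\{q\gamma_1,q\gamma_2\}]=[0,q\gamma]$ on that interval, the maximum is attained at $|qx-1|=q\gamma$, whence $\rho_0 = \frac1{r_0}\ln[1+(q-1)(\tfrac{q\gamma}{q-1})^{d}]$. (The hypotheses $\gamma_1\in(0,1/q]\setminus\{1/2\}$ and $\gamma_2\in(0,(q-1)/q)$ ensure $\tfrac{q\gamma}{q-1}<1$ and, incidentally, $P_{0^{d'n}}\notin A_n$.) Finally, $\rho_0\le\delta$ is equivalent, after exponentiating and then taking logarithms — the latter reversing the inequality because $\ln\tfrac{q\gamma}{q-1}<0$ — to $d\ge\dfrac{\ln[(e^{r_0\delta}-1)/(q-1)]}{\ln[q\gamma/(q-1)]}$, i.e., to $d\ge d_q(\gamma,\delta)$ since $d$ is an integer; this settles the remaining claims.

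I expect the main obstacle to be the maximization over $Q$ and its case analysis: one must track the sign of $\bigl(\tfrac{qP(0)-1}{q-1}\bigr)^{d}$ (depending on the parity of $d$ and on whether $P(0)$ lies above or below $1/q$), verify that the logarithm arguments stay nonnegative in the degenerate cases — most delicately $q=2$ with $P(0)$ near $0$, exactly the situation excluded by $\gamma_1\neq 1/2$ — and confirm that the extreme choices $Q(0)\in\{0,1\}$ are genuine types in $\mathcal{P}_{c'n}(\field{q})$. The remaining ingredients — chaining the three quoted inequalities, the uniform type estimate for $\Delta_{q,c'dn}$, the density argument for the discrete maxima, and the rearrangement defining $d_q(\gamma,\delta)$ — are routine.
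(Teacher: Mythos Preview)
Your proposal is correct and follows essentially the same approach as the paper: chain Theorem~\ref{th:SpectrumOfLDGMCode} with Lemmas~\ref{le:UpperBoundOfDelta} and~\ref{le:UpperBoundOfJ}, bound the $Q$-dependence by $(q-1)\bigl|\tfrac{qP(0)-1}{q-1}\bigr|^{d}$, and kill the error term via the standard type estimate $\Delta_{q,c'dn}(P)\le\frac{q\ln(c'dn+1)}{c'dn}$ (note $c'd=cd'$, so $c\,\Delta_{q,c'dn}(P)\le\frac{q\ln(cd'n+1)}{d'n}$, matching the paper). The paper dispatches the specialization to the explicit $A_n$ and the derivation of $d_q(\gamma,\delta)$ with the single sentence ``All conclusions of the theorem follow immediately,'' whereas you spell out the density/continuity argument and the inequality rearrangement; your worry about whether $Q(0)\in\{0,1\}$ are attainable types is unnecessary, since only the upper bound direction is used.
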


Theorem~\ref{th:ApproximateSCCGoodLDGMCode} together with
Theorem~\ref{th:ConstructionOfGoodLSCC} and
Remark~\ref{re:ConstructionOfGoodLSCC} shows that, for any
$\delta > 0$, we can construct $\delta$-asymptotically good LSCEs by
serially concatenating an inner LDGM encoder and an outer encoder of
a linear code with large minimum distance.
In particular, we may use encoders of LDPC codes as outer encoders.
Furthermore, Proposition~\ref{pr:DiagonalArgument} shows that we can
find a sequence of asymptotically good LSCEs in a family of sequences
of $\delta_i$-asymptotically good LSCEs, where $\delta_i$ is
decreasing in $i$ and converges to zero as $i \to \infty$.
An analogous construction was proposed by Hsu in his Ph.D.
dissertation \cite{JSCC:Hsu200600}, but his purpose was only to find
good channel codes and only a rate-1 LDGM encoder was employed as an
inner encoder in his construction.
A similar construction was proposed by Wainwright and Martinian
\cite{JSCC:Wainwright200903}, who proved that such a construction is
optimal for channel coding or lossy source coding with side
information.

\omitted{%
The next example shows how to determine the parameters of the inner
LDGM encoder when designing such encoders.

\begin{example}\label{ex:GoodLDGM}
Let $\delta=0.05$ and $f_n: \field{2}^n \to \field{2}^{5n}$ an
injective linear encoder.
Suppose that the normalized weight of all nonzero codewords of
$f_n(\field{2}^n)$ ranges from $0.05$ to $0.95$.
We shall design a linear encoder $H_n: \field{2}^n \to \field{2}^{2n}$
that is $\delta$-asymptotically SCC-good.
Let $H_n = G_n \circ \Sigma_{5n} \circ f_n$ where $G_n$ is a random
regular LDGM encoder over $\field{2}$.
It is clear that the coding rate of $G_n$ must be $\frac{5}{2}$.
Using Theorem~\ref{th:ApproximateSCCGoodLDGMCode} with $q=2$,
$r_0=\frac{5}{2}$, and $\gamma_1=\gamma_2=0.45$, we have
$d_0(\gamma, \delta R(f_n))=35$.
Then we may choose LDGM encoders with $c=14$ and $d=35$, so that
$\rho_0 \le 0.01$, and therefore
$\limsup_{n \to \infty} \rho(H_n) \le \rho_0 / R(f_n) = 0.05$.
\end{example}

\begin{remark}\label{re:LDGMImplementation}
  Although Theorem~\ref{th:ApproximateSCCGoodLDGMCode}, as well as
  Example~\ref{ex:GoodLDGM}, shows that LDGM encoders with $d$ large
  enough are good candidates for inner encoders, this conclusion may
  not be true for nonideal decoding algorithms.
  It is well known that the minimum distance of a typical regular LDPC
  code increases with the column weight of its defining parity-check
  matrix (e.g., \cite{JSCC:Yang201111}).
  However, in practice, its performance under the belief propagation
  (BP) decoding algorithm decreases as the column weight increases
  (e.g., \cite{JSCC:Sharon200608}).
  For this reason researchers tend to employ irregular LDPC codes in
  order to achieve better performance.
  To some extent, the design of irregular LDPC codes is
  a compromise between ideal decoding performance and iterative
  decoding convergence.
  Similarly, we cannot expect a boost in performance by simply
  increasing the parameter $d$ of a regular LDGM encoder, and we may
  also need to consider irregular LDGM encoders, i.e., a
  class of sparse generator matrices whose row or column weights are
  not uniform.  In this direction some significant work has been done
  in practice.  One example is the ``LDGM code'' in
  \cite{JSCC:Garcia200306}, which is defined as a serial concatenation
  of two systematic LDGM codes, i.e., two irregular LDGM encoders.
  These two codes share the same systematic bits, so that the BP
  decoding algorithm converges easily.  It has been shown by
  simulation that this kind of
  encoder is good for lossless JSCC \cite{JSCC:Zhong200505}.
\end{remark}
}

So far, we have presented two families of good linear encoders, one
based on Gabidulin codes (Section~\ref{subsec:MRDLSCC}) and the other
based on LDGM encoders.
A comparison between these two families seems necessary.
At first glance, it seems that the family based
on Gabidulin codes is better than the family based on
LDGM codes, since the former is SCC-good while the latter is
only asymptotically SCC-good.  However, this is not the whole
truth, because there is no single linear encoder that is SCC-good.
The proposition that follows makes this fact precise.
Consequently, in terms of code spectrum, the two
families have almost the same performance.
On the other hand,
in terms of decoding complexities, the family based on LDGM encoders
is more competitive, since LDGM encoders as well as LDPC codes are
characterized by sparse matrices so that low-complexity iterative
decoding algorithms can be employed.

\begin{proposition}\label{pr:SingleCodeSpectrum}
For any linear encoder $f: \mathcal{X}^n \to \mathcal{Y}^m$ with
$|\mathcal{X}|\ge 2$,
\begin{IEEEeqnarray*}{rCl}
\max_{\substack{P\in\mathcal{P}_n^*(\mathcal{X})\\
 Q\in\mathcal{P}_m(\mathcal{Y})}} \alpha_f(P,Q)
&\ge &\frac{|\mathcal{Y}|^m}
 {\max_{Q \in \mathcal{P}_m(\mathcal{Y})} {m \choose mQ}}\\
&= &\order\left(m^{\frac{|\mathcal{Y}|-1}{2}}\right).
\end{IEEEeqnarray*}%
\end{proposition}

The proofs of this section are given in
Appendix~\ref{subsec:ProofOfExplicitConstruction}.

\section{Advanced Toolbox of the Code-Spectrum Approach}
\label{sec:NewMethodsAndResultsOfCodeSpectrum}

In this section, we introduce some advanced tools required to prove
the results in
Sections~\ref{sec:ConceptsOfGoodLinearCodes}--\ref{sec:ExplicitConstruction}.
In particular, we shall establish tools for serial and parallel
concatenations of linear encoders and the MacWilliams identities on
the duals of linear encoders.
These results are not new in nature, but serve our purpose of
providing a concise mathematical treatment and completing the
code-spectrum approach.
Their proofs are left to the reader as exercises, or can be found in
\cite{JSCC:Yang200909}.

\subsection{Spectra with Coordinate Partitions}
\label{subsec:SpectraWithPartition}

In this subsection, we introduce a generalization of spectra,
viz., spectra of sets with coordinate partitions.

Let $A$ be a subset of $\field{q}^n$, with coordinate set
$\mathcal{I}_n$.
Given a partition $\mathcal{U}$ of $\mathcal{I}_n$, we define the
\emph{$\mathcal{U}$-type} $P^{\mathcal{U}}_{\seq{x}}$ of
$\seq{x} \in \field{q}^n$ as
\[
P^{\mathcal{U}}_{\seq{x}}=(P^{U}_{\seq{x}})_{U \in \mathcal{U}}
\eqdef (P_{x_{U}})_{U \in \mathcal{U}}.
\]
By $\mathcal{P}_{\mathcal{U}}$ we mean the set of all
$\mathcal{U}$-types of vectors in $\field{q}^n$, so that
$
\mathcal{P}_{\mathcal{U}}
= \prod_{U \in \mathcal{U}} \mathcal{P}_{|U|}.
$
A $\mathcal{U}$-type in $\mathcal{P}_{\mathcal{U}}$ is written in the
form $P^{\mathcal{U}} \eqdef (P^{U})_{U \in \mathcal{U}}$.
For a $\mathcal{U}$-type $P^{\mathcal{U}}$, the set of vectors of
$\mathcal{U}$-type $P^{\mathcal{U}}$ in $\field{q}^n$ is denoted by
$\mathcal{T}_{P^{\mathcal{U}}}$.
In the sequel, when given $P^{\mathcal{U}}_{\seq{x}}$ or
$P^{\mathcal{U}}$, we shall slightly abuse the notations
$P^{\mathcal{V}}_{\seq{x}}$ and $P^{\mathcal{V}}$ for any subset
$\mathcal{V}$ of $\mathcal{U}$ to represent part of their components.

Based on the $\mathcal{U}$-type, we define the
\emph{$\mathcal{U}$-spectrum} $\spec_{\field{q}^{\mathcal{U}}}(A)$ of
a nonempty set $A\subseteq\field{q}^n$ as the empirical distribution
of $\mathcal{U}$-types of sequences in $A$, i.e.,
\[
\spec_{\field{q}^\mathcal{U}}(A)(P^{\mathcal{U}})
\eqdef
\frac{|\{ \seq{x} \in A : P^{\mathcal{U}}_{\seq{x}} =
 P^{\mathcal{U}} \}|}{|A|}.
\]
The $\mathcal{U}$-spectrum is in fact a variant of the joint spectrum.%
\footnote{When $\mathcal{U}$ is the disjoint union of $\mathcal{V}$
 and $\mathcal{W}$, we sometimes write
 $\spec_{\field{q}^{\mathcal{V}}\field{q}^{\mathcal{W}}}(A)$,
 which looks more like an ordinary joint spectrum and is used to
 distinguish between coordinates.}
When $\mathcal{U} = \{ \mathcal{I}_n \}$, it reduces to the ordinary
spectrum.
Otherwise it provides more information than the ordinary spectrum, and
in the extreme case $\mathcal{U} = \{ \{1\},\{2\},\dots,\{n\} \}$ the
spectrum $\spec_{\field{q}^\mathcal{U}}(A)$ determines $A$ uniquely.
Since $P^{\mathcal{U}}$ clearly refers to an element of
$\mathcal{P}_\mathcal{U}$, we sometimes write
$\spec(A)(P^\mathcal{U})$ or further $\spec_A(P^\mathcal{U})$ in place
of $\spec_{\field{q}^\mathcal{U}}(A)(\mathcal{P}^\mathcal{U})$ for
convenience.

\omitted{%
\begin{example}\label{ex:scp}
Consider the linear code of Example~\ref{ex:HammingCode1} with
coordinate partition $\{\{1,2,3\}, \{4,5,6,7\}\}$.
Its spectrum, written as a two-dimensional array, is
\begin{equation*}
\renewcommand{\arraystretch}{1.2}
  \begin{array}{c|ccccc}
    w_1\backslash w_2&0&1&2&3&4\\\hline
    0&\frac{1}{16}&&&\frac{1}{16}\\
    1&&&\frac{3}{16}&\frac{3}{16}\\
    2&&\frac{3}{16}&\frac{3}{16}&\\
    3&&\frac{1}{16}&&&\frac{1}{16}
  \end{array}
\end{equation*}
where $w_1$ and $w_2$ correspond to $\{1,2,3\}$ and $\{4,5,6,7\}$,
respectively.
\end{example}
}

Analogous to ordinary spectra, we further define the marginal and
conditional spectra with respect to a proper subset $\mathcal{V}$ of
$\mathcal{U}$, denoting them by
$\spec_{\field{q}^{\mathcal{V}}}(A)(P^{\mathcal{V}})$ and
$\spec_{\field{q}^{\mathcal{U}\setminus\mathcal{V}}
 |\field{q}^{\mathcal{V}}}(A)(P^{\mathcal{U}\setminus\mathcal{V}}
 |P^{\mathcal{V}})$,
respectively.
We also define the $(\mathcal{U},\mathcal{V})$-spectrum of a map
$f: \field{q}^n \to \field{q}^m$ as
$\spec_{\field{q}^{\mathcal{U}}\field{q}^{\mathcal{V}}}(\mathrm{rl}(f))$
where $\mathcal{U}$ and $\mathcal{V}$ are partitions of
$\mathcal{I}_n$ and $\mathcal{I}_m$, respectively.

For ease of notation, when we explicitly write
$A \subseteq \prod_{i=1}^s \field{q}^{n_i}$ with
$\sum_{i=1}^s n_i = n$, we tacitly assume that the default coordinate
partition is
\begin{IEEEeqnarray*}{rCl}
\mathcal{U}_0 &\eqdef &\{U_1, U_2, \ldots, U_s\}\\
&= &\left\{ \{1,\ldots,n_1\}, \ldots,
 \left\{\sum_{i=1}^{s-1} n_i + 1, \ldots, n\right\} \right\}.
\end{IEEEeqnarray*}
Thus the default spectrum of $A$ is
$\spec_{\field{q}^{\mathcal{U}_0}}(A)$ and is denoted by
$\spec_{\field{q}^s}(A)$ (or $\spec(A)$ for $s=1$).

To further explore the properties of $\mathcal{U}$-spectra, we first
take a closer look at the $\mathcal{U}$-type.
Recall that any two vectors $\seq{x}, \seq{x}' \in \field{q}^n$ have
the same type if and only if $\sigma(\seq{x}) = \seq{x}'$ for some
$\sigma \in \symmetricgroup{n}$.
Since the type is a special case of the $\mathcal{U}$-type, it is
natural to ask which permutations in $\symmetricgroup{n}$ preserve the
$\mathcal{U}$-type.
A moment's thought shows that the $\mathcal{U}$-type is preserved by
any permutation that maps each member of $\mathcal{U}$ onto itself.
We denote the set of all such permutations by
$\symmetricgroup{\mathcal{U}}$, which forms a subgroup of
$\symmetricgroup{n}$ isomorphic to
$\prod_{U\in\mathcal{U}} \symmetricgroup{|U|}$.
Now considering a random permutation uniformly distributed over
$\symmetricgroup{\mathcal{U}}$, we obtain a generalization of
$\Sigma_n$, which is denoted by $\Sigma_{\mathcal{U}}$ and is called a
\emph{uniform random permutation with respect to $\mathcal{U}$}.
We are now ready to state a fundamental result about
$\mathcal{U}$-spectra.

\begin{proposition}\label{pr:GeneralSpectrumPropertyOfSets}
Let $\mathcal{U}$ be a partition of $\mathcal{I}_n$.
For any $\seq{x} \in \field{q}^n$ and any random nonempty set
$A \subseteq \field{q}^n$,
\begin{equation}\label{eq:GeneralSpectrumPropertyOfSets1}
\av\left[\frac{1\{\seq{x} \in \Sigma_{\mathcal{U}}(A)\}}{|A|}\right]
= q^{-n} \alpha_A(P^{\mathcal{U}}_{\seq{x}})
\end{equation}
where
\begin{equation}\label{eq:Definition2OfAlpha}
\alpha_A(P^{\mathcal{U}}) \eqdef
\frac{\avS_A(P^{\mathcal{U}})}{\spec_{\field{q}^n}(P^{\mathcal{U}})}.
\end{equation}
Moreover, for any proper subset $\mathcal{V}$ of $\mathcal{U}$ and any
$Q^{\mathcal{U}} \in \mathcal{P}_{\mathcal{U}}$ with
$Q^{\mathcal{V}} = P^\mathcal{V}_{\seq{x}}$, we have
\begin{equation}\label{eq:GeneralSpectrumPropertyOfSets2}
\av\left[\frac{|B \cap \Sigma_{\mathcal{U}}(A)|}{|A|}\right]
=
\frac{\avS_A(Q^{\mathcal{U}})}
 {\prod_{V\in\mathcal{V}} q^{|V|}\spec_{\field{q}^{|V|}}(Q^V)},
\end{equation}
where
$B \eqdef \{\seq{y} \in \mathcal{T}_{Q^{\mathcal{U}}}: y_V=x_V
 \mbox{ for all $V \in \mathcal{V}$}\}$.
\end{proposition}

\omitted{%
\begin{remark}
Identity \eqref{eq:GeneralSpectrumPropertyOfSets1} can be
rewritten as
\[
\pr\{\seq{x} \in \Sigma_{\mathcal{U}}(A)\}
= q^{-n}|A| \alpha_A(P^{\mathcal{U}}_{\seq{x}})
\]
whenever $|A|$ is a constant.
For a random function $F: \field{q}^n \to \field{q}^m$, if we let
$A = \mathrm{rl}(F) \subseteq \field{q}^{n}\times \field{q}^m$ (which
implies the default partition $\{\mathcal{I}_{n}, \mathcal{I}_{m}'\}$%
\footnote{There is a collision between coordinate sets when we
 consider the pair $(\seq{x}, F(\seq{x}))$ as a vector of
 $\field{q}^{m+n}$.
 The trick is to rename the output coordinate set as
 $\mathcal{I}_m' = \{1', 2', \ldots, m'\}$, so that the whole
 coordinate set is $\mathcal{I}_n \cup \mathcal{I}_m'$.}%
) and note that $|A| = q^n$, then \citeSpectrumProperty{} (for
$\mathcal{X}=\mathcal{Y}=\field{q}$) follows as a special case.
In general, for a random function
$F: \prod_{i=1}^s \field{q}^{n_i} \to \prod_{i=1}^t \field{q}^{m_i}$,
we may consider the default coordinate partitions $\mathcal{U}_0$ and
$\mathcal{V}_0$ of its domain and range, respectively, and define
$\tilde{F} \eqdef \Sigma_{\mathcal{V}_0} \circ F \circ
 \Sigma_{\mathcal{U}_0}$.
Then Proposition~\ref{pr:GeneralSpectrumPropertyOfSets} yields a
generalization of \citeSpectrumProperty, that is,
\[
\pr\{\tilde{F}(\seq{x}) = \seq{y}\}
= q^{-m} \alpha_F(P^{\mathcal{U}_0}_{\seq{x}},
 P^{\mathcal{V}_0}_{\seq{y}}),
\]
where $m \eqdef \sum_{i=1}^t m_i$ and
$\alpha_F(P^{\mathcal{U}_0},Q^{\mathcal{V}_0})
 \eqdef \alpha_{\mathrm{rl}(F)}(P^{\mathcal{U}_0},Q^{\mathcal{V}_0})$.
\end{remark}
}

\subsection{Encoders and Conditional Probability Distributions}
\label{subsec:ConditionalProbability}

In this subsection, we show that any encoder may be regarded as a
conditional probability distribution.
Such a viewpoint is very helpful in calculating the spectrum of a
complex encoder composed of many simple encoders.

\begin{proposition}\label{pr:SpectrumPropertyX1OfFunctions}
For any random function
$F: \prod_{i=1}^s \field{q}^{n_i} \to \prod_{i=1}^t \field{q}^{m_i}$,
\begin{equation}
\pr\left\{\rtilde{F}(\seq{x})\in\mathcal{T}_{Q^{\mathcal{V}_0}}
 \right\}
= \avS_F(Q^{\mathcal{V}_0} | P^{\mathcal{U}_0}_{\seq{x}})
\label{eq:SpectrumPropertyX1OfFunctions}
\end{equation}
for all $\seq{x} \in \prod_{i=1}^s \field{q}^{n_i}$ and
$Q^{\mathcal{V}_0} \in \mathcal{P}_{\mathcal{V}_0}$, where
$\rtilde{F} \eqdef F \circ \Sigma_{\mathcal{U}_0}$, and
$\mathcal{U}_0$ and $\mathcal{V}_0$ are the default coordinate
partitions.
\end{proposition}

\omitted{%
\begin{remark}
Identity \eqref{eq:SpectrumPropertyX1OfFunctions} can also be
rewritten as
\[
\pr\{\rtilde{F}(\seq{x}) \in \mathcal{T}_{Q^{\mathcal{V}_0}}
 \mid \seq{x} \in \mathcal{T}_{P^{\mathcal{U}_0}} \}
= \avS_F(Q^{\mathcal{V}_0} | P^{\mathcal{U}_0}),
\]
which clearly indicates that the average forward conditional spectrum
$\avS_F(Q^{\mathcal{V}_0}|P^{\mathcal{U}_0})$ may be regarded as the
transition probability from $P^{\mathcal{U}_0}$ to
$Q^{\mathcal{V}_0}$ under $\rtilde{F}$.
This fundamental observation implies that coding modules like
$\rtilde{F}$ or $\tilde{F}$ (instead of $F$) should be regarded as
basic units in a coding system, and that the serial concatenation of
such units may behave like the serial concatenation of conditional
probability distributions.
The following proposition proves this speculation.
\end{remark}
}

\begin{proposition}\label{pr:SpectrumOfSeriallyConcatenatedFunctions}
For any two random functions $F: \field{q}^n \to \field{q}^m$ and
$G: \field{q}^m \to \field{q}^l$,
\[
\avS_{G \circ \Sigma_m \circ F}(Q|O)
= \sum_{P \in \mathcal{P}_m} \avS_F(P|O) \avS_G(Q|P),
\]
where $O \in \mathcal{P}_n$ and $Q \in \mathcal{P}_l$.
\end{proposition}

\subsection{Spectrum Generating Functions}
\label{subsec:SpectrumGeneratingFunctions}

In Section~\ref{subsec:ConditionalProbability}, we introduced a
method for calculating the spectra of serial concatenations of
encoders.
In this subsection, we proceed to investigate another important
combination of encoders, viz.\ parallel concatenations.
To cope with problems involving concatenations (cartesian products) of
sequences, we shall introduce the approach of spectrum generating
functions.

At first, we need some additional terminology for partitions to
simplify the treatment of spectrum generating functions.
Associated with any partition $\mathcal{U}$ of a set $S$ is the
mapping $\pi_{\mathcal{U}}\colon S\to\mathcal{U}$ that maps $s\in S$
to the member of $\mathcal{U}$ containing $s$.
A partition $\mathcal{V}$ of $S$ is a refinement of $\mathcal{U}$ if
and only if there is a (unique) mapping
$\psi: \mathcal{V} \to \mathcal{U}$ such that
$\pi_{\mathcal{U}} = \psi \circ \pi_{\mathcal{V}}$.

Now let $A$ be a nonempty subset of $\field{q}^n$ and $\mathcal{U}$ a
partition of $\mathcal{I}_n$.
The \emph{$\mathcal{U}$-spectrum generating function} of $A$ is a
polynomial in $q|\mathcal{U}|$ indeterminates, whose coefficients form
the $\mathcal{U}$-spectrum of $A$.
As an element of
$\complexnumbers[u_{U,a}; U\in\mathcal{U}, a\in\field{q}]$ (which
denotes the ring of polynomials in the indeterminates $u_{U,a}$ and
with coefficients in $\complexnumbers$), it can be defined as
\begin{IEEEeqnarray*}{rCl}
\gf_{\field{q}^{\mathcal{U}}}(A)(\seq{u}_{\mathcal{U}})
&\eqdef &\frac{1}{|A|} \sum_{\seq{x} \in A} \prod_{i=1}^n
 u_{\pi_\mathcal{U}(i), x_i} \\
&= &\sum_{P^{\mathcal{U}} \in \mathcal{P}_{\mathcal{U}}}
 \left(\spec_A(P^{\mathcal{U}})
 \prod_{U \in \mathcal{U}} \prod_{a \in \field{q}}
 u_{U,a}^{|U|P^U(a)}\right),
\end{IEEEeqnarray*}
where
$\seq{u}_{\mathcal{U}}
 \eqdef (\seq{u}_U)_{U \in \mathcal{U}}
 = (u_{U,a})_{U\in\mathcal{U}, a\in\field{q}}$.
For convenience, we sometimes write $\gf_{\field{q}^{\mathcal{U}}}(A)$
or $\gf(A)(\seq{u}_{\mathcal{U}})$, or further
$\gf_A(\seq{u}_\mathcal{U})$ (since $\mathcal{U}$ conveys all
necessary information), and write $\seq{u}^{\seq{v}}$ in place of
$\prod_{i\in\mathcal{I}} u_i^{v_i}$ for any sequences $\seq{u}$,
$\seq{v}$ with the same coordinate set $\mathcal{I}$ (whenever the
product makes sense).
Thus the product $\prod_{a \in \field{q}} u_{U,a}^{|U|P^U(a)}$ is
rewritten as $\seq{u}_{U}^{|U|P^U}$.
As is done for $\mathcal{U}$-spectra, we write $\gf_{\field{q}^s}(A)$
(or $\gf(A)$ for $s=1$) when $\mathcal{U}$ is the default coordinate
partition.

\omitted{%
\begin{example}\label{ex:sgf}
Let us compute the spectrum generating function of the linear encoder
in Example~\ref{ex:HammingCode1}.
Its input coordinate set is $\{1, 2, 3, 4\}$ and its output coordinate
set is $\{1, 2, 3, 4, 5, 6, 7\}$, so its spectrum generating function
with respect to the default coordinate partition is
\begin{equation}
\frac{1}{16} \big[ u_0^4v_0^7 + u_0^3u_1(3v_0^4v_1^3 + v_0^3v_1^4)
 + u_0^2u_1^2(2v_0^4v_1^3 + 3v_0^3v_1^4 + v_1^7)
 + u_0u_1^3(v_0^4v_1^3 + 3v_0^3v_1^4) + u_1^4v_0^4v_1^3 \big],
\label{eq:sgf.a}
\end{equation}
where $u_i$ and $v_i$ ($i = 0$, $1$) correspond to the symbol $i$ in
the input and output alphabets, respectively.
If we replace the default output partition by
\[
\{A, B\} = \{\{1,2,3\}, \{4,5,6,7\}\}
\quad(\text{cf.\ Example~\ref{ex:scp}}),
\]
then we obtain
\begin{multline}
\frac{1}{16} \Big\{ u_0^4a_0^3b_0^4 + u_0^3u_1\big[a_0^2a_1b_0^2b_1^2
 + a_0a_1^2(2b_0^3b_1 + b_0^2b_1^2)\big]\\
{}+ u_0^2u_1^2\big[(a_0^3 + a_0^2a_1)b_0b_1^3
 + a_0a_1^2(b_0^3b_1+b_0^2b_1^2) + a_1^3(b_0^3b_1+b_1^4)\big]\\
{}+ u_0u_1^3\big[a_0^2a_1(b_0^2b_1^2 + 2b_0b_1^3)
 + a_0a_1^2b_0^2b_1^2)\big] + u_1^4a_0^2a_1b_0^2b_1^2 \Big\},
\label{eq:sgf.b}
\end{multline}
where $a_i$ and $b_i$ correspond to the partial coordinate sets $A$
and $B$, respectively.
\end{example}
}

The relation between spectrum generating functions for different
partitions is well described by a special substitution homomorphism,
which we shall define now.
Let $\psi$ be a map of $\mathcal{U}$ onto $\mathcal{V}$.
It induces a mapping from
$\complexnumbers[u_{U,a}; U \in \mathcal{U}, a \in \field{q}]$ to
$\complexnumbers[v_{V,a}; V \in \mathcal{V}, a \in \field{q}]$ given
by
\[
f((u_{U,a})_{U\in\mathcal{U}, a\in\field{q}})
\mapsto f((v_{\psi(U),a})_{U\in\mathcal{U}, a\in\field{q}}),
\]
which is a substitution homomorphism by
\cite[Corollary~5.6]{JSCC:Hungerford197400}.
Intuitively $\psi$ does nothing but substitutes each indeterminate
$u_{U,a}$ with $v_{\psi(U),a}$.

\begin{proposition}
\label{pr:SubstitutionPrincipleOfGeneratingFunction}
Suppose $\mathcal{U}$ and $\mathcal{V}$ are two partitions of
$\mathcal{I}_n$.
If $\mathcal{U}$ is a refinement of $\mathcal{V}$ and
$\psi:\mathcal{U} \to \mathcal{V}$ is the map such that
$\pi_\mathcal{V} = \psi \circ \pi_\mathcal{U}$, then $\psi$ maps
$\gf_{\field{q}^{\mathcal{U}}}(A)$ to
$\gf_{\field{q}^{\mathcal{V}}}(A)$ for $A \subseteq \field{q}^n$.
\end{proposition}

\omitted{%
\begin{example}
Let us apply
Proposition~\ref{pr:SubstitutionPrincipleOfGeneratingFunction} to
Example~\ref{ex:sgf}.
It is easy to see that the map from the partition $\{A,B\}$ to the
default output partition $\{\mathcal{I}_7\}$ is the constant map
$\psi(x) = \mathcal{I}_7$, and therefore \eqref{eq:sgf.a} follows from
\eqref{eq:sgf.b} with substitutions $a_i,b_i \mapsto v_i$.
\end{example}
}

Another notable fact is
multiplicativity of the spectrum generating function
with respect to the cartesian product of sets.

\begin{proposition}\label{pr:GeneratingFunctionOfProductOfSets}
For any sets $A_i \subseteq \field{q}^{n_i}$ where $1\le i\le s$,
\[
\gf_{\prod_{i=1}^s A_i}(\seq{u}_{\mathcal{I}_s})
= \prod_{i=1}^s \gf_{A_i}(\seq{u}_i).
\]
\end{proposition}

From Propositions~\ref{pr:SubstitutionPrincipleOfGeneratingFunction}
and \ref{pr:GeneratingFunctionOfProductOfSets}, three corollaries
follow.

\begin{corollary}\label{co:GeneratingFunctionOfSetProduct}
For any sets $A_1 \subseteq \field{q}^{n_1}$ and
$A_2 \subseteq \field{q}^{n_2}$,
\[
\gf_{A_1\times A_2}(\seq{u})
= \gf_{A_1}(\seq{u}) \cdot \gf_{A_2}(\seq{u}).
\]
\end{corollary}

\begin{corollary}\label{co:GeneratingFunctionOfSpace}
\[
\gf_{\field{q}^n}(\seq{u})
= [\gf_{\field{q}}(\seq{u})]^n
= \left( \frac{\sum_{a \in \field{q}} u_a}{q} \right)^n.
\]
\end{corollary}

\begin{corollary}
\label{co:GeneratingFunctionOfFunctionParallelProduct}
For any two maps $f_1: \field{q}^{n_1} \to \field{q}^{m_1}$ and
$f_2: \field{q}^{n_2} \to \field{q}^{m_2}$,
\[
\gf_{f_1 \pprod f_2}(\seq{u}, \seq{v})
= \gf_{f_1}(\seq{u}, \seq{v}) \cdot \gf_{f_2}(\seq{u}, \seq{v}),
\]
where $f_1 \pprod f_2$ is understood as a map from
$\field{q}^{n_1+n_2}$ to $\field{q}^{m_1+m_2}$.
\end{corollary}

Note that Corollary~\ref{co:GeneratingFunctionOfSpace} is an easy
consequence of Corollary~\ref{co:GeneratingFunctionOfSetProduct}, and
that Corollary~\ref{co:GeneratingFunctionOfFunctionParallelProduct} is
the desired tool for computing the
spectra of parallel concatenations of linear encoders.

When $A \subseteq \mathcal{X}^n$ is random, its associated spectrum
generating function is also random.
To analyze a random polynomial, we consider its expectation.
For a random polynomial $F: \Omega \to \complexnumbers[u_a; a \in A]$
with finite image, its expectation can be defined as
\[
\av[F] \eqdef \sum_{f \in F(\Omega)} \pr\{F = f\} f
\]
by using the $\complexnumbers$-algebra structure of the polynomial
ring.%
\footnote{Since the collection of random polynomials with finite image
 is enough for our purpose, we shall not discuss the expectation of a
 general random polynomial.}
\omitted{%
Analogous to ordinary expectations, expectations of random polynomials
have the following properties:
For any random polynomials $F_1$ and $F_2$ over
$\complexnumbers[u_a; a \in A]$,
\[
\av[F_1+F_2] = \av[F_1] + \av[F_2].
\]
If $F_1$ and $F_2$ are independent, then
\[
\av[F_1 F_2] = \av[F_1] \av[F_2],
\]
which also implies $\av[f_1F_2] = f_1\av[F_2]$.
Using these properties, we also have
$\av[F] = \sum_{\seq{n} \in \nnintegers^A}
 \av[[\seq{u}^{\seq{n}}](F)] \cdot \seq{u}^\seq{n}$.
}%
The next proposition states an important property of expectations of
random spectrum generating functions.
To simplify the notation, we shall write, e.g.,
$\avG_{\field{q}^{\mathcal{U}}}(A)$ in place of
$\av[\gf_{\field{q}^{\mathcal{U}}}(A)]$.

\begin{proposition}\label{pr:RenameGeneratingFunction}
Let $\mathcal{U}$ be a partition of $\mathcal{I}_n$ and
$\{F_U: \field{q} \to \field{q}\}_{U\in \mathcal{U}}$ be a collection
of random bijective mappings, which induces a substitution
homomorphism
$\overline{F}: \complexnumbers[u_{U,a}; U \in \mathcal{U},
 a \in \field{q}] \to \complexnumbers[u_{U,a}; U \in \mathcal{V},
 a \in \field{q}]$
given by
\[
f((u_{U,a})_{U\in\mathcal{U}, a\in\field{q}})
\mapsto f((\av[u_{U,F_U(a)}])_{U\in\mathcal{U}, a\in\field{q}})
\]
and a random map $F: \field{q}^n \to \field{q}^n$ given by
\[
\seq{x} \mapsto F^{(1)}(x_1)
F^{(2)}(x_2) \cdots F^{(n)}(x_n),
\]
where $F^{(i)}$ is an independent copy of $F_{\pi_\mathcal{U}(i)}$.
Then
$\avG_{\field{q}^{\mathcal{U}}}(F(A))
 = \overline{F}(\avG_{\field{q}^{\mathcal{U}}}(A))$
for any random nonempty set $A \subseteq \field{q}^n$.
\end{proposition}

\subsection{MacWilliams Identities}\label{subsec:NewResultsV}

One of the most famous results in coding theory are the MacWilliams
identities \cite{JSCC:MacWilliams196300}, relating the weight
enumerator of a linear code to that of its dual code.
In this subsection, we shall introduce the MacWilliams identities in
the framework of the code-spectrum approach.
This may be regarded as a combination of the results in
\cite{JSCC:Wood199903,JSCC:Simonis199502,JSCC:Honold200100}.

The \emph{dual} $A^\perp$ of a linear code $A \subseteq \field{q}^n$
is the orthogonal set
$\{\seq{x} \in \field{q}^n : \seq{x}\transpose{\seq{z}} = 0
 \mbox{ for all $\seq{z}\in A$}\}$.
Clearly, $A^{\perp}$ is a subspace of $\field{q}^n$ as well.
The next theorem shows the relation between $A^{\perp}$ and $A$ in
terms of spectrum generating functions.

\begin{theorem}\label{th:MacWilliamsIdentitiesF}
Let $A$ be a subspace of $\field{q}^n$ and $\mathcal{U}$ a partition
of $\mathcal{I}_n$.
Then
\[
\gf_{A^{\perp}}(\seq{u}_{\mathcal{U}})
= \frac{1}{|A^{\perp}|}
 \gf_A((\seq{u}_U \mat{M})_{U \in \mathcal{U}}),
\]
where $\mat{M}$ is the $q \times q$ matrix (indexed by the elements of
$\field{q}$) defined by
\begin{equation}\label{eq:MacWilliamsMatrixF}
\matentry{M}_{a_1, a_2} = \chi(a_1 a_2)
\qquad \forall a_1, a_2 \in \field{q},
\end{equation}
using the ``generating'' character
$\chi(x) \eqdef e^{2\pi i\, \mathrm{Tr}(x)/p}$
with
$\mathrm{Tr}(x) \eqdef x+x^p+\cdots+x^{p^{r-1}}$.
\end{theorem}

\begin{remark}
Note that $x\mapsto\mathrm{Tr}(x)$, the absolute trace of $\field{q}$,
is $\field{p}$-linear, and hence $\chi(x)$ is a homomorphism from the
additive group of $\field{q}$ to the multiplicative group
$\unitsof{\complexnumbers}$ (a so-called additive character of
$\field{q}$).
It is easy to see that $\sum_{x \in \field{q}}\chi(ax) = 0$ for
$a\ne 0$, so that $q^{-\frac{1}{2}}\mat{M}$ is a symmetric unitary
matrix.
In particular,
\begin{equation}\label{eq:MacWilliamsMatrixPropertyF}
\sum_{x \in \field{q}} \matentry{M}_{a,x}
= \sum_{x \in \field{q}} \matentry{M}_{x,a}
= q1\{a = 0\}.
\end{equation}
\end{remark}

One important application of Theorem~\ref{th:MacWilliamsIdentitiesF}
is calculating the spectrum of a linear encoder
$\seq{x} = \seq{y} \transpose{\mat{A}}$ when the spectrum of
$\seq{y} = \seq{x}\mat{A}$ is known.
The next theorem gives the details.

\begin{theorem}\label{th:MacWilliamsIdentitiesJF}
Let $\mat{A}$ be an $n \times m$ matrix over $\field{q}$.
Define the linear encoders $f: \field{q}^n \to \field{q}^m$ and
$g: \field{q}^m \to \field{q}^n$ by $f(\seq{x})\eqdef\seq{x}\mat{A}$
and $g(\seq{y})\eqdef\seq{y}\transpose{\mat{A}}$, respectively.
Let $\mathcal{U}$ be a partition of $\mathcal{I}_n$ and $\mathcal{V}$
a partition of $\mathcal{I}_m$.
Then
\[
\gf_{\field{q}^\mathcal{V}\field{q}^{\mathcal{U}}}(-g)
 (\seq{v}_{\mathcal{V}}, \seq{u}_{\mathcal{U}})
= \frac{1}{q^m}
 \gf_{\field{q}^\mathcal{U}\field{q}^{\mathcal{V}}}(f)
 ((\seq{u}_U \mat{M})_{U \in \mathcal{U}},
 (\seq{v}_V \mat{M})_{V \in \mathcal{V}}),
\]
where $\mat{M}$ is defined by \eqref{eq:MacWilliamsMatrixF}.
\end{theorem}

\section{Conclusion}\label{sec:Conclusion}

In this paper, we present some general principles and schemes for
constructing linear encoders with good joint spectra:

\begin{itemize}
\item
In Section~\ref{subsec:MRDLSCC}, we provide a family of SCC-good
random linear encoders derived from Gabidulin MRD codes.

\item
In Section~\ref{subsec:ConstructingGoodLSCCGPA}, it is proved in
Theorem~\ref{th:ConstructionOfGoodEquivalentCodes} that we can
construct $\delta$-asymptotically good LSCEs which are SC-equivalent
(resp., CC-equivalent) to given $\delta$-asymptotically good LSEs
(resp., LCEs).

\item
In Section~\ref{subsec:ConstructingGoodLSCCGPB}, we propose in
Theorem~\ref{th:ConstructionOfGoodLSCC} a general serial concatenation
scheme for constructing good LSCEs.

\item
In Section~\ref{sec:ExplicitConstruction}, the joint spectrum of
a regular LDGM encoder is analyzed.
By means of Theorem~\ref{th:ApproximateSCCGoodLDGMCode}, we show that
regular LDGM encoders with appropriate parameters are approximately
$\delta$-asymptotically SCC-good.
Based on this analysis, we finally present a serial concatenation
scheme with one encoder of an LDPC code as outer encoder and one LDGM
encoder as inner encoder, and prove it to be asymptotically SCC-good.
\end{itemize}

In addition, we define in Section~\ref{sec:ConceptsOfGoodLinearCodes}
three code-spectrum criteria for good linear encoders, so that all
important coding issues are subsumed under one single research
problem: \emph{Constructing linear encoders with good spectra}.
Through investigating the relations among these criteria, we find
that a good joint spectrum is the most important feature of a linear
encoder.

The main ideas of this paper formed during the period from 2007
to 2008.  Since then, there have been many advances in coding theory,
two of them deserving particular attention.
One is \emph{spatial coupling} \cite{JSCC:Kudekar201102}, a
fundamental mechanism that helps
increase the BP threshold of a new ensemble of codes to the MAP
threshold of its underlying ensemble.
In fact, this technique has already been used for a long time in the
design of LDPC convolutional codes \cite{JSCC:Felstrom199909}, and its
excellent iterative decoding performance is well known, e.g. from
\cite{JSCC:Lentmaier201010}.
Clearly, combining this technique with the LDGM-based scheme (in
Sec.~\ref{sec:ExplicitConstruction}) seems a promising way for
designing good coding schemes in practice.
For example, we may serially concatenate an outer encoder of a
quasi-cyclic LDPC code (e.g., \cite{JSCC:Huang201205}) with an inner
spatially-coupled regular LDGM encoder.
The other advance are \emph{polar codes}
\cite{JSCC:Arikan200907}, which constitute the first known
code construction that approaches capacity within a gap
$\epsilon>0$ with delay and complexity both depending polynomially on
$1/\epsilon$ \newChange{\cite{JSCC:Hassani201205,JSCC:Guruswami201304}}.
However, the minimum distance of a polar code is only a sublinear
function of the block length.
It is unknown if there exists a fundamental trade-off among
minimum distance, decoding complexity, gap to capacity, etc.
Regardless of whether such a law exists, it is
valuable in practice to think of the ``sublinear'' counterpart of
linear encoders with good joint spectra, that is, we may allow
$\lim_{k\to\infty} \min_{\seq{x}\ne 0^{n_k}}
 (H(P_{\seq{x}})R(f_k) + H(P_{f_k(\seq{x})})) = 0$
(cf.\ \cite{JSCC:Yang200909} and \cite[Theorem~4.1]{JSCC:Yang200904}).

\appendices

\section{Proofs of Results in Section
 \ref{sec:ConceptsOfGoodLinearCodes}}
\label{subsec:ProofOfConceptsOfGoodLinearCodes}

\begin{proofof}{Proposition \ref{pr:RelationBetweenLSCandLCC}}
Let $\mat{G}_k$ be an $l_k\times n_k$ generator matrix that yields
$\ker F_k$.
Thus $G_k(\seq{x}) \eqdef \seq{x} \mat{G}_k$ is the linear encoder
desired.
\end{proofof}

\begin{proofof}{Proposition \ref{pr:RelationBetweenLCCandLSC}}
Let $\mat{H}_k$ be an $l_k\times m_k$ parity-check matrix that yields
$F_k(\field{q}^{n_k})$.
Thus $G_k(\seq{x}) \eqdef \seq{x} \transpose{\mat{H}_k}$ is the linear encoder
desired.
\end{proofof}

\begin{proofof}{Proposition \ref{pr:RelationBetweenLSCCandLSCLCC}}
It follows from \eqref{eq:DefinitionOfAsympGoodLSCC} that
\begin{equation}\label{eq:ProofEq1OfRelationBetweenLSCCandLSCLCC}
\max_{P \in \mathcal{P}_{n_k}^*, Q \in \mathcal{P}_{m_k}}
 \frac{1}{n_k} \ln \alpha_{F_k}(P, Q)
\le \delta + \epsilon
\end{equation}
for any $\epsilon > 0$ and sufficiently large $k$.
Then it follows that
\begin{IEEEeqnarray*}{rCl}
\max_{P \in \mathcal{P}_{n_k}^*} \frac{1}{n_k}\ln\alpha_{\ker F_k}(P)
&\eqvar{(a)} &\max_{\seq{x}: \seq{x} \ne 0^{n_k}} \frac{1}{n_k}
 \ln \left( \av\left[
 \frac{q^{n_k} 1\{\seq{x} \in \Sigma_{n_k}(\ker F_k)\}}{|\ker F_k|}
 \right] \right)\\
&\le &\max_{\seq{x}: \seq{x} \ne 0^{n_k}} \frac{1}{n_k}
 \ln (q^{m_k} \pr\{\rtilde{F}_k(\seq{x}) = 0^{m_k}\})\\
&= &\max_{\seq{x}: \seq{x} \ne 0^{n_k}} \frac{1}{n_k} \ln (q^{m_k}
 \pr\{\tilde{F}_k(\seq{x}) = 0^{m_k}\})\\
&\eqvar{(b)} &\max_{P \in \mathcal{P}_{n_k}^*} \frac{1}{n_k}
 \ln \alpha_{F_k}(P, P_{0^{m_k}})\\
&\levar{(c)} &\delta + \epsilon
\end{IEEEeqnarray*}
for sufficiently large $k$, where (a) follows from
Proposition~\ref{pr:GeneralSpectrumPropertyOfSets}, (b) from
\citeSpectrumProperty, and (c) follows from
\eqref{eq:ProofEq1OfRelationBetweenLSCCandLSCLCC}.
Since $\epsilon$ is arbitrary, we conclude that $\seqx{F}$ is
$\delta$-asymptotically SC-good.

Also by \eqref{eq:ProofEq1OfRelationBetweenLSCCandLSCLCC}, we have
\begin{equation}
\avS_{F_k}(P, Q)
\le e^{n_k(\delta + \epsilon)}
 \spec_{\field{q}^{n_k} \times \field{q}^{m_k}}(P, Q)
\quad
\forall P \in \mathcal{P}_{n_k}^*, Q \in \mathcal{P}_{m_k}
\label{eq:ProofEq2OfRelationBetweenLSCCandLSCLCC}
\end{equation}
for sufficiently large $k$.
Then for any $Q \in \mathcal{P}_{m_k}^*$, it follows that
\begin{IEEEeqnarray*}{rCl}
\avS_{F_k(\field{q}^{n_k})}(Q)
&\eqvar{(a)} &\avS_{F_k}(Q)\\
&= &\sum_{P \in \mathcal{P}_{n_k}} \avS_{F_k}(P,Q)\\
&\eqvar{(b)} &\sum_{P \in \mathcal{P}_{n_k}^*} \avS_{F_k}(P,Q)\\
&\levar{(c)} &\sum_{P \in \mathcal{P}_{n_k}^*}
 e^{n_k(\delta+\epsilon)}
 \spec_{\field{q}^{n_k}\times\field{q}^{m_k}}(P,Q)\\
&\le &e^{n_k(\delta+\epsilon)} \spec_{\field{q}^{m_k}}(Q)
\end{IEEEeqnarray*}
for sufficiently large $k$, where (a) follows from the linear property
of $F_k$, (b) from $Q \ne P_{0^{m_k}}$, and (c) follows from
\eqref{eq:ProofEq2OfRelationBetweenLSCCandLSCLCC}.
Hence we have
\begin{IEEEeqnarray*}{rCl}
\limsup_{k \to \infty} \max_{Q \in \mathcal{P}_{m_k}^*} \frac{1}{m_k}
 \ln \alpha_{F_k(\field{q}^{n_k})}(Q)
&= &\limsup_{k \to \infty} \max_{Q \in \mathcal{P}_{m_k}^*}
 \frac{1}{m_k} \ln
 \frac{\avS_{F_k(\field{q}^{n_k})}(Q)}{\spec_{\field{q}^{m_k}}(Q)}\\
&\le &\limsup_{k \to \infty} \frac{1}{m_k}
 \ln e^{n_k(\delta + \epsilon)}\\
&= &(\delta + \epsilon) \overline{R}(\seqx{F}).
\end{IEEEeqnarray*}
Because $\epsilon$ is arbitrary, $\seqx{F}$ is
$\delta\overline{R}(\seqx{F})$-asymptotically CC-good.
\end{proofof}

\begin{proofof}{Proposition~\ref{pr:DiagonalArgument}}
Note that
\[
\limsup_{k \to \infty} \rho(G_{k,k})
\le \limsup_{k \to \infty} \rho(G_{i,k})
\le \delta_i \quad \forall i\in\pintegers
\]
and hence
$\limsup_{k \to \infty} \rho(G_{k,k})
 \le \inf_{i \in \pintegers} \delta_i = \delta$.
\end{proofof}

\section{Proofs of Results in Section
 \ref{subsec:ConstructingGoodLSCCGPA}}
\label{subsec:ProofOfConstructingGoodLSCCGPA}

\begin{proofof}{Theorem \ref{th:elementary_abelian}}
Condition \eqref{eq:SuperGoodLinearCodes2} implies that for every pair
$\seq{x}\in\mathcal{X}^n\setminus\{0^n\}$, $\seq{y}\in\mathcal{Y}^m$
there exists at least one linear encoder
$f:\mathcal{X}^n\to\mathcal{Y}^m$ satisfying $f(\seq{x})=\seq{y}$.%
\footnote{The existence of such an $f$ depends only on the types of
 $\seq{x}$ and $\seq{y}$, since $f$ is linear iff
 $\sigma\circ f\circ\pi$ is linear for any permutations
 $\pi \in \symmetricgroup{n}$, $\sigma \in \symmetricgroup{m}$.}
Since $m,n\geq 1$, this in turn implies, for every pair
$x\in\mathcal{X}\setminus\{0\}$, $y\in\mathcal{Y}$, the existence of
at least one group homomorphism $h:\mathcal{X}\to\mathcal{Y}$
satisfying $h(x)=y$.

If $p$ is a prime dividing $|\mathcal{X}|$, there exists
$x\in\mathcal{X}$ of order $p$.
Then, by the condition above, every $y\in\mathcal{Y}$ must have order
$1$ or $p$, so $\mathcal{Y}\cong\integers_p^s$ for some $s$.
If $|\mathcal{X}|$ had a prime divisor $q\neq p$, then
$\mathcal{Y}\cong\integers_p^s\cong\integers_q^t$ and so
$|\mathcal{Y}|=1$, a contradiction.
Thus, $\mathcal{X}$ and $\mathcal{Y}$ must be $p$-groups for the same
prime $p$, and $\mathcal{Y}$ must be elementary abelian.
Finally, if $\mathcal{X}$ contained an element $x$ of order $p^2$, we
would have $px\neq 0$ but $h(px)=ph(x)=0$ for any group homomorphism
$h:\mathcal{X}\to\mathcal{Y}$.
This implies again $|\mathcal{Y}|=1$ and concludes the proof.
\end{proofof}

To prove Theorem~\ref{th:KernelOfSuperGoodLinearCodes}, we need the
following lemma.

\begin{lemma}\label{le:KernelOfSuperGoodLinearCodes1}
Let $F: \mathcal{X}^n \to \mathcal{Y}^m$ be a random linear encoder.
If $F$ is SCC-good, then
\begin{equation}
\av[|\ker F|] = 1 + |\mathcal{Y}|^{-m} (|\mathcal{X}|^n - 1).
\end{equation}
\end{lemma}

\begin{proof}
\begin{IEEEeqnarray*}{rCl}
\av[|\ker F|]
&= &\av[|\ker \tilde{F}|] \\
&= &\av\left[\sum_{\seq{x} \in \mathcal{X}^n}
 1\{\tilde{F}(\seq{x}) = 0^m\}\right]\\
&= &1 + \sum_{\seq{x} \in \mathcal{X}^n \backslash \{0^n\}}
 \pr\{\tilde{F}(\seq{x}) = 0^m\}\\
&\eqvar{(a)} &1 + |\mathcal{Y}|^{-m} (|\mathcal{X}|^n - 1),
\end{IEEEeqnarray*}
where (a) follows from \eqref{eq:SuperGoodLinearCodes2}.
\end{proof}

\begin{proofof}{Theorem \ref{th:KernelOfSuperGoodLinearCodes}}
By Lagrange's theorem, $|\ker F|$ can take only values in
$\{1,p,\dots,p^{rn}\}$.
Hence, using Lemma~\ref{le:KernelOfSuperGoodLinearCodes1}, we obtain
\begin{IEEEeqnarray*}{rCl}
1+\frac{q^n-1}{q^n}
&= &\av[|\ker F|]\\
&\geq &\pr\{|\ker F|=1\}+p\cdot\left(1-\pr\{|\ker F|=1\}\right).
\end{IEEEeqnarray*}
Solving for $\pr\{|\ker F|=1\}$ gives the stated inequality.
\end{proofof}

\begin{proofof}{Proposition~\ref{pr:KernelOfSuperGoodLinearCodes}}
Suppose $\mathcal{X}\cong\integers_2^s$, so that
$\mathcal{X}^n\cong\integers_2^{ns}\cong(\field{2}^{ns},+)$ for all
$n\in\pintegers$.
Let $F_n:\field{2}^{ns}\to\field{2}^{ns}$ be the random linear encoder
derived from a binary $(ns,ns,2)$ Gabidulin MRD code $\mathcal{C}$ in
accordance with Theorem~\ref{th:MRDLSCC}.
By definition, the code $\mathcal{C}$ consists of $2^{2ns}$ matrices
$\mat{A}\in\field{2}^{ns\times ns}$ with
$\rank{\mat{A}}\in\{0,ns-1,ns\}$, and by the rank distribution of MRD
codes (\cite[Theorem~5.6]{JSCC:Delsarte197800} or
\cite[Theorem~5]{JSCC:Gabidulin198501}), there are $2(2^{ns}-1)$
matrices of rank $ns$ in $\mathcal{C}$, so that
\[
\lim_{n\to\infty}\pr\{|\ker F_n|=1\}
=\lim_{n\to\infty}\frac{2(2^{ns}-1)}{2^{2ns}}
=0.
\]
Since $F_n$ is SCC-good, this proves the proposition.
\end{proofof}

\begin{proofof}{Theorem \ref{th:ConstructionOfEquivalentCodes}}
Inequalities \eqref{eq:SCEquivalentProbability} and
\eqref{eq:CCEquivalentProbability} follow immediately from
\eqref{eq:RankOfRLC}, so our task is to evaluate the average
conditional spectra of $G_1$ and $G_2$.

For any $P \in \mathcal{P}_n^*$ and $Q \in \mathcal{P}_m$,
\begin{IEEEeqnarray*}{rCl}
\avS_{G_1}(Q|P)
&\eqvar{(a)} &\sum_{O \in \mathcal{P}_m} \avS_F(O|P)
 \avS_{\rlccode{m,m}}(Q|O)\\
&= &\sum_{O \in \mathcal{P}_m^*} \avS_F(O|P)
 \avS_{\rlccode{m,m}}(Q|O)\\
& &\breakop{+} \avS_F(P_{0^m}|P) \avS_{\rlccode{m,m}}(Q|P_{0^m})\\
&\eqvar{(b)} &\spec_{\field{q}^m}(Q) \sum_{O\in\mathcal{P}_m^*}
 \avS_F(O|P)\\
& &\breakop{+} 1\{Q = P_{0^m}\} \avS_F(P_{0^m}|P)\\
&\le &\spec_{\field{q}^m}(Q) + 1\{Q = P_{0^m}\} \avS_F(P_{0^m}|P)
\end{IEEEeqnarray*}
where (a) follows from
Proposition~\ref{pr:SpectrumOfSeriallyConcatenatedFunctions} and (b)
follows from \eqref{eq:SuperGoodLinearCodes1}.
This concludes \eqref{eq:SCEquivalentCodeSpectrum}.

Analogously, for any $P \in \mathcal{P}_n^*$ and
$Q \in \mathcal{P}_m$,
\begin{IEEEeqnarray*}{rCl}
\avS_{G_2}(Q|P)
&\eqvar{(a)} &\sum_{O \in \mathcal{P}_n} \avS_{\rlccode{n,n}}(O|P)
 \avS_F(Q|O)\\
&\eqvar{(b)} &\sum_{O \in \mathcal{P}_n} \spec_{\field{q}^n}(O)
 \avS_F(Q|O)\\
&= &\sum_{O \in \mathcal{P}_n} \avS_F(O,Q)\\
&= &\avS_F(Q)\\
&= &\avS_{F(\field{q}^n)}(Q)
\end{IEEEeqnarray*}
where (a) follows from
Proposition~\ref{pr:SpectrumOfSeriallyConcatenatedFunctions} and (b)
follows from \eqref{eq:SuperGoodLinearCodes1}.
This concludes \eqref{eq:CCEquivalentCodeSpectrum} and hence completes
the proof.
\end{proofof}

\begin{proofof}{Theorem \ref{th:ConstructionOfGoodEquivalentCodes}}
For the first statement, recall that the source transmission rate of
$\delta$-asymptotically good LSEs must converge.
Then for any $\epsilon>0$, since $\seqx{f}$ is $\delta$-asymptotically
SC-good and $R_c(\seqx{f}) = \ln q$, we have
\begin{equation}\label{eq:ProofEq1OfConstructionOfGoodEquivalentCodes}
\alpha_{\ker f_k}(P)
\le e^{n_k(\delta + \epsilon)}
\qquad \forall P \in \mathcal{P}_{n_k}^*
\end{equation}
and
\begin{equation}\label{eq:ProofEq2OfConstructionOfGoodEquivalentCodes}
|f_k(\field{q}^{n_k})|
\ge q^{m_k} e^{-m_k \epsilon}
\ge q^{m_k} e^{-2n_k\epsilon/R(\seqx{f})}
\end{equation}
for sufficiently large $k$.
Define $G_{1,k} \eqdef \rlccode{m_k,m_k} \circ f_k$.
It follows from Theorem~\ref{th:ConstructionOfEquivalentCodes} that
\begin{equation}\label{eq:ProofEq3OfConstructionOfGoodEquivalentCodes}
\pr\{\ker G_{1,k} = \ker f_k\} > K_q
\end{equation}
and
\[
\avS_{G_{1,k}}(Q|P)
\le \spec_{\field{q}^{m_k}}(Q)
 + 1\{Q = P_{0^{m_k}}\} \spec_{f_k}(P_{0^{m_k}}|P)
\quad
\forall P \in \mathcal{P}_{n_k}^*, Q \in \mathcal{P}_{m_k}.
\]
Hence for any $P \in \mathcal{P}_{n_k}^*$ and
$Q \in \mathcal{P}_{m_k}$,
\begin{IEEEeqnarray*}{rCl}
\alpha_{G_{1,k}}(P,Q)
&= &\frac{\avS_{G_{1,k}}(Q|P)}{\spec_{\field{q}^{m_k}}(Q)}\\
&\le &1 + 1\{Q = P_{0^{m_k}}\} \alpha_{f_k}(P, P_{0^{m_k}})\\
&\eqvar{(a)} &1 + 1\{Q = P_{0^{m_k}}\} q^{m_k}
 \pr\{\rtilde{f}_k(\seq{x}) = 0^{m_k}\}\\
&= &1 + 1\{Q = P_{0^{m_k}}\} q^{m_k}
 \pr\{\seq{x} \in \Sigma_{n_k}(\ker f_k)\}\\
&\eqvar{(b)} &1 + 1\{Q = P_{0^{m_k}}\} q^{m_k}
 \frac{|\ker f_k|}{q^{n_k}} \alpha_{\ker f_k}(P)\\
&= &1 + 1\{Q = P_{0^{m_k}}\} \frac{q^{m_k}}{|f_k(\field{q}^{n_k})|}
 \alpha_{\ker f_k}(P)\\
&\levar{(c)} &1 + 1\{Q = P_{0^{m_k}}\} e^{2n_k\epsilon/R(\seqx{f})}
 \alpha_{\ker f_k}(P)\\
&\levar{(d)} &e^{n_k(\delta + 2\epsilon + 2\epsilon/R(\seqx{f}))}
\end{IEEEeqnarray*}
for sufficiently large $k$, where (a) follows from
\citeSpectrumProperty{} and $\seq{x}$ is a vector of type $P$, (b)
from Proposition~\ref{pr:GeneralSpectrumPropertyOfSets}, (c) from
\eqref{eq:ProofEq2OfConstructionOfGoodEquivalentCodes}, and (d)
follows from \eqref{eq:ProofEq1OfConstructionOfGoodEquivalentCodes}.
Thus for sufficiently large $k$,
\[
\rho(G_{1,k}) \le \delta + 2\epsilon + \frac{2\epsilon}{R(\seqx{f})}.
\]
Define the random linear encoder $G_{1,k}'$ as $G_{1,k}$ given the
event $A_k \eqdef \{\ker G_{1,k} = \ker f_k\}$.
Then it follows that for sufficiently large $k$,
\begin{IEEEeqnarray*}{rCl}
\rho(G_{1,k}')
&\le &\rho(G_{1,k}) - \frac{1}{n_k} \ln \pr(A_k)\\
&\le &\delta + 2\epsilon
 + \frac{2\epsilon}{R(\seqx{f})} - \frac{1}{n_k} \ln \pr(A_k)\\
&\levar{(a)} &\delta + 3\epsilon + \frac{2\epsilon}{R(\seqx{f})},
\end{IEEEeqnarray*}
where (a) follows from
\eqref{eq:ProofEq3OfConstructionOfGoodEquivalentCodes}.
Since $\epsilon$ is arbitrary, $\{G_{1,k}'\}_{k=1}^\infty$ is a
sequence of $\delta$-asymptotically good LSCEs such that
$\ker G_{1,k}' = \ker f_k$.
By \cite[Proposition 4.1]{JSCC:Yang200904}, we conclude that there
exists a sequence $\{g_{1,k}\}_{k=1}^\infty$ of
$\delta$-asymptotically good LSCEs
$g_{1,k}: \field{q}^{n_k} \to \field{q}^{m_k}$ such that $g_{1,k}$ is
SC-equivalent to $f_k$ for each $k \in \pintegers$.

The proof of the second statement is analogous.
Let $\epsilon > 0$ be given.
Since $\seqx{f}$ is $\delta$-asymptotically CC-good,
\begin{equation}\label{eq:ProofEq4OfConstructionOfGoodEquivalentCodes}
\alpha_{f_k(\field{q}^{n_k})}(Q) \le e^{m_k(\delta + \epsilon)}
\qquad \forall Q \in \mathcal{P}_{m_k}^*
\end{equation}
for sufficiently large $k$.
Define $G_{2,k} \eqdef f_k \circ \rlccode{n_k,n_k}$.
Then it follows from Theorem~\ref{th:ConstructionOfEquivalentCodes}
that
\begin{equation}\label{eq:ProofEq5OfConstructionOfGoodEquivalentCodes}
\pr\{G_{2,k}(\field{q}^{n_k}) = f_k(\field{q}^{n_k})\} > K_q
\end{equation}
and
\[
\avS_{G_{2,k}}(Q|P) = \spec_{f_k(\field{q}^{n_k})}(Q)
\quad
\forall P \in \mathcal{P}_{n_k}^*, Q \in \mathcal{P}_{m_k}.
\]
Hence for any $P \in \mathcal{P}_{n_k}^*$ and
$Q \in \mathcal{P}_{m_k}$,
\begin{IEEEeqnarray*}{rCl}
\alpha_{G_{2,k}}(P,Q)
&= &\frac{\avS_{G_{2,k}}(Q|P)}{\spec_{\field{q}^{m_k}}(Q)}\\
&= &\frac{\spec_{f_k(\field{q}^{n_k})}(Q)}
 {\spec_{\field{q}^{m_k}}(Q)}\\
&= &\alpha_{f_k(\field{q}^{n_k})}(Q)
\end{IEEEeqnarray*}
for sufficiently large $k$.
Define the random linear encoder $G_{2,k}'$ as $G_{2,k}$ given the
event
$B_k \eqdef \{G_{2,k}(\field{q}^{n_k}) = f_k(\field{q}^{n_k})\}$.
Then it follows that for any $P \in \mathcal{P}_{n_k}^*$ and
$Q \in \mathcal{P}_{m_k}^*$,
\[
\alpha_{G_{2,k}'}(P,Q)
\le \frac{\alpha_{f(\field{q}^{n_k})}(Q)}{\pr(B_k)}
\levar{(a)} e^{m_k(\delta + 2\epsilon)}
\]
for sufficiently large $k$, where (a) follows from
\eqref{eq:ProofEq4OfConstructionOfGoodEquivalentCodes} and
\eqref{eq:ProofEq5OfConstructionOfGoodEquivalentCodes}.
Since $f_k$ is injective, we have
\[
\alpha_{G_{2,k}'}(P,P_{0^{m_k}}) = 0
\qquad \forall P \in \mathcal{P}_{n_k}^*
\]
and $R(f_k) = R_c(f_k)/\ln q$ converges as $k \to \infty$.
Therefore,
\[
\rho(G_{2,k}') \le \frac{\delta + 2\epsilon}{R(\seqx{f})} + \epsilon
\]
for sufficiently large $k$.
Since $\epsilon$ is arbitrary, $\{G_{2,k}'\}_{k=1}^\infty$ is a
sequence of $\delta/R(\seqx{f})$-asymptotically good LSCEs such that
$G_{2,k}'(\field{q}^{n_k}) = f_k(\field{q}^{n_k})$.
By \cite[Proposition 4.1]{JSCC:Yang200904}, we conclude that there
exists a sequence $\{g_{2,k}\}_{k=1}^\infty$ of
$\delta/R(\seqx{f})$-asymptotically good LSCEs
$g_{2,k}: \mathcal{X}^{n_k} \to \mathcal{Y}^{m_k}$ such that $g_{2,k}$
is CC-equivalent to $f_k$ for each $k \in \pintegers$.
\end{proofof}

\section{Proofs of Results in Section
 \ref{subsec:ConstructingGoodLSCCGPB}}
\label{subsec:ProofOfConstructingGoodLSCCGPB}

\begin{proofof}{Theorem \ref{th:ConstructionOfGoodLSCC}}
For any $\epsilon > 0$, since $G_k$ is $\delta$-asymptotically
SCC-good relative to $A_k$, we have
\begin{equation}
\avS_{G_k}(Q|P) \le e^{m_k(\delta+\epsilon)}\spec_{\field{q}^{l_k}}(Q)
\quad \forall P \in A_k, Q \in P_{l_k}
\label{eq:ConstructionOfGoodLSCCEq1}
\end{equation}
for sufficiently large $k$.
Then for all $O \in \mathcal{P}_{n_k}^*$ and $Q \in P_{l_k}$,
\begin{IEEEeqnarray*}{rCl}
\avS_{G_k \circ \Sigma_{m_k} \circ F_k}(Q|O)
&\eqvar{(a)} &\sum_{P \in \mathcal{P}_{m_k}} \avS_{F_k}(P|O)
 \avS_{G_k}(Q|P)\\
&\eqvar{(b)} &\sum_{P \in A_k} \avS_{F_k}(P|O) \avS_{G_k}(Q|P)\\
&\levar{(c)} &\sum_{P \in A_k} e^{m_k(\delta + \epsilon)}
 \spec_{\field{q}^{l_k}}(Q) \avS_{F_k}(P|O)\\
&\le &e^{m_k(\delta + \epsilon)} \spec_{\field{q}^{l_k}}(Q)
\end{IEEEeqnarray*}
for sufficiently large $k$, where (a) follows from
Proposition~\ref{pr:SpectrumOfSeriallyConcatenatedFunctions}, (b) from
condition \eqref{eq:ConstructionOfGoodLSCC1}, and (c) follows from
\eqref{eq:ConstructionOfGoodLSCCEq1}.
Therefore, for sufficiently large $k$,
\[
\rho(G_k \circ \Sigma_{m_k} \circ F_k)
\le \frac{\delta + \epsilon}{\underline{R}(\seqx{F})} + \epsilon.
\]
Since $\epsilon$ is arbitrary, this establishes the theorem.
\end{proofof}

\begin{proofof}{Proposition
 \ref{pr:ConstructionOfGoodLSCCbySurjectiveMapping}}
For any $\epsilon > 0$, since $F_k$ is $\delta$-asymptotically
SCC-good, we have
\begin{equation}
\avS_{F_k}(P|O) \le e^{n_k(\delta+\epsilon)}\spec_{\field{q}^{m_k}}(P)
\quad
\forall O \in P_{n_k}^*, P \in P_{m_k}
\label{eq:ConstructionOfGoodLSCCbySurjectiveMappingEq1}
\end{equation}
for sufficiently large $k$.
Then for all $O \in \mathcal{P}_{n_k}^*$ and $Q \in P_{l_k}$,
\begin{IEEEeqnarray*}{rCl}
\avS_{G_k \circ \Sigma_{m_k} \circ F_k}(Q|O)
&\eqvar{(a)} &\sum_{P \in \mathcal{P}_{m_k}} \avS_{F_k}(P|O)
 \avS_{G_k}(Q|P)\\
&\levar{(b)} &\sum_{P \in \mathcal{P}_{m_k}}
 e^{n_k(\delta+\epsilon)} \spec_{\field{q}^{m_k}}(P)\avS_{G_k}(Q|P)\\
&= &e^{n_k(\delta + \epsilon)} \sum_{P \in \mathcal{P}_{m_k}}
 \avS_{G_k}(P,Q)\\
&= &e^{n_k(\delta + \epsilon)} \avS_{G_k}(Q)\\
&= &e^{n_k(\delta + \epsilon)} \avS_{G_k(\field{q}^{m_k})}(Q)\\
&\eqvar{(c)} &e^{n_k(\delta + \epsilon)} \spec_{\field{q}^{l_k}}(Q)
\end{IEEEeqnarray*}
for sufficiently large $k$, where (a) follows from
Proposition~\ref{pr:SpectrumOfSeriallyConcatenatedFunctions}, (b) from
\eqref{eq:ConstructionOfGoodLSCCbySurjectiveMappingEq1}, and (c)
follows from the surjectivity of $G_k$.
Therefore,
$\rho(G_k \circ \Sigma_{m_k} \circ F_k) \le \delta + \epsilon$
for sufficiently large $k$.
This concludes the proof, because $\epsilon$ is arbitrary.
\end{proofof}

\section{Proofs of Results in Section \ref{sec:ExplicitConstruction}}
\label{subsec:ProofOfExplicitConstruction}

\begin{proofof}{Proposition \ref{pr:SpectrumOfREPCode}}
The identity \eqref{eq:SpectrumOfREPCodeA1} holds clearly.
This together with Proposition~\ref{pr:RenameGeneratingFunction}
gives \eqref{eq:SpectrumOfREPCodeA2}.

From \eqref{eq:SpectrumOfREPCodeA1} and
Corollary~\ref{co:GeneratingFunctionOfFunctionParallelProduct}, it
further follows that
\begin{IEEEeqnarray*}{rCl}
\gf_{\repcode{c,n}}(\seq{u},\seq{v})
&= &\left( \frac{1}{q} \sum_{a \in \field{q}} u_a v_a^c \right)^n\\
&= &\sum_{P \in \mathcal{P}_n} \spec_{\field{q}^n}(P) \seq{u}^{nP}
 \seq{v}^{ncP}.
\end{IEEEeqnarray*}
This proves \eqref{eq:SpectrumOfREPCodeB}, and then identities
\eqref{eq:SpectrumOfREPCodeC} and \eqref{eq:SpectrumOfREPCodeD}
follows.
\end{proofof}

\begin{proofof}{Proposition \ref{pr:SpectrumOfCHKCode}}
Note that the generator matrix of $\rchkcode{d}$ is the transpose of
the generator matrix of $\rrepcode{d}$.
Then by Theorem~\ref{th:MacWilliamsIdentitiesJF}, it follows that
\begin{IEEEeqnarray*}{rCl}
\avG_{-\rchkcode{d}}(\seq{u},\seq{v})
&= &\frac{1}{q^d} \avG_{\rrepcode{d}}(\hat{\seq{v}},\hat{\seq{u}})\\
&\eqvar{(a)} &\frac{1}{q^{d+1}} \left[ \hat{u}_0^d \hat{v}_0
 + \left( \frac{\hat{\seq{u}}_\oplus - \hat{u}_0}{q-1} \right)^d
 (\hat{\seq{v}}_\oplus - \hat{v}_0) \right]\\
&\eqvar{(b)} &\frac{1}{q^{d+1}} \left[ (\seq{u}_\oplus)^d
 \seq{v}_\oplus + \left( \frac{qu_0 - \seq{u}_\oplus}{q-1} \right)^d
 (qv_0 - \seq{v}_\oplus) \right]
\end{IEEEeqnarray*}
where $\hat{\seq{u}} = \seq{u} \mat{M}$ and
$\hat{\seq{v}} = \seq{v} \mat{M}$, (a) follows from
Proposition~\ref{pr:SpectrumOfREPCode}, and (b) follows from property
\eqref{eq:MacWilliamsMatrixPropertyF}.
This together with Proposition~\ref{pr:RenameGeneratingFunction}
concludes \eqref{eq:SpectrumOfCHKCodeA}.

By Corollary~\ref{co:GeneratingFunctionOfFunctionParallelProduct}, we
further have
\begin{IEEEeqnarray*}{rCl}
\avG_{\rchkcode{d,n}}(\seq{u},\seq{v})
&= &\frac{1}{q^{n(d+1)}} \left[ (\seq{u}_\oplus)^d \seq{v}_\oplus
 + \left( \frac{qu_0 - \seq{u}_\oplus}{q-1} \right)^d
 (qv_0 - \seq{v}_\oplus) \right]^n\\
&= &\frac{1}{q^{n(d+1)}} \Biggl\{ \left[ (\seq{u}_\oplus)^d
 + (q-1)\left( \frac{qu_0 - \seq{u}_\oplus}{q-1} \right)^d
 \right] v_0\\
& &\breakop{+} \left[ (\seq{u}_\oplus)^d
 - \left( \frac{qu_0 - \seq{u}_\oplus}{q-1} \right)^d \right]
 (\seq{v}_\oplus - v_0) \Biggr\}^n\\
&= &\frac{1}{q^{n(d+1)}} \sum_{Q \in \mathcal{P}_{n}}
 \Biggl\{ {n \choose nQ} \seq{v}^{nQ}\\
& &\breakop{\times} \left[ (\seq{u}_\oplus)^d
 + (q-1)\left( \frac{qu_0 - \seq{u}_\oplus}{q-1} \right)^d
 \right]^{nQ(0)}\\
& &\breakop{\times} \left[ (\seq{u}_\oplus)^d
 - \left( \frac{qu_0 - \seq{u}_\oplus}{q-1} \right)^d
 \right]^{n(1-Q(0))} \Biggr\}.
\end{IEEEeqnarray*}
Hence,
\begin{IEEEeqnarray*}{rCl}
\avS_{\rchkcode{d,n}}(P,Q)
&= &[\seq{u}^{dnP}\seq{v}^{nQ}] \left(
 \avG_{\rchkcode{d,n}}(\seq{u},\seq{v}) \right)\\
&= &[\seq{u}^{dnP}] \left( g_{d,n}^{(1)}(\seq{u},Q) \right)
\end{IEEEeqnarray*}
where $g_{d,n}^{(1)}(\seq{u}, Q)$ is defined by
\eqref{eq:DefinitionOfCHKCodeGF1}.
This proves \eqref{eq:SpectrumOfCHKCodeB}.

Since $g_{d,n}^{(1)}(\seq{u}, Q)$ is a polynomial with nonnegative
coefficients, $[\seq{u}^{dnP}](g_{d,n}^{(1)}(\seq{u},Q))$ can be
bounded above by
\[
[\seq{u}^{dnP}] \left( g_{d,n}^{(1)}(\seq{u}, Q) \right)
\le \frac{g_{d,n}^{(1)}(O, Q)}{O^{dnP}}
= g_{d,n}^{(2)}(O, P, Q)
\]
where $O$ is an arbitrary type in $\mathcal{P}_{dn}$ such that
$P\ll O$, and $g_{d,n}^{(2)}(O, P, Q)$ is defined by
\eqref{eq:DefinitionOfCHKCodeGF2}.
This gives \eqref{eq:SpectrumOfCHKCodeC}.

Finally, let us estimate $\alpha_{\rchkcode{d,n}}(P,Q)$.
\begin{IEEEeqnarray*}{rCl}
\alpha_{\rchkcode{d,n}}(P,Q)
&\le &\frac{g_{d,n}^{(2)}(O,P,Q)}
 {\spec_{\field{q}^{dn} \times \field{q}^n}(P,Q)}\\
&= &\frac{1}{{dn \choose dnP} O^{dnP}} \left[ 1 + (q - 1)
 \left( \frac{qO(0) - 1}{q-1} \right)^d \right]^{nQ(0)}\\
& &\breakop{\times} \left[ 1 - \left( \frac{qO(0) - 1}{q-1} \right)^d
 \right]^{n(1-Q(0))}\\
&= &\frac{e^{dnH(P)} P^{dnP}}{{dn \choose dnP} O^{dnP}} \left[
 1 + (q - 1) \left( \frac{qO(0) - 1}{q-1} \right)^d \right]^{nQ(0)}\\
& &\breakop{\times} \left[ 1 - \left( \frac{qO(0) - 1}{q-1} \right)^d
 \right]^{n(1-Q(0))}\\
&= &e^{dn\Delta_{dn}(P)} e^{dn D(P\|O)}\\
& &\breakop{\times} \left[ 1 + (q - 1) \left( \frac{qO(0) - 1}{q-1}
 \right)^d \right]^{nQ(0)}\\
& &\breakop{\times} \left[ 1 - \left( \frac{qO(0) - 1}{q-1} \right)^d
 \right]^{n(1-Q(0))}.
\end{IEEEeqnarray*}
Note that $D(P\|O) \ge D(P(0)\|O(0))$ with equality if and only if
$P(a)(1-O(0)) = O(a)(1-P(0))$ for all $a \ne 0$, and thus we obtain a
minimized upper bound \eqref{eq:SpectrumOfCHKCodeD}.
\end{proofof}

\begin{proofof}{Theorem \ref{th:SpectrumOfLDGMCode}}
By the definition of $\ldcode{c,d,n}$ and
Proposition~\ref{pr:SpectrumOfSeriallyConcatenatedFunctions}, it
follows that
\begin{IEEEeqnarray*}{rCl}
\avS_{\ldcode{c,d,n}}(Q|P)
&= &\sum_{O \in \mathcal{P}_{cn}} \avS_{\repcode{c,n}}(O|P)
 \avS_{\rchkcode{d,cn/d}}(Q|O)\\
&\eqvar{(a)} &\sum_{O \in \mathcal{P}_{cn}} 1\{O = P\}
 \avS_{\rchkcode{d,cn/d}}(Q|O)\\
&= &\avS_{\rchkcode{d,cn/d}}(Q|P),
\end{IEEEeqnarray*}
where (a) follows from \eqref{eq:SpectrumOfREPCodeD}.
This proves \eqref{eq:SpectrumOfLDGMCodeA}.

Furthermore, we have
\begin{IEEEeqnarray*}{rCl}
\frac{1}{n} \ln \alpha_{\ldcode{c,d,n}}(P,Q)
&= &\frac{1}{n} \ln \frac{\avS_{\ldcode{c,d,n}}(Q|P)}
 {\spec_{\field{q}^{cn/d}}(Q)}\\
&\eqvar{(a)} &\frac{1}{n} \ln \frac{\avS_{\rchkcode{d,cn/d}}(Q|P)}
 {\spec_{\field{q}^{cn/d}}(Q)}\\
&= &\frac{1}{n} \ln \alpha_{\rchkcode{d,cn/d}}(P,Q)\\
&\levar{(b)} &\frac{c}{d} \delta_{d}(P(0), Q(0)) + c \Delta_{cn}(P),
\end{IEEEeqnarray*}
where (a) follows from \eqref{eq:SpectrumOfLDGMCodeA} and (b) follows
from \eqref{eq:SpectrumOfCHKCodeD}.
This concludes \eqref{eq:SpectrumOfLDGMCodeB} and hence completes the
proof.
\end{proofof}

To prove Theorem~\ref{th:ApproximateSCCGoodLDGMCode}, we need the
following lemma.

\begin{lemma}\label{le:UpperBoundOfDelta}
For all $x, y \in [0, 1]$,
\[
\delta_{d}(x,y) \le J_{d}(x, y)
\le \ln \left[ 1 + (qy-1) \left( \frac{qx-1}{q-1} \right)^d \right]
\]
where $\delta_d(x,y)$ and $J_d(x,y)$ are defined by
\eqref{eq:DefinitionOfDelta} and \eqref{eq:DefinitionOfJ},
respectively.
\end{lemma}

\begin{proof}
When $x \in (0, 1)$ and $y \in [0, 1]$, the first inequality clearly
holds by taking $\hat{x} = x$ in \eqref{eq:DefinitionOfDelta}.
If however $x = 0$, then
\[
\lim_{\hat{x} \to 0} \delta_{d}(0, \hat{x}, y)
= \lim_{\hat{x} \to 0} \left( d\ln
 \frac{1}{1-\hat{x}} + J_{d}(\hat{x},y) \right)
= J_{d}(0,y).
\]
Hence $\delta_{d}(0, y) \le J_{d}(0,y)$.
A similar argument also applies to the case of $x = 1$.
The second inequality follows from Jensen's inequality.
\end{proof}

\begin{proofof}{Theorem \ref{th:ApproximateSCCGoodLDGMCode}}
By Theorem~\ref{th:SpectrumOfLDGMCode},
Lemma~\ref{le:UpperBoundOfDelta}, and the condition $r_0=d/c$, it
follows that
\begin{IEEEeqnarray*}{rCl}
\frac{1}{n} \ln \alpha_{\ldcode{c,d,n}}(P,Q)
&\le &\frac{1}{r_0} \ln\left[1+(qQ(0)-1) \left( \frac{qP(0)-1}{q-1}
 \right)^d\right] + c \Delta_{cn}(P)\\
&\levar{(a)} &\frac{1}{r_0}
 \ln\left[1+(q-1)\left|\frac{qP(0)-1}{q-1}\right|^d\right]
 + \frac{q\ln(cn+1)}{n}
\end{IEEEeqnarray*}
where (a) follows from the strict increasing property of $\ln x$,
$Q(0) \in [0,1]$, and the inequality
\[
{n \choose nP} \ge \frac{1}{(n+1)^q} e^{nH(P)}
\quad \mbox{ (see \cite[Lemma~2.3]{JSCC:Csiszar198100})}.
\]
Note that $\lim_{n \to \infty} q\ln(cn+1)/n = 0$.
All conclusions of the theorem follow immediately.
\end{proofof}

\begin{proofof}{Proposition \ref{pr:SingleCodeSpectrum}}
\begin{IEEEeqnarray*}{rCl}
\max_{\substack{P \in \mathcal{P}_n^*(\mathcal{X}),
 \\ Q \in \mathcal{P}_m(\mathcal{Y})}} \alpha_f(P,Q)
&\ge &\max_{Q \in \mathcal{P}_m(\mathcal{Y})}
 \frac{\spec_f(Q|P_{a^n})}
 {\spec_{\mathcal{Y}^m}(Q)}\\
&\ge &\frac{\max_{Q \in \mathcal{P}_m(\mathcal{Y})}\spec_f(Q|P_{a^n})}
 {\max_{Q \in \mathcal{P}_m(\mathcal{Y})} \spec_{\mathcal{Y}^m}(Q)}\\
&= &\frac{|\mathcal{Y}|^m}{\max_{Q \in \mathcal{P}_m(\mathcal{Y})}
 {m \choose mQ}}\\
&\eqvar{(a)} &\frac{|\mathcal{Y}|^m}
 {\order\left(m^{-\frac{|\mathcal{Y}|-1}{2}} |\mathcal{Y}|^m\right)}\\
&= &\order\left(m^{\frac{|\mathcal{Y}|-1}{2}}\right)
\end{IEEEeqnarray*}
where $a\in\mathcal{X}\setminus\{0\}$ and (a) follows from Stirling's
approximation.
\end{proofof}

\omitted{%

\section{Omitted Material of
Section~\ref{sec:BasicsOfCodeSpectrumApproach}}

\begin{example}\label{ex:HammingCode1}
We start with the binary $[7,4,3]$ Hamming code, the smallest
non-trivial perfect code. 
Here $\mathcal{X}=\mathcal{Y}=\field{2}$, $n=4$, $m=7$.
The encoder $f\colon\field{2}^4\to\field{2}^7$ can be taken as
$f(\seq{x})=\seq{x}\mat{G}_1$ with
\begin{equation}
  \label{eq:Ham(3)}
  \mat{G}_1=
  \begin{pmatrix}
    1&1&0&1&0&0&0\\
    0&1&1&0&1&0&0\\
    0&0&1&1&0&1&0\\
    1&1&0&0&1&0&1
  \end{pmatrix}.
\end{equation}
(We have replaced the last row of the more common cyclic generating
matrix by the complement of the third row.)
This particular choice of $\mat{G}_1$ ensures that $(1111111)$ encodes
a message of weight $2$.
The input-output weight distribution of $f$, counting the number of
message-codeword pairs $\bigl(\seq{x},f(\seq{x})\bigr)$ having fixed
weight pair $(w_1,w_2)\in\{0,1,2,3,4\}\times\{0,1,2,3,4,5,6,7\}$, is
given by the following array (with zero entries omitted):
\begin{equation*}
  \begin{array}{c|cccccccc}
    w_1\backslash w_2&0&1&2&3&4&5&6&7\\\hline
    0&1\\
    1&&&&3&1\\
    2&&&&2&3&&&1\\
    3&&&&1&3\\
    4&&&&1
  \end{array}
\end{equation*}
Up to the normalizing factor $\frac{1}{16}$ this is also the spectrum
of $f$ (since we are in the binary case).
The function $\alpha_f(P,Q)$ or, equivalently, $\alpha_f(w_1,w_2)$
is obtained by dividing each entry of this array by the corresponding
number $\binom{4}{w_1}\binom{7}{w_2}$ (total number of pairs
$(\seq{x},\seq{y})\in\field{2}^4\times\field{2}^7$ having weight pair
$(w_1,w_2)$) and scaling by $2^{11}/2^4=128$.
The numbers $\alpha_f(w_1,w_2)$ are shown in the following table:
\begin{equation*}
\renewcommand{\arraystretch}{1.2}
  \begin{array}{c|cccccccc}
    w_1\backslash w_2&0&1&2&3&4&5&6&7\\\hline
    0&128\\
    1&&&&\frac{96}{35}&\frac{32}{35}\\
    2&&&&\frac{128}{105}&\frac{192}{105}&&&\frac{64}{3}\\
    3&&&&\frac{32}{35}&\frac{96}{35}\\
    4&&&&\frac{128}{35}
  \end{array}
\end{equation*}
The encoder $f$ has been chosen in such a way that it minimizes the
maximum of $\alpha_f(w_1,w_2)$, taken over all $(w_1,w_2)$ with
$w_1\neq 0$.
The corresponding maximum is $\alpha_f(2,7)=\frac{64}{3}$.
\end{example}

As we shall see later, the Hamming code considered in
Example~\ref{ex:HammingCode1} is not a good channel code in the sense
of \eqref{eq:DefinitionOfAsympGoodLCC}, because it contains the
all-one codeword.
This also implies that its associated encoder $f$ cannot have a small
maximum of $\alpha_f$ (over all $(w_1,w_2)$ with $w_1\neq 0$).
Indeed, the optimal encoder $f$ in Example~\ref{ex:HammingCode1} has
$\max_{w_1\neq 0,w_2}\alpha_f(w_1,w_2)=\frac{64}{3}$, which is far
from the lower bound
\[
\frac{2^n}{\max_{0\le k \le n} {n \choose k}}
= \frac{2^7}{{7 \choose 3}}
= \frac{128}{35}
\qquad \mbox{(see Proposition~\ref{pr:SingleCodeSpectrum})}
\]
for binary linear $[7,4]$ codes.
However, this lower bound can be achieved by choosing a different
code, as our next example shows.

\begin{example}\label{ex:[7,4]2}
We extend the binary $[7,3,4]$ simplex code (even-weight subcode of
the Hamming code) by a word of weight $1$ to a linear $[7,4,1]$ code
$C$.
The weight distribution of $C$ is then $A_0=A_1=1$, $A_2=0$, $A_3=4$,
$A_4=7$, $A_5=3$, $A_6=A_7=0$.
The encoder $f\colon\field{2}^4\to\field{2}^7$ is chosen in such a way
that the four codewords of small and large weight (weights $1$ and
$5$) encode words of weight $2$.
This can be done, since these four codewords are linearly dependent.
For example, we can choose $f(\seq{x})=\seq{x}\mat{G}_2$ with
\begin{equation}
\mat{G}_2=
\begin{pmatrix}
  1&1&1&1&0&0&0\\
  1&0&0&0&0&1&1\\
  0&0&1&1&1&1&0\\
  1&1&1&0&0&0&0
\end{pmatrix}.\label{eq:[7,4]2}
\end{equation}
The input-output weight distribution of $f$ is
\begin{equation*}
  \begin{array}{c|ccccccccc}
    w_1\backslash w_2&0&1&2&3&4&5&6&7\\\hline
    0&1\\
    1&&&&2&2\\
    2&&1&&&2&3\\
    3&&&&2&2\\
    4&&&&&1
  \end{array}
\end{equation*}
and the maximum of $\alpha_f$ over all $(w_1,w_2)$ with $w_1\neq 0$ is
\[
\alpha_f(4,4)
= \frac{1/2^4}{{4 \choose 4}{7 \choose 4}/2^{11}}
= \frac{128}{35},
\]
meeting the lower bound as asserted.
Further values close to the lower bound are
$\alpha_f(2,1)=\alpha_f(2,5)=\frac{64}{21}$.
\end{example}

From the perspective of traditional coding theory, it is absurd to
state that a linear code of minimum distance one is better than a
perfect linear code of minimum distance three (and otherwise the same
parameters).  This is mainly because
the code length of our examples is too short.  In fact, as length goes
to infinity, any linear code that has a
linear encoder achieving the lower
bound of Proposition~\ref{pr:SingleCodeSpectrum} (or up to an
exponentially negligible factor) satisfies the asymptotic
Gilbert-Varshamov (GV) bound (see \cite[Remark~4.1]{JSCC:Yang200904}
and Theorem~\ref{th:exGVB}).  Moreover, as proven in
\cite{JSCC:Yang200904}, these linear encoders are universal for
all sources and channels, although the decoder may be
dependent on the source and channel.
We shall dig into this issue in Section~\ref{sec:gleext}, where we
show that encoder \eqref{eq:[7,4]2} is in fact better than encoder
\eqref{eq:Ham(3)} in some sense.


\section{Omitted Material of
Section~\ref{sec:ConceptsOfGoodLinearCodes}}
\label{sec:gleext}

For better understanding of the definitions of good linear
encoders, let us review the original requirements of good linear
encoders for lossless source coding, channel coding, and lossless
JSCC, respectively.%
\footnote{This review is merely based on the ideas and results in
 previous papers.
 For technical reasons, the requirements we give here are not the same
 as those in the literature.}

\emph{Lossless source coding} \cite{JSCC:Yang200503}:
A sequence $\seqx{F}$ of random
linear encoders with the asymptotic source transmission rate
$R_s(\seqx{F})$ is said to be $\delta$-asymptotically good for
lossless source coding if for any $\epsilon > 0$ there exists a
sequence of events $A_k \in \mathcal{A}$ such that for sufficiently
large $k$,
\begin{equation}
\pr(A_k)\ge 1-\epsilon,\label{eq:OriginalDefinitionX1OfAsympGoodLSC}
\end{equation}
\begin{equation}
|R_s(F_k) - R_s(\seqx{F})|
\le \epsilon \qquad \forall \omega \in A_k,
\label{eq:OriginalDefinitionX2OfAsympGoodLSC}
\end{equation}
\begin{equation}
\max_{\seq{x}, \hat{\seq{x}}: \seq{x} \ne \hat{\seq{x}}}
 \frac{1}{n_k} \ln \pr\{F_k(\seq{x}) = F_k(\hat{\seq{x}}) | A_k \}
\le -R_s(\seqx{F}) + \delta + \epsilon.
\label{eq:OriginalDefinitionOfAsympGoodLSC}
\end{equation}
The use of event $A_k$ is to exclude some encoders with unwanted rates
or some bad encoders that may have a major impact on the average
performance.
In coding theory such a technique is called ``expurgating code
ensembles''.
Since $F_k$ is linear,
condition~\eqref{eq:OriginalDefinitionOfAsympGoodLSC} is equivalent to
\begin{equation}
\max_{\seq{x}: \seq{x} \ne \seq{0}}
 \frac{1}{n_k} \ln \pr\{\seq{x} \in \ker F_k | A_k \}
\le -R_s(\seqx{F}) + \delta + \epsilon.
\label{eq:OriginalDefinition2OfAsympGoodLSC}
\end{equation}

\emph{Channel coding}
\cite{JSCC:Gallager196800,JSCC:Shulman199909,JSCC:Bennatan200403}:
A sequence $\seqx{F}$ of random linear encoders with the asymptotic
channel transmission rate $R_c(\seqx{F})$ is said to be
$\delta$-asymptotically good for channel coding if for any
$\epsilon > 0$ there exists a sequence of events $A_k \in \mathcal{A}$
such that for sufficiently large $k$,
\begin{equation}
\pr(A_k)\ge 1-\epsilon,\label{eq:OriginalDefinitionX1OfAsympGoodLCC}
\end{equation}
\begin{equation}
|R_c(F_k) - R_c(\seqx{F})|
\le \epsilon \qquad \forall \omega \in A_k,
\label{eq:OriginalDefinitionX2OfAsympGoodLCC}
\end{equation}
\begin{equation}\label{eq:OriginalDefinitionOfAsympGoodLCC}
\max_{\seq{y}, \hat{\seq{y}}: \seq{y} \ne \hat{\seq{y}}}
 \frac{1}{m_k} \ln \left( \frac{\pr\{\seq{y} \in \mathcal{C}_{F_k},
  \hat{\seq{y}} \in \mathcal{C}_{F_k} | A_k\}}
 {\pr\{\seq{y} \in \mathcal{C}_{F_k} | A_k\}
  \pr\{\hat{\seq{y}} \in \mathcal{C}_{F_k} | A_k\}} \right)
\le \delta + \epsilon,
\end{equation}
where $\mathcal{C}_{F_k} \eqdef F_k(\field{q}^{n_k}) + \bar{Y}^{m_k}$,
and $\bar{Y}^{m_k}$ is a uniform random vector on $\field{q}^{m_k}$.
Clearly, for any $f_k \in F_k(\Omega)$,
$\pr\{\seq{y} \in \mathcal{C}_{f_k}\}
 = f_k(\field{q}^{n_k})/|q^{m_k}|$,
so it follows from \eqref{eq:OriginalDefinitionX2OfAsympGoodLCC} that
\[
\left| \frac{1}{m_k} \ln \pr\{\seq{y} \in \mathcal{C}_{F_k} | A_k\}
 + \ln q - R_c(\seqx{F}) \right|
\le \epsilon
\]
for all $\seq{y} \in \field{q}^{m_n}$.
Because $F_k$ is linear, we also have
\begin{IEEEeqnarray*}{rCl}
\pr\{\seq{y}\in\mathcal{C}_{F_k}, \hat{\seq{y}}\in\mathcal{C}_{F_k}
 | A_k\}
&= &\pr\{\seq{y} \in \mathcal{C}_{F_k},
 \hat{\seq{y}} - \seq{y} \in F_k(\field{q}^{n_k}) | A_k\} \\
&= &\sum_{f_k} \pr\{F_k = f_k | A_k\}
 \av[1\{\seq{y} \in \mathcal{C}_{f_k}, \hat{\seq{y}} - \seq{y}
 \in f_k(\field{q}^{n_k})\}] \\
&= &\sum_{f_k} \pr\{F_k = f_k | A_k\}
 \pr\{\seq{y} \in \mathcal{C}_{f_k}\}
 1\{\hat{\seq{y}} - \seq{y} \in f_k(\field{q}^{n_k})\}.
\end{IEEEeqnarray*}
This together with \eqref{eq:OriginalDefinitionX2OfAsympGoodLCC} gives
\[
\biggl| \frac{1}{m_k} \ln \frac{\pr\{\seq{y} \in \mathcal{C}_{F_k},
 \hat{\seq{y}} \in \mathcal{C}_{F_k} | A_k\}}
 {\pr\{\hat{\seq{y}}-\seq{y} \in F_k(\field{q}^{n_k}) | A_k\}} + \ln q
 - R_c(\seqx{F}) \biggr|
\le \epsilon,
\]
so that condition \eqref{eq:OriginalDefinitionOfAsympGoodLCC} can be
rewritten as
\begin{equation}
\max_{\seq{y}: \seq{y} \ne \seq{0}} \frac{1}{m_k}
 \ln \pr\{\seq{y} \in F_k(\field{q}^{n_k}) | A_k\}
\le R_c(\seqx{F}) - \ln q + \delta + \epsilon.
\label{eq:OriginalDefinition2OfAsympGoodLCC}
\end{equation}

\emph{Lossless JSCC} \cite{JSCC:Yang200904}:
A sequence $\seqx{F}$ of random
linear encoders is said to be $\delta$-asymptotically good for
lossless JSCC if
\begin{equation}
\limsup_{k \to \infty}
 \max_{\scriptstyle \seq{x}, \hat{\seq{x}}: \seq{x} \ne \hat{\seq{x}}
  \atop \scriptstyle \seq{y}, \hat{\seq{y}}}
 \frac{1}{n_k} \ln \left(
 \frac{\pr\{\mathcal{F}_{F_k}(\seq{x}) = \seq{y},
  \mathcal{F}_{F_k}(\hat{\seq{x}}) = \hat{\seq{y}}\}}
  {\pr\{\mathcal{F}_{F_k}(\seq{x}) = \seq{y}\}
   \pr\{\mathcal{F}_{F_k}(\hat{\seq{x}}) = \hat{\seq{y}}\}} \right)
\le \delta,
\label{eq:OriginalDefinitionOfAsympGoodLSCC}
\end{equation}
where $\mathcal{F}_{F_k}(\seq{x}) \eqdef F_k(\seq{x}) + \bar{Y}^{m_k}$
and $\bar{Y}^{m_k}$ is a uniform random vector on $\field{q}^{m_k}$.
By the arguments in the proof of
\cite[Proposition~2.6]{JSCC:Yang200904}, we have the following
alternative condition:
\begin{equation}\label{eq:OriginalDefinition2OfAsympGoodLSCC}
\limsup_{k \to \infty}
 \max_{\scriptstyle \seq{x}: \seq{x} \ne \seq{0} \atop \scriptstyle
  \seq{y}}
 \frac{1}{n_k} \ln \left( q^{m_k} \pr\{F_k(\seq{x})=\seq{y}\} \right)
\le \delta.
\end{equation}

The requirements above are fundamental, but
are not easy and convenient for use.  The next three propositions show
that spectra of linear encoders can serve as alternative criteria for
good linear encoders, and that the uniform random permutation is a
useful tool for constructing good linear encoders.

\begin{proposition}\label{pr:KernelSpectrumCondition}
Let $\seqx{F}$ be a sequence of random linear encoders with the
asymptotic source transmission rate $R_s(\seqx{F})$.
If $\seqx{F}$ satisfies the kernel-spectrum
condition~\eqref{eq:DefinitionOfAsympGoodLSC},
then the sequence of random linear encoders
$\rtilde{F_k} = F_k \circ \Sigma_{n_k}$ is $\delta$-asymptotically
good for lossless source coding.
\end{proposition}
\begin{proof}
For any $\epsilon > 0$, define the sequence of events
\[
A_k
= \left\{ \omega \in \Omega:
 \left| R_s(\rtilde{F}_k) - R_s(\seqx{F}) \right|
 \le \frac{\epsilon}{3} \right\}.
\]
It is clear that $\lim_{k\to\infty} \pr(A_k) = 1$, so that conditions
\eqref{eq:OriginalDefinitionX1OfAsympGoodLSC} and
\eqref{eq:OriginalDefinitionX2OfAsympGoodLSC} hold.
Furthermore, we have
\begin{IEEEeqnarray*}{rCl}
\max_{\seq{x}: \seq{x} \ne 0^{n_k}} \frac{1}{n_k}
 \ln \pr\{\seq{x} \in \ker \rtilde{F}_k | A_k\}
&= &\max_{\seq{x}: \seq{x} \ne 0^{n_k}} \frac{1}{n_k} \ln
 \av\left[\left.\frac{q^{n_k} 1\{\seq{x} \in \ker\rtilde{F}_k\}}
 {|\rtilde{F}_k(\field{q}^{n_k})| |\ker \rtilde{F}_k|}
 \right| A_k \right]\\
&\le &\max_{\seq{x}: \seq{x} \ne 0^{n_k}} \frac{1}{n_k} \ln
 \av\left[\left.\frac{q^{n_k} 1\{\seq{x} \in \Sigma_{n_k}(\ker F_k)\}}
 {e^{n_k(R_s(\seqx{F}) - \frac{\epsilon}{3})} |\ker F_k|}
 \right| A_k \right]\\
&\le &-R_s(\seqx{F}) + \frac{\epsilon}{3}\\
& &\breakop{+} \max_{\seq{x}: \seq{x} \ne 0^{n_k}} \frac{1}{n_k}
 \ln \left( \frac{q^{n_k}}{\pr(A_k)} \av\left[\frac{1\{\seq{x}
 \in \Sigma_{n_k}(\ker F_k)\}}{|\ker F_k|} \right] \right)\\
&\levar{(a)} &-R_s(\seqx{F}) + \frac{2\epsilon}{3}
 + \max_{P \in \mathcal{P}_{n_k}^*} \frac{1}{n_k}
 \ln \alpha_{\ker F_k}(P)\\
&\levar{(b)} &-R_s(\seqx{F}) + \delta + \epsilon
\end{IEEEeqnarray*}
for sufficiently large $k$, where (a) follows from
Proposition~\ref{pr:GeneralSpectrumPropertyOfSets} and
$\lim_{k\to\infty} \pr(A_k) = 1$, and (b) follows from
\eqref{eq:DefinitionOfAsympGoodLSC}.
This concludes \eqref{eq:OriginalDefinition2OfAsympGoodLSC} and hence
proves the proposition.
\end{proof}

\begin{proposition}\label{pr:ImageSpectrumCondition}
Let $\seqx{F}$ be a sequence of random linear encoders with the
asymptotic channel transmission rate $R_c(\seqx{F})$.
If $\seqx{F}$ satisfies the image-spectrum
condition~\eqref{eq:DefinitionOfAsympGoodLCC},
then the sequence of random linear encoders $\Sigma_{m_k} \circ F_k$
is $\delta$-asymptotically good for channel coding.
\end{proposition}

\begin{proof}
Use argument similar to that of
Proposition~\ref{pr:KernelSpectrumCondition}.
\end{proof}

\begin{proposition}\label{pr:JointSpectrumCondition}
Let $\seqx{F}$ be a sequence of random linear encoders.
If $\seqx{F}$ satisfies the joint-spectrum
condition~\eqref{eq:DefinitionOfAsympGoodLSCC},
then the sequence of random linear encoders
$\tilde{F}_k = \Sigma_{m_k} \circ F_k \circ \Sigma_{n_k}$ is
$\delta$-asymptotically good for lossless JSCC.
\end{proposition}

\begin{proof}
Apply \citeSpectrumProperty.
\end{proof}

\begin{remark}
It should be noted that conditions
\eqref{eq:DefinitionOfAsympGoodLSC}--\eqref{eq:DefinitionOfAsympGoodLSCC}
are only sufficient but not necessary.
In other words, there may exist other good (random) linear encoders.
Note, for example, that condition \eqref{eq:DefinitionOfAsympGoodLCC}
requires that the average spectrum of $F_k(\field{q}^{n_k})$ should be
uniformly close to the spectrum of $\field{q}^{m_k}$.
This requirement is obviously very strict.
For instance, all linear codes containing the all-one vector are
excluded by this condition with $\delta=0$.
To take into account such cases, we would need to use some
sophisticated bounding techniques (see e.g.,
\cite{JSCC:Bennatan200403}), which are still not mature or even
infeasible (Example~\ref{ex:comp2}) for more complicated cases.
So for simplicity of analysis, we choose conditions
\eqref{eq:DefinitionOfAsympGoodLSC}--\eqref{eq:DefinitionOfAsympGoodLSCC}.
At least to some extent, this choice is reasonable.
This is because $\rlccode{n_k, m_k}$ is asymptotically SCC-good (resp.
SC- and CC-good), and hence it can be shown by Markov's inequality
that most linear encoders are asymptotically SCC-good (resp. SC- and
CC-good).
\end{remark}

\begin{remark}\label{re:Condition.A}
Conditions
\eqref{eq:DefinitionOfAsympGoodLSC}--\eqref{eq:DefinitionOfAsympGoodLSCC}
all apply to sequences of random encoders.
For readers not familiar with probabilistic analysis we provide some
further explanation:
First, a deterministic encoder is a special random encoder, so
conditions
\eqref{eq:DefinitionOfAsympGoodLSC}--\eqref{eq:DefinitionOfAsympGoodLSCC}
also apply to a sequence of deterministic encoders.
Second, for example, if a sequence of random linear encoders is
$\delta$-asymptotically SCC-good, then there exists a sequence of
sample encoders that is $\delta$-asymptotically SCC-good
(\cite[Proposition~4.1]{JSCC:Yang200904}).
The proof of this fact relies on Markov's inequality and the fact that
the size of the set over which the maximum (in
\eqref{eq:DefinitionOfAsympGoodLSCC}) is taken is a polynomial
function of $m_k$ and $n_k$.
In fact, by the same argument, we can obtain a stronger result:
\[
\lim_{k\to\infty}
\pr\left\{\max_{P \in \mathcal{P}_{n_k}^*, Q \in \mathcal{P}_{m_k}}
 \frac{1}{n_k} \ln \frac{\spec_{F_k}(P,Q)}
 {\spec_{\field{q}^{n_k}\times\field{q}^{m_k}}(P,Q)}
\le \delta+\epsilon\right\}
= 1
\]
for any $\epsilon>0$.
Third, as we shall see, a typical sample encoder of good random
encoders (in the sense of
\eqref{eq:DefinitionOfAsympGoodLSC}--\eqref{eq:DefinitionOfAsympGoodLSCC})
has a fundamental property, which is characterized by the so-called
entropy weight, where ``typical'' means that the set of such encoders
contains most of the probability mass.
Since the proof is again a simple application of Markov's inequality,
we will leave it to the reader as an exercise.
\end{remark}

For better understanding of these three kinds of good linear encoders,
let us take a look from another perspective.
We define the entropy weight $H(\seq{x})$ of a vector
$\seq{x} \in \field{q}^n$ by
$H(\seq{x}) \eqdef \ln{n \choose nP_{\seq{x}}}$, and the normalized
entropy weight as $h(\seq{x})=H(\seq{x})/n$.
Recalling the identity
\[
\lim_{n\to\infty} \frac{1}{n}\ln{n \choose nP} = H(P)
\qquad \mbox{(cf. \cite[Lemma~2.3]{JSCC:Csiszar198100})},
\]
we have $h(\seq{x}) \approx H(P_{\seq{x}})$, so $H(P_{\seq{x}})$ may
be used as a substitute for $h(\seq{x})$ in the asymptotic sense, and
accordingly we call $H(P_{\seq{x}})$ the asymptotic normalized entropy
weight of $\seq{x}$.
Now suppose $\delta = 0$ and $k$ is large enough.
From \eqref{eq:DefinitionOfAsympGoodLSC} and
\cite[Proposition~2.1]{JSCC:Yang200904},
it follows that a typical linear encoder $f_k$ of an asymptotically
SC-good $F_k$ satisfies
\[
\limsup_{k\to\infty} \frac{1}{n_k}
 \ln |\mathcal{T}_P^{n_k} \cap \ker f_k|
\le H(P) - R_s(\seqx{F}),
\]
where $P \in \mathcal{P}_{n_k}^*$.
This implies that $\ker f_k$ does not contain nonzero vectors of
normalized entropy weight less than $R_s(\seqx{F})$.
Similarly, it follows from \eqref{eq:DefinitionOfAsympGoodLCC} that a
typical linear encoder $f_k$ of an asymptotically CC-good $F_k$
satisfies
\[
\limsup_{k\to\infty} \frac{1}{m_k}
 \ln |\mathcal{T}_Q^{m_k} \cap f_k(\field{q}^{n_k})|
\le H(Q) + R_c(\seqx{F}) - \ln q,
\]
where $Q \in \mathcal{P}_{m_k}^*$.
The case of \eqref{eq:DefinitionOfAsympGoodLSCC} is more
complicated (cf. Theorem~\ref{th:exGVB}).
It follows that a typical linear encoder $f_k$ of an asymptotically
SCC-good $F_k$ satisfies
\[
\limsup_{k\to\infty} \frac{1}{n_k} \ln
 |(\mathcal{T}_P^{n_k}\times\mathcal{T}_Q^{m_k})\cap\mathrm{rl}(f_k)|
\le H(P) + \frac{H(Q)-\ln q}{\overline{R}(\seqx{F})},
\]
where $P \in \mathcal{P}_{n_k}^*$ and $Q \in \mathcal{P}_{m_k}$.
This implies that $\mathrm{rl}(f_k)$ does not contain pairs of
low-entropy-weight vectors except $(0^{n_k}, 0^{m_k})$.
In other words, all nonzero input vectors of low entropy weight must
be mapped to high-entropy-weight vectors.
In particular, all small-weight vectors, as well as the all-$x$
vectors for $x \in \fieldstar{q}$, must be mapped to vectors of
close-to-uniform type and hence (Hamming) weight around $m_k(1-1/q)$.

From the perspective of entropy weight, many linear codes with large
minimum distance are not good because they contain vectors of low
entropy weight, for example the all-one vector.
In fact, entropy weight guides us to a subclass of linear codes with
not only large minimum distance but also large minimum entropy
distance \cite{JSCC:Yang201301}.
At this point, it is appropriate to mention a bound that includes the
above three properties.

\begin{theorem}[\cite{JSCC:Yang200904}]\label{th:exGVB}
For any $r>0$, there is a sequence $\seqx{f}$ of linear encoders
$f_k: \field{q}^{n_k} \to \field{q}^{m_k}$ such that $R(\seqx{f})=r$
and
\begin{equation}\label{eq:ExtendedGVB}
\liminf_{k\to\infty} \min_{\seq{x}\ne 0^{n_k}}
 (H(P_{\seq{x}})R(f_k) + H(P_{f_k(\seq{x})}))
\ge \ln q
\end{equation}
where $f_k$ is injective for $r \le 1$ and surjective for $r \ge 1$.
\end{theorem}

Theorem~\ref{th:exGVB} is a special case of
\cite[Theorem~4.1]{JSCC:Yang200904} for $\delta=0$
(but with some simple improvements),
and may be regarded as an extension of the asymptotic GV bound.
For $r>1$, we can take $f_k(\seq{x}) = \seq{0}^{m_k}$ and then get
\[
\liminf_{k\to\infty} \min_{\seq{x}\in\ker f_k\setminus\{0^{n_k}\}}
 H(P_{\seq{x}})
\ge r^{-1} \ln q = R_s(\seqx{f}).
\]
If $r<1$, since $H(P_{\seq{x}}) \le \ln q$, we have
\[
\liminf_{k\to\infty} \min_{\seq{x}\ne 0^{n_k}} H(P_{f_k(\seq{x})})
\ge \ln q - r\ln q
= \ln q - R_c(\seqx{f}).
\]
Although the left-hand side of \eqref{eq:ExtendedGVB} provides a
refinement of traditional minimum Hamming distance, it
still cannot ensure good coding performance.
In fact, condition \eqref{eq:DefinitionOfAsympGoodLSCC} (resp.,
\eqref{eq:DefinitionOfAsympGoodLSC} and
\eqref{eq:DefinitionOfAsympGoodLCC}) requires the joint (resp., kernel
and image) spectrum of the encoder to be close to the average joint
(resp., kernel and image) spectrum of all linear encoders of the same
coding rate (cf. Remark~\ref{re:PDProperty1}).
We may call such encoders random-like encoders.
Since minimum Hamming weight and minimum entropy weight only
focus on one or two specific points of weight distribution or
spectrum, linear encoders designed under these criteria cannot be
universally good (see Example~\ref{ex:comp2}).

So far, we have extensively discussed
criteria of good linear encoders in an abstract manner.
A comparison between the linear encoders \eqref{eq:Ham(3)} and
\eqref{eq:[7,4]2} in Examples~\ref{ex:HammingCode1} and
\ref{ex:[7,4]2} will help the reader understand why we care
about joint spectra and why about the whole shape of the spectrum.
The following example shows that the joint spectrum has such a great
impact on the performance of lossless JSCC, that a perfect linear code
of minimum distance three may perform worse than a linear code of
minimum distance one (and otherwise the same parameters) if the
generator matrix is not carefully chosen.

\begin{example}\label{ex:comp1}
  Consider a zero-one binary independent and identically distributed
  (IID) source with probability $p_1$ of symbol $1$ and a binary
  symmetric channel (BSC) with crossover probability $p_2$.
  Further consider a coding scheme that transmits four source symbols
  by utilizing the channel seven times.
  The scheme is based on Fig.~\ref{fig:Scheme1} with the quatization
  module removed, where the linear encoder used is
  either $\mat{G}_1$ or $\mat{G}_2$ defined by
  \eqref{eq:Ham(3)} and \eqref{eq:[7,4]2}, respectively.  Because the
  source and channel are both IID and the channel is an additive noise
  channel over $\field{2}$, the two random interleavers and the random
  vector module in Fig.~\ref{fig:Scheme1} can all be omitted.  Since
  the code length is short, we can easily compute the
  exact decoding error probabilities under maximum a posteriori (MAP)
  decoding.  For further comparison, we also include results for the
  linear encoders
\begin{equation}
  \label{eq:Ham(3)+}
  \mat{G}_3=
  \begin{pmatrix}
    1&1&0&1&0&0&0\\
    0&1&1&0&1&0&0\\
    0&0&1&1&0&1&0\\
    1&1&1&1&1&1&1
  \end{pmatrix}
\end{equation}
and
\begin{equation}
\mat{G}_4=
\begin{pmatrix}
  0&0&0&1&0&0&0\\
  1&0&0&0&0&1&1\\
  0&0&1&1&1&1&0\\
  1&1&1&0&0&0&0
\end{pmatrix}\label{eq:[7,4]2+},
\end{equation}
which yield the same linear code as $\mat{G}_1$ and $\mat{G}_2$,
respectively.
\begin{figure}[htbp]
\centering
\includegraphics{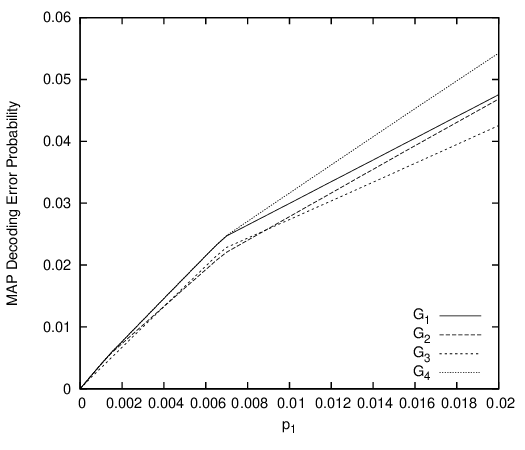}
\caption{MAP decoding error probability versus $p_1$ comparison among
 $\mat{G}_i$ for $p_1\in(0,0.02)$ and $p_2=0.16$.}
\label{fig:comp}
\end{figure}
Fig.~\ref{fig:comp} compares the performance of $\mat{G}_i$
($1\leq i\leq 4$) for $p_1\in(0,0.02)$
and $p_2=0.16$.  For example, the order of their performance at
$p=0.008$, from best to worst, is $\mat{G}_2$, $\mat{G}_3$,
$\mat{G}_1$, and $\mat{G}_4$.  There are two interesting facts
to be learned.
First, $\mat{G}_3$ outperforms $\mat{G}_1$ and $\mat{G}_2$ outperforms
$\mat{G}_4$.  This implies that the choice of generator matrix does
have an impact on JSCC performance.
Second, $\mat{G}_2$ beats $\mat{G}_1$ for all $p_1\in(0,0.02)$.
This surprising result shows that in JSCC, a perfect code of minimum
distance three may perform worse than a code of minimum distance one
if the generator matrix is not chosen properly.
\end{example}

In order to explain this phenomenon, we shall introduce the concept of
pairwise discrimination, which forms the key idea of
lossless JSCC and will now be expressed in an intuitive but less strict
manner.
Recall the concept of a typical set (cf. \cite{JSCC:Cover199100}).
Let $X^n=(X_1, X_2, \ldots, X_n)$ be a random $n$-dimensional vector
over $\field{q}$.
The typical set $A_\epsilon^{(n)}$ of $X^n$ is defined as the set of
all vectors $\seq{x} \in \field{q}^n$ satisfying
\begin{equation}\label{eq:AEP}
e^{-n(h(X^n)+\epsilon)}
\le \pr_{X^n}(\seq{x})
\le e^{-n(h(X^n)-\epsilon)},
\end{equation}
where $h(X^n)\eqdef H(X^n)/n = -n^{-1}\av[\ln\pr_{X^n}(X^n)]$ and
$\pr_{X^n}(\seq{x}) \eqdef \pr\{X^n = \seq{x}\}$.
Usually, we add some conditions to ensure that $h(X^n)$
converges to the so-called entropy rate as $n \to \infty$ and that
$-n^{-1}\ln\pr_{X^n}(X^n)$ converges to the entropy rate almost
surely.
But here, we just borrow the concept
and do not rigorously justify every detail.
Two distinct $n$-dimensional vectors are
considered to be discriminable if at least one of them is not in the
typical set $A_\epsilon^{(n)}$.
In a more intuitive manner, we may think of two distinct vectors
indiscriminable if both of them are elements of the
high-probability set $B_\epsilon^{(n)}\eqdef\{\seq{x}\in\field{q}^n:
\pr_{X^n}(\seq{x})\ge e^{-n(h(X^n)+\epsilon)}\}$.

Roughly speaking, the art of lossless JSCC is to focus on the
most probable source vectors (with a high total probability) and to
choose appropriate channel input vectors for them so that all these
source vectors, combined with any channel output vector in the
high-probability set, are pairwise discriminable.

Keeping this idea in mind, we continue the discussion of
Example~\ref{ex:comp1}.
We note that the most probable source vectors are the zero vector
and all weight-one vectors, whose total probability is
$(1-p_1)^7+7p_1(1-p_1)^6\ge 0.992$ for $p_1\in(0, 0.02)$.
Because the zero vector owns the dominant probability, we only need to
pay attention to the pairs consisting of the zero vector and a vector
of weight one.
Other pairs consisting of two weight-one vectors may be ignored.
Therefore, the performance of an injective linear encoder is mainly
determined by its output for weight-one input.
The output weight distribution of $\mat{G}_i$ for weight-one input is
listed in Table~\ref{tab:comp1}.
\begin{table}[htbp]
\caption{Output weight distributions of $\mat{G}_i$ for weight-one
 input}
\label{tab:comp1}
\centering
\begin{tabular}{ccccccccc}\hline
 &$0$ &$1$ &$2$ &$3$ &$4$ &$5$ &$6$ &$7$ \\\hline
 $\mat{G}_1$ & & & &$3$ &$1$ \\
 $\mat{G}_2$ & & & &$2$ &$2$ \\
 $\mat{G}_3$ & & & &$3$ & & & &$1$ \\
 $\mat{G}_4$ & &$1$ & &$2$ &$1$ \\\hline
\end{tabular}
\end{table}
In order to make the zero vector and all weight-one vectors
discriminable at channel
output, a good strategy is to map these source vectors to channel
input vectors as far from each other as possible in terms of
Hamming distance (since the channel is a BSC).  Therefore, we shall
get a boost in performance if we map weight-one vectors
to vectors of weight as large as possible.  Comparing the output
weight distributions of $\mat{G}_i$ in Table~\ref{tab:comp1},
especially for weights $\ge 4$, it is easy to figure out that
$\mat{G}_2$ is better than $\mat{G}_1$, $\mat{G}_3$ better than
$\mat{G}_1$, and $\mat{G}_4$ worse than $\mat{G}_2$.  The comparison
between $\mat{G}_2$ and $\mat{G}_3$ is slightly more complicated,
because one has two vectors of weight $4$ while the other has one
vector of weight $7$.
This explains why $\mat{G}_2$ and $\mat{G}_3$ have almost the same
performance for small $p$, as shown in Fig.~\ref{fig:comp}.

In Example~\ref{ex:comp1} we successfully explained the
importance of the joint spectrum of a linear encoder, but in
this case choosing the all-one vector as a codeword is not a bad idea,
which seems contrary to facts about entropy weight, that is, the
all-one vector is of zero entropy weight and hence must be avoided.
While the viewpoint of minimum Hamming distance is very appropriate
for coding over a BSC, it is not a good measure
for designing universal linear encoders.  The next example shows that
a binary linear code containing the all-one vector may have very bad
performance for some special channels, even if it is a perfect
code.

\begin{example}\label{ex:comp2}
For any nonzero $\seq{x}_0 \in \field{q}^n$, define an additive noise
channel $J_{\seq{x}_0}: \field{2}^n \to \field{2}^n$ by
$\seq{x} \mapsto \seq{x} + \seq{N}$, where $\seq{N}$ is a random noise
with distribution
\[
\pr\{\seq{N} = \seq{x}\} \eqdef
\begin{cases}
0.5, &\mbox{for $\seq{x} = \seq{0}$} \\
0.5, &\mbox{for $\seq{x} = \seq{x}_0$} \\
0, &\mbox{otherwise}.
\end{cases}
\]
Clearly, the capacity of $J_{\seq{x}_0}$ is $(n-1)\ln 2$, independent
of the choice of $\seq{x}_0$.
Now consider a channel coding scheme based on Fig.~\ref{fig:Scheme1}
with the quantization module removed.
It transmits four bits over a vector channel $J_{1^\ell 0^{7-\ell}}$
($\ell=1$, $2$, \ldots, $7$).
The linear encoder used is $\mat{G}_1$ or $\mat{G}_2$ defined by
\eqref{eq:Ham(3)} and \eqref{eq:[7,4]2}, respectively.
The interleaver before linear encoder and the random vector module can
be omitted because the source is assumed to be uniform and the channel
noise is additive.
It is easy to figure out the decoding error probability.
The trick is to check whether the noise vector $1^\ell 0^{7-\ell}$
hits a codeword of the linear code.
If it misses, the transmitted information can be decoded successfully;
otherwise, the information can only be guessed with error probability
$\frac{1}{2}$.
Owing to the random interleaver after linear encoder, we should
compute the decoding error probability for each possible interleaver
and then compute their average.
Accordingly, for channel $J_{1^\ell 0^{7-\ell}}$, the decoding error
probability of a linear code is $n_\ell/(2{7\choose\ell})$ where
$n_\ell$ is the number of codewords of weight $\ell$.
Table~\ref{tab:comp2} lists the decoding error probability of
$\mat{G}_1$ and $\mat{G}_2$ for $\ell=1$, $2$, \ldots, $7$.
\begin{table}[htbp]
\caption{Decoding error probability of $\mat{G}_i$ for $\ell=1$, $2$,
 \ldots, $7$}\label{tab:comp2}
\centering\renewcommand{\arraystretch}{1.25}
\begin{tabular}{*{8}{c}}\hline
 &$1$ &$2$ &$3$ &$4$ &$5$ &$6$ &$7$ \\\hline
 $\mat{G}_1$ &$0$ &$0$ &$\frac{1}{10}$ &$\frac{1}{10}$ &$0$ &$0$
  &$\frac{1}{2}$ \\
 $\mat{G}_2$ &$\frac{1}{14}$ &$0$ &$\frac{2}{35}$ &$\frac{1}{10}$
  &$\frac{1}{14}$ &$0$ &$0$ \\\hline
\end{tabular}
\end{table}
Note that for $\ell=7$, the performance of $\mat{G}_1$ is very bad.
This is because there is only one vector of weight $7$ and hence the
random interleaver cannot help the codeword avoid being hit by noise.
Note that this issue cannot be resolved by simply increasing the code
length, so any binary linear code containing the all-one vector
performs bad over channel $J_{1^n}$.
\end{example}

The reader may argue that Example~\ref{ex:comp2} is too special and
that random-like encoders perhaps do not work in certain examples.
The fact that follows will show that a random-like encoder defined by
\eqref{eq:DefinitionOfAsympGoodLSCC} is universally good in the
asymptotic sense.
We continue to utilize the concept of pairwise discrimination as well
as typical set in a less strict manner.

\begin{remark}\label{re:PDProperty1}
Consider a pair $(X^n, Y^m)$ of random vectors and its typical set
$A_\epsilon^{(n,m)}$.
Let $f:\field{q}^n \to \field{q}^m$ be a linear encoder whose joint
spectrum is approximately
\[
\spec_{\field{q}^n}(P)\spec_{\field{q}^m}(Q)
= q^{-(m+n)} {n \choose nP}{m \choose mQ}
\quad \mbox{(cf. \eqref{eq:DefinitionOfAsympGoodLSCC})}
\]
for $P\in\mathcal{P}_n^*$ and $Q\in\mathcal{P}_m$.
It follows from \citeSpectrumProperty{} that for any distinct pairs
$(\seq{x}, \seq{y})$ and $(\seq{x}', \seq{y}')$ in
$A_\epsilon^{(n,m)}$,
\[
\pr\{\tilde{f}(\seq{x}')=\seq{y}'|\tilde{f}(\seq{x}) = \seq{y}\}
= \pr\{\tilde{f}(\seq{x}'-\seq{x})=\seq{y}'-\seq{y}
 |\tilde{f}(\seq{x})=\seq{y}\}
= \left\{\begin{array}{ll}
0, &\mbox{for $\seq{x}=\seq{x}'$}\\
q^{-m}, &\mbox{otherwise}
\end{array}\right.
\]
so that
\begin{equation}\label{eq:PDP1.eq1}
\pr\left\{|A_\epsilon^{(n,m)}\cap\mathrm{rl}(\tilde{f})| > 1
 \Big| \tilde{f}(\seq{x}) = \seq{y}\right\}
\le q^{-m} |A_\epsilon^{(n,m)}|.
\end{equation}
Note that $|A_\epsilon^{(m,n)}|$ is bounded above by
$e^{H(X^n, Y^m)+n\epsilon}$ (cf. \eqref{eq:AEP}), so if
$m \ln q > H(X^n,Y^m) + n\epsilon$, the probability of
$\mathrm{rl}(\tilde{f})$ containing other members of
$A_\epsilon^{(m,n)}$ given $(\seq{x}, \tilde{f}(\seq{x}))$ being a
member of $A_\epsilon^{(m,n)}$ is asymptotically negligible.
In other words, with high probability, the pair
$(\seq{x}, \tilde{f}(\seq{x}))$ is pairwise discriminable with every
other pair in $\mathrm{rl}(\tilde{f})$.
Note that \eqref{eq:PDP1.eq1} does not depend on the probability
distribution of $(X^n, Y^m)$, but only on their joint entropy.

What is the use of \eqref{eq:PDP1.eq1}?
Imagine we are transmitting source vector $X^n$ over a channel and
suppose that the current sample drawn from $X^n$ is $\seq{x}$.
If we send $\tilde{f}(\seq{x})$ over the channel, then after receiving
the channel output $\seq{z}$, using knowledge of the source and
channel, we get the a posteriori information about the channel input,
identified with $Y^m=Y^m(\seq{z})$.
Combining it with the a priori knowledge of the source, we obtain a
pair of random vectors, $(X^n, Y^m)$.
In a typical case, the pair $(\seq{x}, \tilde{f}(\seq{x}))$ must be a
member of the typical set $A_\epsilon^{(n,m)}$ of $(X^n, Y^m)$, so if
\begin{equation}\label{eq:PDP1.eq2}
m\ln q > H(X^n, Y^m) + n\epsilon
\end{equation}
for some $\epsilon>0$, we can decode successfully with high
probability by guessing the unique typical pair.
To illustrate this further, let us consider two special cases.

First, we suppose that the channel is noiseless, so that
$H(X^n, Y^m) = H(X^n)$.
Condition \eqref{eq:PDP1.eq2} then becomes
\[
\frac{m}{n}\ln q > \frac{1}{n} H(X^n) + \epsilon,
\]
a familiar condition for the achievable rate of lossless source
coding.

Second, we suppose that the source is uniformly distributed (i.e.,
channel coding), so we can assume that
$H(X^n, Y^m) = n\ln q + H(Y^m)$.
Condition \eqref{eq:PDP1.eq2} then becomes
\[
n\ln q < m\ln q - H(Y^m) - n\epsilon
\]
and further,
\[
R_c(f)
\le \frac{n}{m} \ln q
< \ln q - \frac{1}{m} H(Y^m) - R(f)\epsilon.
\]
Note that $\ln q - \frac{1}{m} H(Y^m)$ has the same form as the
capacity formula of those channels whose capacity is achieved by the
uniform input probability distribution.  In fact, if we add a random
vector module and a quantization module (as depicted in
Fig.~\ref{fig:Scheme1}) to simulate the capacity-achieving input
probability distribution of a given channel, we can eventually obtain
a capacity-achieving coding scheme, but we shall not
delve further into this because we are interested in its
relevance for coding rather than its nature as a problem
of information theory.
\end{remark}

In Remark~\ref{re:PDProperty1} we showed that an asymptotically
SCC-good linear encoder is universally good for JSCC.
By a similar argument, we can also show that other kinds of
random-like encoders, i.e., those defined by
\eqref{eq:DefinitionOfAsympGoodLSC} and
\eqref{eq:DefinitionOfAsympGoodLCC}, are also universally good for
lossless source coding and channel coding, respectively.


\section{Proofs of Results in Section
 \ref{subsec:SpectraWithPartition}}
\label{subsec:ProofOfSpectraWithPartition}

\begin{proofof}{Proposition~\ref{pr:GeneralSpectrumPropertyOfSets}}
It is clear that, for any $\seq{x}$ and $\hat{\seq{x}}$ satisfying
$P^{\mathcal{U}}_{\seq{x}} = P^{\mathcal{U}}_{\hat{\seq{x}}}$,
\begin{IEEEeqnarray*}{rCl}
\av\left[ \frac{1\{\seq{x}\in\Sigma_{\mathcal{U}}(A)\}}{|A|} \right]
&= &\av\left[ \frac{1\{\seq{x} \in \Sigma_{\mathcal{U}}(A)\}}
 {|\Sigma_{\mathcal{U}}(A)|} \right]\\
&\eqvar{(a)} &\av\left[ \frac{1\{\hat{\seq{x}} \in
 \Sigma_{\mathcal{U}}(A)\}}{|\Sigma_{\mathcal{U}}(A)|} \right]\\
&= &\av\left[ \frac{1\{\hat{\seq{x}}\in\Sigma_{\mathcal{U}}(A)\}}{|A|}
 \right]
\end{IEEEeqnarray*}
where (a) follows from the fact that the distribution of
$\Sigma_{\mathcal{U}}(A)$ is invariant under any permutation in
$\symmetricgroup{\mathcal{U}}$.
Then it follows that
\begin{IEEEeqnarray*}{rCl}
\av\left[ \frac{1\{\seq{x}\in\Sigma_{\mathcal{U}}(A)\}}{|A|} \right]
&= &\frac{1}{|\mathcal{T}_{P^{\mathcal{U}}_{\seq{x}}}|}
 \sum_{\hat{\seq{x}} \in \mathcal{T}_{P^{\mathcal{U}}_{\seq{x}}}}
 \av\left[ \frac{1\{\hat{\seq{x}}\in\Sigma_{\mathcal{U}}(A)\}}{|A|}
 \right]\\
&= &\frac{1}{\prod_{U\in\mathcal{U}} {|U| \choose |U| P_{x_U}}}
 \av\left[ \frac{\left| A\cap\mathcal{T}_{P^{\mathcal{U}}_{\seq{x}}}
 \right|}{|A|} \right]\\
&\eqvar{(a)} &q^{-n} \alpha_A(P^{\mathcal{U}}_{\seq{x}})
\end{IEEEeqnarray*}
where (a) follows from \cite[Proposition~2.1]{JSCC:Yang200904} and the
definition of $\mathcal{U}$-spectrum.
This proves \eqref{eq:GeneralSpectrumPropertyOfSets1}, and identity
\eqref{eq:GeneralSpectrumPropertyOfSets2} comes from
\begin{IEEEeqnarray*}{rCl}
\av\left[ \frac{\left| B\cap\Sigma_{\mathcal{U}}(A) \right|}{|A|}
 \right]
&= &\sum_{\seq{y} \in B}
 \av\left[ \frac{1\{\seq{y}\in\Sigma_{\mathcal{U}}(A)\}}{|A|} \right]
\end{IEEEeqnarray*}
combined with \eqref{eq:GeneralSpectrumPropertyOfSets1} and
\cite[Proposition~2.1]{JSCC:Yang200904}.
\end{proofof}

\section{Proofs of Results in Section
 \ref{subsec:ConditionalProbability}}
\label{subsec:ProofOfConditionalProbability}

\begin{proofof}{Proposition \ref{pr:SpectrumPropertyX1OfFunctions}}
\begin{IEEEeqnarray*}{rCl}
\pr\{\rtilde{F}(\seq{x}) \in \mathcal{T}_{Q^{\mathcal{V}_0}}\}
&= &\pr\{\tilde{F}(\seq{x}) \in \mathcal{T}_{Q^{\mathcal{V}_0}}\}\\
&= &\av[1\{\tilde{F}(\seq{x})\in\mathcal{T}_{Q^{\mathcal{V}_0}}\}]\\
&= &\av\left[ \sum_{\seq{y} \in \mathcal{T}_{Q^{\mathcal{V}_0}}}
 1\{ (\seq{x}, \seq{y}) \in \mathrm{rl}(\tilde{F}) \} \right]\\
&= &\av\left[ \sum_{\seq{y} \in \mathcal{T}_{Q^{\mathcal{V}_0}}}
 1\{ (\seq{x}, \seq{y}) \in
 \Sigma_{\mathcal{U}_0\cup\mathcal{V}_0}(\mathrm{rl}(F)) \} \right]\\
&= &\av\left[ \left| \left( \seq{x}
 \times \mathcal{T}_{Q^{\mathcal{V}_0}} \right)
 \cap \Sigma_{\mathcal{U}_0 \cup \mathcal{V}_0}(\mathrm{rl}(F))
 \right| \right]\\
&\eqvar{(a)} &\frac{\avS_F(P^{\mathcal{U}_0}_{\seq{x}},
 Q^{\mathcal{V}_0})}
 {\prod_{U\in\mathcal{U}_0} \spec_{\field{q}^{|U|}}(P_{x_{U}})}\\
&= &\avS_F(Q^{\mathcal{V}_0}|P^{\mathcal{U}_0}_{\seq{x}}),
\end{IEEEeqnarray*}
where (a) follows from
Proposition~\ref{pr:GeneralSpectrumPropertyOfSets} and the identity
$|\mathrm{rl}(F)| = \prod_{i=1}^s q^{n_i}$.
\end{proofof}

\begin{proofof}{Proposition
 \ref{pr:SpectrumOfSeriallyConcatenatedFunctions}}
\begin{IEEEeqnarray*}{rCl}
\avS_{G \circ \Sigma_m \circ F}(Q|O)
&\eqvar{(a)} &\pr\{(\rtilde{G} \circ \rtilde{F})(\seq{x})
 \in \mathcal{T}_Q^l\}\\
&= &\sum_{P \in \mathcal{P}_m}
 \pr\{\rtilde{F}(\seq{x}) \in \mathcal{T}_P^m,
 \rtilde{G}(\rtilde{F}(\seq{x})) \in \mathcal{T}_Q^l\}\\
&= &\sum_{P \in \mathcal{P}_m} \Bigl(
 \pr\{\rtilde{F}(\seq{x}) \in \mathcal{T}_P^m\}\\
& &\breakop{\times} \pr\{\rtilde{G}(\rtilde{F}(\seq{x}))
 \in\mathcal{T}_Q^l | \rtilde{F}(\seq{x})\in\mathcal{T}_P^m\} \Bigr)\\
&\eqvar{(b)} &\sum_{P \in \mathcal{P}_m} \avS_F(P|O) \avS_G(Q|P),
\end{IEEEeqnarray*}
where (a) and (b) follow from
Proposition~\ref{pr:SpectrumPropertyX1OfFunctions}, and $\seq{x}$ is
an arbitrary sequence such that $P_{\seq{x}} = O$.
\end{proofof}

\section{Proofs of Results in Section
 \ref{subsec:SpectrumGeneratingFunctions}}
\label{subsec:ProofOfSpectrumGeneratingFunctions}

\begin{proofof}{Proposition
 \ref{pr:SubstitutionPrincipleOfGeneratingFunction}}
Since $\pi_\mathcal{V} = \psi \circ \pi_\mathcal{U}$,
\begin{IEEEeqnarray*}{rCl}
\gf_{\field{q}^{\mathcal{V}}}(A)(\seq{v}_{\mathcal{V}})
&= &\frac{1}{|A|} \sum_{\seq{x} \in A}
 \prod_{i=1}^n v_{\pi_\mathcal{V}(i), x_i}\\
&= &\frac{1}{|A|} \sum_{\seq{x} \in A}
 \prod_{i=1}^n v_{\psi(\pi_\mathcal{U}(i)), x_i}\\
&= &\psi(\gf_{\field{q}^{\mathcal{U}}}(A)(\seq{u}_{\mathcal{U}})).
\end{IEEEeqnarray*}
\end{proofof}

\begin{proofof}{Proposition
 \ref{pr:GeneratingFunctionOfProductOfSets}}
By definition,
\begin{IEEEeqnarray*}{rCl}
\gf_{\prod_{i=1}^s A_i}(\seq{u}_{\mathcal{I}_s})
&= &\frac{1}{|\prod_{i=1}^s A_i|} \sum_{\seq{x} \in \prod_{i=1}^s A_i}
 \prod_{i=1}^s \seq{u}_{U_i}^{|U_i|P_{x_{U_i}}}\\
&= &\frac{1}{\prod_{i=1}^s |A_i|} \prod_{i=1}^s
 \sum_{\seq{x}_i \in A_i} \seq{u}_{U_i}^{|U_i|P_{\seq{x}_i}}\\
&= &\prod_{i=1}^s \gf_{A_i}(\seq{u}_i),
\end{IEEEeqnarray*}
where $\{U_1, \ldots, U_s\}$ is the default coordinate partition.
\end{proofof}

\begin{proofof}{Corollary \ref{co:GeneratingFunctionOfSetProduct}}
\begin{IEEEeqnarray*}{rCl}
\gf_{A_1 \times A_2}(\seq{u})
&\eqvar{(a)} &\gf_{A_1 \times A_2}(\seq{u}, \seq{u}) \\
&\eqvar{(b)} &\gf_{A_1}(\seq{u}) \cdot \gf_{A_2}(\seq{u}),
\end{IEEEeqnarray*}
where (a) follows from
Proposition~\ref{pr:SubstitutionPrincipleOfGeneratingFunction} and (b)
follows from Proposition~\ref{pr:GeneratingFunctionOfProductOfSets}.
\end{proofof}

\begin{proofof}{Corollary
 \ref{co:GeneratingFunctionOfFunctionParallelProduct}}
\begin{IEEEeqnarray*}{rCl}
\gf_{f_1 \pprod f_2}(\seq{u}, \seq{v})
&\eqvar{(a)} &\gf_{f_1\pprod f_2}(\seq{u},\seq{u},\seq{v},\seq{v})\\
&\eqvar{(b)} &\gf_{f_1}(\seq{u}, \seq{v})
 \cdot \gf_{f_2}(\seq{u}, \seq{v}),
\end{IEEEeqnarray*}
where (a) follows from
Proposition~\ref{pr:SubstitutionPrincipleOfGeneratingFunction} with
$\mathrm{rl}(f_1 \pprod f_2) \subseteq \field{q}^{n_1+n_2}
 \times \field{q}^{m_1+m_2}
 = \field{q}^{n_1} \times \field{q}^{n_2} \times \field{q}^{m_1}
 \times \field{q}^{m_2}$,
and (b) follows from
Proposition~\ref{pr:GeneratingFunctionOfProductOfSets} with
$\mathrm{rl}(f_1 \pprod f_2)
 = \mathrm{rl}(f_1) \times \mathrm{rl}(f_2)$.
\end{proofof}

\begin{proofof}{Proposition \ref{pr:RenameGeneratingFunction}}
Let $A' = F(A)$.
Since $F$ is bijective, the generating function
$\gf_{\field{q}^{\mathcal{U}}}(A')$ can be rewritten as
\begin{IEEEeqnarray*}{rCl}
\gf_{\field{q}^{\mathcal{U}}}(A')(\seq{u}_{\mathcal{U}})
&= &\frac{1}{|A'|} \sum_{\seq{x}} 1\{\seq{x} \in A'\}
 \prod_{i=1}^n u_{\pi_\mathcal{U}(i),x_i}\\
&= &\frac{1}{|A|} \sum_{\seq{x}} 1\{\seq{x} \in A\}
 \prod_{i=1}^n u_{\pi_\mathcal{U}(i),F^{(i)}(x_i)}.
\end{IEEEeqnarray*}
Taking expectations on both sides, we obtain
\begin{IEEEeqnarray*}{rCl}
\avG_{\field{q}^{\mathcal{U}}}(A')(\seq{u}_{\mathcal{U}})
&= &\sum_{\seq{x}} \av\left[\frac{1\{\seq{x} \in A\}}{|A|}\right]
 \prod_{i=1}^n \av[u_{\pi_\mathcal{U}(i),F_{\pi_\mathcal{U}(i)}(x_i)}]
 \\
&= &\avG_{\field{q}^{\mathcal{U}}}(A)
 ((\av[u_{U,F_U(a)}])_{U\in\mathcal{U}, a\in\field{q}}),
\end{IEEEeqnarray*}
which is just $\overline{F}(\avG_{\field{q}^{\mathcal{U}}}(A))$.
\end{proofof}

\section{Proofs of Results in Section \ref{subsec:NewResultsV}}
\label{subsec:ProofOfNewResultsV}

To prove Theorem \ref{th:MacWilliamsIdentitiesF}, we need two lemmas.

\begin{lemma}[see e.g., \cite{JSCC:Yang201109}]
\label{le:MacWilliamsIdentities1F}
For a subspace $A$ of $\field{q}^n$,
\begin{equation}\label{eq:MacWilliamsIdentitiesLemma1}
\frac{1}{|A|} \sum_{\seq{x}_1 \in A} \chi(\seq{x}_1 \cdot \seq{x}_2)
= 1\{\seq{x}_2\in A^{\perp}\} \qquad \forall \seq{x}_2\in\field{q}^n.
\end{equation}
\end{lemma}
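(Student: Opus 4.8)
The plan is to evaluate the inner sum $\sum_{\seq{x}_1 \in A} \chi(\seq{x}_1 \cdot \seq{x}_2)$ by recognizing that, for fixed $\seq{x}_2 \in \field{q}^n$, the map $\seq{x}_1 \mapsto \chi(\seq{x}_1 \cdot \seq{x}_2)$ is a character of the additive group $(A, +)$, i.e.\ a group homomorphism $A \to \unitsof{\complexnumbers}$. This follows because $\chi$ is additive (it is a character of $(\field{q}, +)$ by the remark following Theorem \ref{th:MacWilliamsIdentitiesF}) and because $\seq{x}_1 \mapsto \seq{x}_1 \cdot \seq{x}_2$ is $\field{q}$-linear, hence additive, in $\seq{x}_1$. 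The standard orthogonality relation for characters of a finite abelian group then gives that $\sum_{\seq{x}_1 \in A} \psi(\seq{x}_1)$ equals $|A|$ if $\psi$ is the trivial character on $A$ and $0$ otherwise.

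First I would therefore reduce the claim to showing that the character $\psi_{\seq{x}_2}\colon \seq{x}_1 \mapsto \chi(\seq{x}_1 \cdot \seq{x}_2)$ is trivial on $A$ if and only if $\seq{x}_2 \in A^{\perp}$. The direction $\seq{x}_2 \in A^{\perp} \Rightarrow \psi_{\seq{x}_2}$ trivial is immediate: by definition of $A^{\perp}$, $\seq{x}_1 \cdot \seq{x}_2 = 0$ for all $\seq{x}_1 \in A$, and $\chi(0) = 1$. For the converse, suppose $\psi_{\seq{x}_2}$ is trivial on $A$; then for every $\seq{x}_1 \in A$ and every $c \in \field{q}$ we have $c\seq{x}_1 \in A$ (as $A$ is a subspace), so $\chi\bigl(c(\seq{x}_1 \cdot \seq{x}_2)\bigr) = 1$. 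Thus the element $a \eqdef \seq{x}_1 \cdot \seq{x}_2 \in \field{q}$ satisfies $\chi(ca) = 1$ for all $c \in \field{q}$; but by \eqref{eq:ChiPropertyF}, if $a \ne 0$ then $\sum_{c \in \field{q}} \chi(ca) = 0 \ne q = |\field{q}|$, a contradiction (alternatively, invoke that $\chi$ is a generating character of the field $\field{q}$, whose kernel contains no nonzero ideal, forcing $a = 0$). Hence $\seq{x}_1 \cdot \seq{x}_2 = 0$ for all $\seq{x}_1 \in A$, i.e.\ $\seq{x}_2 \in A^{\perp}$. Dividing by $|A|$ yields \eqref{eq:MacWilliamsIdentitiesLemma1}.

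I do not anticipate a serious obstacle here; the lemma is essentially the statement that the pairing $A \times \field{q}^n \to \unitsof{\complexnumbers}$, $(\seq{x}_1, \seq{x}_2) \mapsto \chi(\seq{x}_1 \cdot \seq{x}_2)$, induces a perfect pairing between $A$ and $\field{q}^n / A^{\perp}$, combined with character orthogonality. The only point requiring a little care is the nondegeneracy argument showing $\psi_{\seq{x}_2}|_A$ trivial forces $\seq{x}_2 \in A^{\perp}$: one must use that $A$ is closed under scalar multiplication (not merely addition), which is why the hypothesis that $A$ is a \emph{subspace} rather than an arbitrary subgroup is used, together with \eqref{eq:ChiPropertyF}. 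This argument is exactly the ingredient that will later have to be replaced by the defining property of a generating character when passing to the Frobenius-ring setting of Theorem \ref{th:MacWilliamsIdentities}.
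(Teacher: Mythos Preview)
Your proof is correct and uses essentially the same ingredients as the paper's: the additivity of $\chi$, the $\field{q}$-linearity of the dot product, and the vanishing sum \eqref{eq:ChiPropertyF}. The paper organizes the argument slightly more directly by observing that the image of the $\field{q}$-linear map $\tau:\seq{x}_1\mapsto\seq{x}_1\cdot\seq{x}_2$ is a subspace of $\field{q}$, hence either $\{0\}$ or all of $\field{q}$, and then summing $\chi$ over the fibers of $\tau$; this bypasses the explicit invocation of character orthogonality and the separate ``trivial $\Rightarrow \seq{x}_2\in A^\perp$'' step, but the content is the same.
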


The reader is referred to \cite[Lemma~A.1]{JSCC:Yang201109} for a
proof.

\begin{lemma}\label{le:MacWilliamsIdentities2F}
Let $\mathcal{U}$ be a partition of $\mathcal{I}_n$.
Then
\[
\sum_{\seq{x}_2 \in \field{q}^n} \chi(\seq{x}_1 \cdot \seq{x}_2)
 \prod_{i=1}^n u_{\pi_\mathcal{U}(i), x_{2,i}}
= \prod_{U \in \mathcal{U}} (\seq{u}_U \mat{M})^{|U|P^{U}_{\seq{x}_1}}
\]
for all $\seq{x}_1 \in \field{q}^n$, where $\mat{M}$ is defined by
\eqref{eq:MacWilliamsMatrixF}.
\end{lemma}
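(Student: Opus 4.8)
The plan is to exploit the fact that the left-hand side is a sum over the Cartesian product $\field{q}^n$ of a summand that factors completely over the $n$ coordinates, so that the whole identity reduces to a one-coordinate computation. First I would rewrite the monomial attached to $\seq{x}_2$ as a coordinate-wise product: since $|U|P^U_{\seq{x}_2}(a)$ counts the indices $i\in U$ with $x_{2,i}=a$, we have $\seq{u}_U^{|U|P^U_{\seq{x}_2}}=\prod_{i\in U}u_{U,x_{2,i}}$, and therefore $\prod_{U\in\mathcal{U}}\seq{u}_U^{|U|P^U_{\seq{x}_2}}=\prod_{i=1}^n u_{U(i),x_{2,i}}$, where $U(i)$ denotes the unique block of $\mathcal{U}$ containing $i$. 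Likewise, because $\chi$ is a homomorphism of the additive group of $\field{q}$, $\chi(\seq{x}_1\cdot\seq{x}_2)=\chi\!\bigl(\sum_{i=1}^n x_{1,i}x_{2,i}\bigr)=\prod_{i=1}^n\chi(x_{1,i}x_{2,i})$.

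Next I would substitute these two expansions into the left-hand side and interchange the sum over $\seq{x}_2\in\field{q}^n$ with the product over $i$: since the summand is now a product of factors each depending on the single coordinate $x_{2,i}$, the sum over $\field{q}^n$ splits as
\[
\sum_{\seq{x}_2\in\field{q}^n}\prod_{i=1}^n\chi(x_{1,i}x_{2,i})\,u_{U(i),x_{2,i}}
=\prod_{i=1}^n\left(\sum_{b\in\field{q}}\chi(x_{1,i}b)\,u_{U(i),b}\right).
\]
By the definition \eqref{eq:MacWilliamsMatrixF} of $\mat{M}$ and the commutativity of multiplication in $\field{q}$, the inner sum $\sum_{b\in\field{q}}u_{U(i),b}\,\chi(b\,x_{1,i})=\sum_{b\in\field{q}}u_{U(i),b}\,\matentry{M}_{b,x_{1,i}}$ is exactly the $x_{1,i}$-th entry of the row vector $\seq{u}_{U(i)}\mat{M}$, i.e.\ $(\seq{u}_{U(i)}\mat{M})_{x_{1,i}}$.

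Finally I would regroup the product $\prod_{i=1}^n(\seq{u}_{U(i)}\mat{M})_{x_{1,i}}$ by the blocks of $\mathcal{U}$ and run the first step in reverse: for each $U\in\mathcal{U}$, $\prod_{i\in U}(\seq{u}_U\mat{M})_{x_{1,i}}=\prod_{a\in\field{q}}(\seq{u}_U\mat{M})_a^{\,|U|P^U_{\seq{x}_1}(a)}=(\seq{u}_U\mat{M})^{|U|P^U_{\seq{x}_1}}$, whence the whole product equals $\prod_{U\in\mathcal{U}}(\seq{u}_U\mat{M})^{|U|P^U_{\seq{x}_1}}$, which is the right-hand side. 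I do not expect any genuine obstacle here: the only care needed is in keeping the multi-index exponent notation $\seq{u}^{nP}$ and the row-vector/matrix conventions straight, and in noting that $\chi(x_{1,i}x_{2,i})=\chi(x_{2,i}x_{1,i})$ so that the inner sum really matches the $\mat{M}$-indexing. The mathematical content is simply that a sum over a Cartesian product of a product of single-coordinate factors factorizes, combined with the multiplicativity of the additive character $\chi$ of $\field{q}$.
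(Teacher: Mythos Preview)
Your proposal is correct and follows essentially the same approach as the paper's proof: expand both $\chi(\seq{x}_1\cdot\seq{x}_2)$ and the monomial $\prod_{U}\seq{u}_U^{|U|P^U_{\seq{x}_2}}$ coordinate-wise, factor the sum over $\field{q}^n$ into a product of one-variable sums, recognize each as an entry of $\seq{u}_U\mat{M}$, and regroup by blocks. The only cosmetic difference is that you index coordinates via the map $i\mapsto U(i)$ whereas the paper keeps the double product $\prod_{U\in\mathcal{U}}\prod_{i\in U}$ throughout.
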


\begin{proof}
\begin{IEEEeqnarray*}{rCl}
\sum_{\seq{x}_2 \in \field{q}^n} \chi(\seq{x}_1 \cdot \seq{x}_2)
 \prod_{i=1}^n u_{\pi_\mathcal{U}(i), x_{2,i}}
&\eqvar{(a)} &\sum_{\seq{x}_2 \in \field{q}^n} \prod_{i=1}^n
 \chi(x_{1,i} x_{2,i}) u_{\pi_\mathcal{U}(i),x_{2,i}}\\
&= &\prod_{U \in \mathcal{U}} \prod_{i \in U} \sum_{a_2 \in \field{q}}
 \chi(x_{1,i} a_2) u_{U,a_2}\\
&= &\prod_{U \in \mathcal{U}} \prod_{a_1 \in \field{q}}
 \left( \sum_{a_2 \in \field{q}} \chi(a_1 a_2) u_{U,a_2}
 \right)^{|U| P^{U}_{\seq{x}_1}(a_1)}\\
&= &\prod_{U \in \mathcal{U}}
 (\seq{u}_U \mat{M})^{|U| P^{U}_{\seq{x}_1}}
\end{IEEEeqnarray*}
where (a) follows from
$\chi(\seq{x}_1 \cdot \seq{x}_2)
 = \prod_{i=1}^n \chi(x_{1,i} x_{2,i})$.
\end{proof}

\begin{proofof}{Theorem \ref{th:MacWilliamsIdentitiesF}}
\begin{IEEEeqnarray*}{rCl}
\gf_{A^{\perp}}(\seq{u}_{\mathcal{U}})
&= &\frac{1}{|A^{\perp}|} \sum_{\seq{x}_2 \in \field{q}^n}
 1\{\seq{x}_2 \in A^{\perp}\} \prod_{i=1}^n
 u_{\pi_\mathcal{U}(i), x_{2,i}}\\
&\eqvar{(a)} &\frac{1}{|A^{\perp}|} \sum_{\seq{x}_2 \in \field{q}^n}
 \frac{1}{|A|} \sum_{\seq{x}_1 \in A} \chi(\seq{x}_1 \cdot \seq{x}_2)
 \prod_{i=1}^n u_{\pi_\mathcal{U}(i), x_{2,i}}\\
&= &\frac{1}{|A||A^{\perp}|} \sum_{\seq{x}_1 \in A}
 \sum_{\seq{x}_2 \in \field{q}^n} \chi(\seq{x}_1 \cdot \seq{x}_2)
 \prod_{i=1}^n u_{\pi_\mathcal{U}(i), x_{2,i}}\\
&\eqvar{(b)} &\frac{1}{|A||A^{\perp}|} \sum_{\seq{x}_1 \in A}
 \prod_{U \in \mathcal{U}}
 (\seq{u}_U \mat{M})^{|U|P^{U}_{\seq{x}_1}}\\
&= &\frac{1}{|A^{\perp}|}
 \gf_A((\seq{u}_U \mat{M})_{U \in \mathcal{U}})
\end{IEEEeqnarray*}
where (a) follows from Lemma~\ref{le:MacWilliamsIdentities1F} and (b)
follows from Lemma~\ref{le:MacWilliamsIdentities2F}.
\end{proofof}

\begin{proofof}{Theorem \ref{th:MacWilliamsIdentitiesJF}}
Define the sets
\[
Z_1
\eqdef \{(\seq{x}\mat{A}, \seq{x}) \in \field{q}^m \times \field{q}^n:
 \seq{x} \in \field{q}^n\}
\]
and
\[
Z_2
\eqdef \{(\seq{y}, -\seq{y}\transpose{\mat{A}})
 \in \field{q}^m \times \field{q}^n: \seq{y} \in \field{q}^m\}.
\]
Clearly, for any $\seq{z}_1 = (\seq{x}\mat{A}, \seq{x}) \in Z_1$ and
$\seq{z}_2 = (\seq{y}, -\seq{y}\transpose{\mat{A}}) \in Z_2$, we have
\begin{IEEEeqnarray*}{rCl}
\seq{z}_1 \cdot \seq{z}_2
&= &(\seq{x}\mat{A}) \cdot \seq{y} + \seq{x} \cdot (-\seq{y}\transpose{\mat{A}}) \\
&= &(\seq{x}\mat{A}) \transpose{\seq{y}} - \seq{x} \transpose{(\seq{y}\transpose{\mat{A}})} \\
&= &\seq{x}\mat{A}\transpose{\seq{y}} - \seq{x}\mat{A}\transpose{\seq{y}} \\
&= &0
\end{IEEEeqnarray*}
which implies $Z_2 \subseteq Z_1^{\perp}$.
Note that $|Z_1||Z_2| = q^{m+n}$.
This, together with the identity $|Z_1||Z_1^{\perp}| = q^{m+n}$, gives
$Z_2 = Z_1^{\perp}$.

Then it follows from Theorem~\ref{th:MacWilliamsIdentitiesF} that
\begin{IEEEeqnarray*}{rCl}
\gf_{\field{q}^{\mathcal{V}}\field{q}^{\mathcal{U}}}(-g)
 (\seq{v}_{\mathcal{V}}, \seq{u}_{\mathcal{U}})
&= &\gf_{\field{q}^{\mathcal{V}}\field{q}^{\mathcal{U}}}(Z_2)
 (\seq{v}_{\mathcal{V}}, \seq{u}_{\mathcal{U}})\\
&= &\gf_{\field{q}^{\mathcal{V}}\field{q}^{\mathcal{U}}}
 (Z_1^{\perp})(\seq{v}_{\mathcal{V}}, \seq{u}_{\mathcal{U}})\\
&= &\frac{1}{|Z_1^{\perp}|}
 \gf_{\field{q}^{\mathcal{V}}\field{q}^{\mathcal{U}}}(Z_1)
 ((\seq{v}_V \mat{M})_{V \in \mathcal{V}},
 (\seq{u}_U \mat{M})_{U \in \mathcal{U}})\\
&= &\frac{1}{q^m}
 \gf_{\field{q}^{\mathcal{U}}\field{q}^{\mathcal{V}}}(f)
 ((\seq{u}_U \mat{M})_{U \in \mathcal{U}},
 (\seq{v}_V \mat{M})_{V \in \mathcal{V}})
\end{IEEEeqnarray*}
as desired.
\end{proofof}


\section{Spectrum or Complete Weight Distribution?}
\label{sec:SpectrumVsCWD}

As discussed in Section~\ref{sec:BasicsOfCodeSpectrumApproach}, now
that spectrum is simply the normalization of complete weight
distribution, why use it at all?
In the nonrandom setting, these two concepts make indeed no
difference.
If random encoders are involved, however, there is a remarkable
difference.

Let $A$ be a random nonempty subset of $\field{q}^n$.
Then its average spectrum is $\avS(A)$ while its average complete
weight distribution can be expressed as $\av[|A|\spec(A)]$.
Note that the equation $\av[|A|\spec(A)]=\av[|A|]\avS(A)$ does not
hold in general (unless $|A|$ and $\spec(A)$ are uncorrelated, e.g.,
$|A|$ is nonrandom), so there is no simple relation between $\avS(A)$
and $\av[|A|\spec(A)]$.
In fact, for certain random sets we can obtain an elegant exact
formula for the average spectrum, but only a less strict approximate
expression for the average complete weight distribution (for example,
with the assumption that the vectors are not necessarily distinct
\cite{JSCC:Blinovsky200912}).
Similarly, there are some cases in which the complete weight
distribution is more appropriate.

\begin{example}
Consider the random linear encoder $\rlccode{m,n}$ over $\field{q}$.
By \citeGoodLinearCode, its average joint spectrum is
\[
\avS_{\rlccode{m,n}}(P,Q)=
\begin{cases}
q^{-m}1\{Q=P_{0^n}\} &P = P_{0^m} \\
q^{-m-n}{m \choose mP}{n \choose nQ} &\mbox{otherwise},
\end{cases}
\]
so the average image spectrum of $\rlccode{m,n}$, or equivalently, the
average spectrum of $C_1 = \rlccode{m,n}(\field{q}^m)$ is 
\[
\avS_{C_1}(Q)=
\begin{cases}
q^{-m} + q^{-n}(1-q^{-m}) &Q = P_{0^n} \\
q^{-n}(1-q^{-m}){n \choose nQ} &\mbox{otherwise}.
\end{cases}
\]
On the other hand, it follows from
Proposition~\ref{pr:GeneralSpectrumPropertyOfSets} that the average
spectrum of $C_2 = \ker\rlccode{n,m}$ is 
\[
\avS_{C_2}(P)
= {n \choose nP} \av\left[\frac{1\{\seq{x}\in C_2\}}{|C_2|}\right],
\]
where $\seq{x}$ is an arbitrary vector of $\mathcal{T}_P^n$.
Since the expectation term on the right-hand side is too complicated,
we cannot proceed without resorting to approximation.
However, computing instead the average complete weight distribution,
we obtain
\begin{IEEEeqnarray*}{rCl}
\av[|C_2|\spec_{C_2}(P)]
&= &{n \choose nP} \av[1\{\seq{x}\in C_2\}] \\
&= &{n \choose nP} \pr\left\{\rlccode{n,m}(\seq{x}) = 0^m\right\} \\
&= &\begin{cases}
1 &P=P_{0^n}\\
q^{-m}{n \choose nP} &\mbox{otherwise},
\end{cases}
\end{IEEEeqnarray*}
which is surprisingly simple compared to the spectrum form.
A similar situation is encountered when computing the average complete
weight distribution of $C_1$.
In fact, the generator matrix of $\rlccode{n,m}$ is the transpose of
the generator matrix of $\rlccode{m,n}$, so $C_1$ and $C_2$, as a pair
of dual codes, satisfy
\[
\av[|C_2| \gf_{C_2}(\seq{u})] = \avG_{C_1}(\seq{u}\mat{M})
\]
and
\[
\av[|C_1| \gf_{C_1}(\seq{u})] = \avG_{C_2}(\seq{u}\mat{M})
\]
according to Theorem~\ref{th:MacWilliamsIdentitiesF}.
These two identities explain why the average spectrum of $C_1$ and the
average complete weight distribution of $C_2$ are easy to compute
while the other two quantities do not have simple expressions. 
\end{example}

It turns out that most of the cases treated in this paper are more
conveniently formulated in the spectrum form.
For this reason we have chosen the code-spectrum approach.
Furthermore, as a side benefit, the law of serial concatenation of
linear encoders can be intuitively put in analogy with the
concatenation of conditional probability distributions (see
Section~\ref{subsec:ConditionalProbability}).

}

\section*{Acknowledgement}

The authors would like to thank the anonymous reviewers and the
handling editors, M.~Blaum and R.~Fischer, for their very helpful
comments which greatly improved the quality of this paper.

\bibliographystyle{IEEEtran}
\bibliography{IEEEabrv,legs}

\end{document}